\documentclass[journal]{IEEEtran}
\IEEEoverridecommandlockouts

\usepackage{bm,mathtools,amsmath,amsfonts,amssymb,amsthm,bbm,float,hyperref,euscript,setspace,tikz,steinmetz,subcaption,cite,enumitem}

\interdisplaylinepenalty=1000
\usepackage[normalem]{ulem}
\usepackage[mode=buildnew]{standalone}
\usepackage{mathabx} 
\newtheorem{theorem}{Theorem}
\newtheorem{lemma}{Lemma}

\newtheorem{definition}{Definition}
\newtheorem{remark}{Remark}

\usepackage{mleftright}\mleftright

\definecolor{darkgreen}{rgb}{0, 0.5, 0}
\definecolor{darkred}{RGB}{128, 0, 0}


\newcommand{\snr}{\textnormal{SNR}} 

\newcommand{\db}{\textnormal{dB}}
\newcommand{\stkout}[1]{\ifmmode\text{\sout{\ensuremath{#1}}}\else\sout{#1}\fi}

\newcommand{\vc}[1]{\mathbf{#1}} 
\newcommand{\der}{\mathrm{d}} %
\newcommand{\widesim}[2][1.5]{\mathrel{\overset{#2}{\scalebox{#1}[1]{$\sim$}}}}
\newcommand{\norm}[1]{\|#1\|}
\newcommand{\abs}[1]{|#1|}

\newcommand{\ra}{\rightarrow}
\newcommand{\eg}{\emph{e.g.,}}
\newcommand{\ie}{\emph{i.e.,}}
\DeclareMathOperator{\sinc}{sinc}
\DeclareMathOperator{\diag}{diag}

\newcommand{\iid}{i.i.d.}

\newcommand{\eqdef}{\stackrel{\Delta}{=}}

\title{Lower Bound on the Capacity of the Continuous-Space SSFM Model of Optical Fiber}

\author{Milad Sefidgaran and Mansoor Yousefi\thanks{M. Sefidgaran and M. Yousefi are with the Communications and Electronics Department of Télécom Paris (Institut Polytechnique de Paris), Paris, France. E-mails:\{milad.sefidgaran,yousefi\}@telecom-paris.fr.\\This paper was presented in part at ITW 2020.}}

\begin{document}
\maketitle

\begin{abstract}  
The capacity of a discrete-time model of optical fiber described by the split-step Fourier method (SSFM) as a function of  the signal-to-noise ratio $\snr$ and the number of segments in distance $K$ is considered.
It is shown that if $K\geq \snr^{2/3}$ and $\snr \ra\infty$, the capacity of the resulting continuous-space lossless model is lower bounded by 
$\frac{1}{2}\log_2(1+\snr) - \frac{1}{2}+ o(1)$,
where $o(1)$ tends to zero with $\snr$. 
As $K\ra\infty$, the inter-symbol interference (ISI) averages out to zero due to the law of large numbers 
and the SSFM model tends to a diagonal phase noise model.
It follows that, in contrast to the discrete-space model where there is only one signal degree-of-freedom (DoF) at high powers,  
the number of DoFs in the continuous-space model is at least half of the input dimension $n$.
Intensity-modulation and direct detection achieves this rate. 
The pre-log in the lower bound when $K= \sqrt[\delta]{\snr}$ is generally characterized in terms of $\delta$. 

It is shown that if the nonlinearity parameter $\gamma\ra\infty$, the capacity  of the continuous-space model is $\frac{1}{2}\log_2(1+\snr)+ o(1)$.

The SSFM model when the dispersion matrix does not depend on $K$ is considered. 
It is shown that the  capacity of this model when $K= \sqrt[\delta]{\snr}$, $\delta>3$, and $\snr \ra \infty$ is  $\frac{1}{2n}\log_2(1+\snr)+ O(1)$. Thus, there is only one DoF in this model. 

Finally, it is found that the maximum achievable information rates (AIRs) of the SSFM model with back-propagation equalization obtained using numerical simulation follows a double-ascent curve.
The AIR characteristically increases with $\snr$, reaching a peak at a certain optimal power, and then decreases as $\snr$ is further increased. The peak is attributed to a balance between noise and stochastic ISI.  However,  if the power is further increased, the AIR will increase again, approaching the lower bound $\frac{1}{2}\log(1+\snr)- \frac{1}{2} + o(1)$. The second ascent is because the ISI averages out to zero with $K\ra\infty$ sufficiently fast.

\end{abstract}

\begin{IEEEkeywords}
Optical fiber, channel capacity, split-step Fourier method.
\end{IEEEkeywords}


\section{Introduction}
Optical fiber is the medium of choice for high-speed data transmission.  Although general expressions for the capacity of discrete-time point-to-point channels are derived in \cite{Shan49,VerduHan94}, evaluating these expressions for models of optical fiber remains difficult.

Optical fiber is modeled by the stochastic nonlinear Schr\"odinger (NLS) equation. 
There are two effects in the channel that impact the capacity. 
First, nonlinearity transforms additive noise to \emph{phase noise} during the propagation. 
As the amplitude of the input signal tends to infinity, 
the phase of the output signal tends to a uniform random variable in
the zero-dispersion channel \cite[Sec.~IV]{yousefi2011opc}. 
Second, dispersion converts phase noise to amplitude noise introducing a \emph{multiplicative noise}.
The successive application of the phase and multiplicative noise makes signal noise interaction intractable.

The achievable information rates (AIRs) of the wavelength-division multiplexing (WDM) vanish  at high powers due to treating interference (arising from the application of the linear multiplexing to the nonlinear channel) as noise
\cite{SplettKurzke93, mitra2000nli,essiambre2010clo, SecondinitFores13}. On the other hand, it is shown that the capacity $\mathcal C(\snr,K)$ of the discrete-time models of optical fiber 
as a function of the signal-to-noise ratio (SNR) and the number of segments in distance $K$ satisfies \cite{yousefi2015cwit,kramer2015upper}
\begin{IEEEeqnarray}{rCl}
\mathcal C(\snr,K) \leq \log_2(1+\text{SNR}).
\label{eq:ub}
\end{IEEEeqnarray}

The problem of finding the capacity has been investigated for the non-dispersive case in \cite{turitsyn2003ico,yousefi2011opc,ReznichenkoTerekhov20,Keykhosravi19}. It is shown that the asymptotic capacity of this channel is $\frac{1}{2}\log_2(\mathcal{P})+o(1)$ \cite{yousefi2011opc}, where $\mathcal P$ is the average input signal power.

The stochastic NLS equation can be discretized using the split-step Fourier method (SSFM). 
The capacity of the discrete-time discrete-space SSFM model of the optical fiber with fixed step size in distance as a function of $\snr$ is studied in \cite{yousefi2016cap}. 
It is shown that this model tends to a linear fading channel as $\snr \ra \infty$,  described by a random matrix $\mathsf{M}_K$. The asymptotic capacity of this model is \cite[Thm.~1]{yousefi2016cap}
\begin{IEEEeqnarray}{c}
 \mathcal{C}(\snr, K){=} 
 \begin{cases}
 \IEEEstrut
  \frac{1}{2n}\log_2\left(\snr \right) + O(1),    & \textnormal{const. loss},\\[1pt]  \frac{1}{n}\log_2\log_2\left(\snr\right) + O(1),& \textnormal{non-const. loss},
 \IEEEstrut
 \end{cases}
 \IEEEeqnarraynumspace
 \label{eq:asym-cap}
\end{IEEEeqnarray}
where $n$ is the dimension of the input vector, and the loss coefficient is considered as a function of frequency. As a result, there is only one signal degrees-of-freedom (DoF) at high powers (signal energy) due to signal-noise interaction. However, the model in \cite{yousefi2016cap} may not describe realistic fiber where 
the distance is continuous.

The capacity of the discrete-time discrete-space SSFM model as a function of $\snr$ and the number of segments in distance $K$ is studied in \cite{kamran2015bound}. It seems that the analysis in \cite{kamran2015bound} suggests that if $K$ tends to infinity sufficiently fast as $K = \sqrt[4]{\snr}$ and $\snr \ra\infty$, 
the capacity is lower bounded by $\frac{1}{8}\log_2(1+\snr)+c$ where $c<\infty$.

In this paper, we consider the SSFM model of optical fiber as a function of $K$ and $\snr$. The contributions of the paper are as follows.

\begin{itemize}[leftmargin=*,wide]
\item[\textit{a})] 
First, we show that when $K\geq \snr ^{2/3}$ and $\snr \ra\infty$, 
the off-diagonal terms in the random matrix $\mathsf{M}_K$ in \cite{yousefi2016cap}, representing the stochastic inter-symbol interference (ISI), tend to zero due to the law of large numbers, and $\mathsf{M}_K$ tends to a diagonal matrix with phase noise.
As a consequence,  the capacity of the lossless continuous-space SSFM model is lower bounded as \begin{IEEEeqnarray}{rCl}
     \mathcal{C}(\snr)&\triangleq &\lim \limits_{K\ra \infty} \mathcal{C}(\snr,K)
      \nonumber\\
     &\geq& \frac{1}{2}\log_2\left(1+ \snr\right)- \frac{1}{2}+o(1), 
     \label{eq:mainThSSFM}
\end{IEEEeqnarray}
where the term $o(1)$ tends to zero with $\snr\ra\infty$. This suggests that, unlike the discrete-space SSFM model where asymptotically there is only one DoF and the capacity is essentially finite (for large $n$), in the continuous-space model the number of DoFs is at least half of the input dimension. In particular, the capacity grows with the input power with pre-log of at least $1/2$. The pre-log in the lower bound when $K=\sqrt[\delta]{\snr}$ is generally characterized in terms of $\delta$.

\item[\textit{b)}] Second, we consider the SSFM model when the nonlinearity parameter $\gamma \ra \infty$. 
It is shown that this channel is a fading channel for any $K$ and $\snr$. As a result, when $K\ra\infty$, the channel simplifies to $n$ independent phase noise channels in the lossless case, with the capacity $\mathcal{C}(\snr)=\frac{1}{2}\log_2(1+\snr)-\frac{1}{2}+o(1)$.

\item[\textit{c)}] Third, we consider the lossless SSFM model in which the dispersion matrix does not depend on $K$. It is shown that when $K=\sqrt[\delta]{\snr}$, $\delta>3$, and $ \snr \ra \infty$, the capacity is  $\frac{1}{2n}\log_2(1+\snr)+O(1)$.  
In this case, there is one DoF asymptotically as in \eqref{eq:asym-cap} \cite{yousefi2016cap}.

\item[\textit{d)}] Finally, we simulate the AIR of the SSFM model with back-propagation equalization. As previously observed, the AIR characteristically increases with $\snr$, reaching a peak at a certain optimal power, and then decreases as $\snr$ is further increased (typically to near zero in WDM). The peak is attributed to a balance between noise and ISI.  However,  if the power is increased further, the AIR will increase again, approaching the $\frac{1}{2}\log(1+\snr)- \frac{1}{2} + o(1)$ lower bound. The second ascent is because the ISI vanishes as 
$K\ra\infty$ sufficiently fast.

\end{itemize}

The paper is organized as follows. 
The notation is introduced in Section~\ref{sec:Notations}. The 
discrete- and continuous-space SSFM models are presented in Section~\ref{sec:model}. 
The main capacity lower bound is presented in Section~\ref{sec:CapacityResults}, which is proved and extended in Sections~\ref{sec:CapacityResultsExt} and \ref{sec:proof}. The results are verified by numerical simulations in Section~\ref{sec:simulations}, and the paper is concluded in Section \ref{sec:conc}.   Appendix~\ref{sec:math}. provides background on a few mathematical concepts.


\section{Notation} \label{sec:Notations}
Real and complex numbers are denoted by $\mathbb{R}$ and $\mathbb{C}$, respectively, with the imaginary unit $j=\sqrt{-1}$. The real and imaginary parts of a complex number $x$ are denoted by $\mathfrak{R}(x)$ and $\mathfrak{I}(x)$, respectively. The magnitude and phase of $x\in\mathbb C$ are denoted by $\abs{x}$ and $\angle{x}$.
The complex conjugate of $x\in\mathbb C$ is $x^*$.
Important scalars are shown with the calligraphic font, \eg\ $\mathcal{P}$ for power, $\mathcal{C}$ for the capacity, and $\mathcal{L}$ for the length of  optical fiber. 
 

Bold letters are used to denote vectors, \eg\ $\vc{x}$. The $p$-norm of a vector $\vc{x} \in \mathbb{C}^n$ is 
\begin{IEEEeqnarray}{c}
    \norm{\vc{x}}_p=\bigl( \abs{x_1}^p +    \abs{x_2}^p + \cdots + \abs{x_n}^p \bigr)^{1/p}.
    \end{IEEEeqnarray}
The Euclidean norm with $p=2$ is $\norm{\vc{x}} \eqdef\norm{\vc{x}}_2$.

The entries of a  sequence of vectors $\vc{x}_i\in\mathbb C^n$, $i=1,2,\ldots$, are indexed with convention
\begin{IEEEeqnarray}{C}
\vc x_{i} = \begin{pmatrix}
x_{i,1}, x_{i,2}, \ldots, x_{i,n}
\end{pmatrix}.
\label{eq:indexing}
\end{IEEEeqnarray}

The $n$-sphere is denoted by $\mathcal{S}^n$. A vector $\vc{x}$ in the spherical coordinate is represented by its norm $\norm{\vc{x}}$ and its direction $\hat{\vc{x}}=\vc{x}/\norm{\vc{x}}$. The spherical coordinate system is introduced in Appendix~\ref{sec:math}. 

Random variables and their realizations are represented by the upper- and lower-case letters respectively. The probability density function (PDF) of a random variable $X$ is denoted by $P_X(x)$. The expected value of a random variable $X$ is denoted by $\mathbb{E}[X]$. The uniform distribution on the interval $[a,b)$ is denoted by $\mathcal{U}(a,b)$. The PDF of a zero-mean circularly-symmetric complex Gaussian random vector with covariance matrix $\mathrm{K}$ is denoted by $\mathcal{N}_{\mathbb{C}}(0,\mathrm{K})$. Equality of random variables $X$ and $Y$ in distribution is written as 
$X\stackrel{d}{=}Y$.

A random variable in $\mathbb C^n$ is said to be absolutely continuous if its PDF is bounded and has at least one finite moment  \cite[Def.~3]{GhourchianGohariAmini17}. Such random variable has an absolutely continuous density with respect to the Lebesgue measure, and its PDF does not include a Dirac delta function.

Let $\mathcal X\subseteq\mathbb R^n$ and $g:\mathcal X\mapsto \mathbb R$.
A sequence of probability distributions  $(\mu_{\mathcal P})_{\mathcal P\in\mathbb{R}^+}$ on $\mathcal X$ with the average cost constraint $\mathbb E g(\vc X) \leq \mathcal P$ is said to escape to infinity
with $\mathcal P $  if \cite[Def.~2.4]{moser2004dbb}
\begin{IEEEeqnarray*}{c}
\lim \limits_{\mathcal{P} \ra \infty} \mu_{\mathcal{P}} \left( \vc{x}\in\mathcal X\colon g(\vc{x}) \leq\mathcal P_0\right)=0,
\end{IEEEeqnarray*}
for any $\mathcal P_0>0$. 

We say a sequence of channels with conditional distributions $(P_a(\vc y|\vc x))_{a\in\mathbb R}$, $\vc x, \vc y\in \mathcal X$, tends to a channel $Q(\vc y|\vc x)$ as $a\ra\infty$ if $\lim_{a\ra\infty}P_a(\vc y| \vc x) = Q(\vc y| \vc x)$ point-wise for all $\vc x$ and $\vc y$. We say a channel $P(\vc y| \vc x)$ tends to a channel $Q(\vc y|\vc x)$ as the  distribution of $\vc X$ escapes to infinity with an average cost $\mathcal P$, if the output of $P$ tends to the output of $Q$ in probability as $\mathcal P\ra\infty$ for any sequence of input distributions that escapes to infinity. When the cost function is $g(\vc x)=\norm{\vc x}_2^2$, roughly speaking this implies that $P( \vc y| \vc x)\ra Q(\vc y| \vc x)$ point-wise, for all $\vc y$ and all $\vc x$ with $\norm{\vc x}_2>c$ for all $c>0$ (except possibly on an input set with zero measure).

A  sequence  of $n$ numbers $x_1,\ldots, x_{n}$ is shown as $(x_{\ell})_{\ell=1}^n$ or $x^n$. The set of integers $\{1,2,\ldots,n\}$ is denoted by $[n]$. A sequence of independent and identically distributed (i.i.d.) random variables drawn from the PDF $P_X(x)$ is presented as $X_{\ell} \widesim{\text{i.i.d.}} P_X(x)$, $\ell=1,2,3,\ldots$.
 
Deterministic matrices are denoted by upper-case letters with mathrm font, \eg\ $\mathrm{D}$, and random matrices are shown by upper-case letters with mathsf font, \eg\ $\mathsf{M}$. The identity matrix of size $n$ is $\mathrm{I}_n$. 

For a sequence of matrices $(A_i)_{i=1}^m$, the product is defined with convention $\prod \limits_{i=1}^m \mathrm{A}_i=\mathrm{A}_m  \mathrm{A}_{m-1} \cdots  \mathrm{A}_1$.  A diagonal matrix $\mathsf{R}$ with
diagonal entries $R_i$ is denoted by $\mathsf R=\diag( R_1,\ldots, R_n)$. The following diagonal matrix is used throughout the paper
\begin{IEEEeqnarray}{c}
\mathsf{R}(\boldsymbol{\theta})\triangleq
\diag \Bigl( \bigl(\exp(j\theta_{\ell})\bigr)_{{\ell}=1}^n \Bigr), \label{def:diagMat}
\end{IEEEeqnarray}
where $\theta_{\ell}\widesim{\text{i.i.d.}} \mathcal{U}(0,2\pi)$.

The group of complex-valued $n\times n$ unitary matrices  is denoted by $\mathbb{U}_n$. Some properties  of $\mathbb{U}_n$  are reviewed in Appendix~\ref{sec:math}.

Suppose that $\nu$ is an equivalence relation on the set $\mathcal{I}_n=\{1,2,\ldots,n\}$, partitioning it into non-empty equivalence classes $\mathcal{I}_1,\mathcal{I}_2,\ldots,\mathcal{I}_{m(\nu)}$.
The notation $\mathbb{U}_n(\nu)$ is used to denote the group of block diagonal unitary matrices $\mathrm{A}$ in which if the integers $r$ and $s$ do not belong to one class, then $\mathrm{A}_{r,s}=0$. 

Given two functions $f(x)\colon \mathbb{R}\rightarrow \mathbb{C}$ and $g(x)\colon\mathbb{R}\rightarrow \mathbb{C}$, we say $f(x)=O\left(g(x)\right)$, if there exists a finite  $c>0$ and $x_0>0$ such that $|f(x)|\leq c\,|g(x)|$, for all $x \geq x_0$.  In addition, $f(x)=o\left(g(x)\right)$ if for any  $c>0$ there exists a finite  $x_0>0$ such that  $|f(x)|\leq c \,|g(x)|$, for all $x \geq x_0$. 

For a sequence of scalar random variables $\left(X_k\right)_{k=1}^{\infty}$ and constants $\left(a_k\right)_{k=1}^{\infty}$, where $X_k,a_k \in \mathbb{C}$, we say $X_k=O_p(a_k)$, if for any $\epsilon>0$, there exists a finite $c$ and finite $k_0$, such that for any $k \geq k_0$, $P\left(|X_k|/|a_k| > c \right) \leq \epsilon$. Index $p$ in $O_p(\cdot)$ indicates ``in probability''. Similarly, $X_k=o_p(a_k)$ is defined. We write  $X_k=\Omega_p(a_k)$, if  $X_k=O_p(a_k)$ but  $X_k\neq o_p(a_k)$.  For a sequence of random matrices $\left(\mathsf{X}_k\right)_{k=1}^{\infty}$ and constants $\left(a_k\right)_{k=1}^{\infty}$, where $\mathsf{X}_k \in \mathbb{C}^{n,m}$ and $a_k \in \mathbb{C}$, we say $\mathsf X_k=O_p(a_k)$, if  $\norm{\mathsf X_k}_{\infty}=O_p(a_{k})$. Correspondingly, $\mathsf X_k=o_p(a_k)$ is defined.
Finally, for deterministic matrices, $O(\cdot)$ and $o(\cdot)$ are defined similarly.

For a sequence of random matrices $\left(\mathsf{A}_k\right)_{k=1}^{\infty}$, we say $\mathsf{A}_K\ra\mathsf{B}$ in probability with convergence rate $\upsilon>0$, if
\begin{IEEEeqnarray}{c}
\norm{\mathsf{A}_k - \mathsf B}_{\infty} = O_p\left(k^{-\upsilon}\right).
\end{IEEEeqnarray}

We say $f(x)\colon\mathbb{R}\rightarrow \mathbb{R}$ is asymptotically lower bounded by 
$g(x)\colon\mathbb{R}\rightarrow \mathbb{R}$, and write  $f(x) \geq g(x)$, if 
\begin{IEEEeqnarray}{c}
\lim \limits_{x \ra \infty} (f(x)-g(x)) \geq 0.
\label{eq:example}
\end{IEEEeqnarray}

Finally, for a discrete-time channel $\vc X\mapsto \vc Y$, where $\vc X, \vc Y\in \mathbb C^{m}$, with the average power constraint $\mathbb E ||\vc X||^2 \leq m\mathcal P$ and capacity $\mathcal{C}(\mathcal P)$, we say there are $r>0$ complex signal DoFs in the channel if the capacity pre-log is $r/m$, \ie\ $\lim\limits_{\mathcal P\ra\infty}\mathcal{C}(\mathcal P)/\log(\mathcal P)=r/m$.


\section{Split-step Fourier  model} \label{sec:model}

In this section, we consider a modified version of  SSFM introduced in \cite{yousefi2016cap}. Here, the nonlinearity and noise steps are combined into one step, so that the influence of the additive amplified spontaneous emission (ASE) noise can be seen as phase noise. In what follows, SSFM refers to the modified SSFM.


\subsection{Continuous-time model}

Denote the complex envelope of the optical signal at distance $z$ and time $t$ by  $Q(t, z)$. The propagation of the signal in single-mode optical fiber with distributed amplification is governed by the stochastic nonlinear Schr\"odinger (NLS) equation \cite[Eq.~2]{yousefi2016cap}
\begin{IEEEeqnarray}{c}
\frac{\partial{Q}}{\partial{z}}=L_L(Q)+L_N(Q)+N(t,z). \label{eq:schrodinger}
\end{IEEEeqnarray}
Here, $L_L$ is the linear operator 
\begin{IEEEeqnarray}{c}
 L_L(Q) = -j\frac{\beta_2}{2} \frac{\partial^2 Q}{\partial t^2}-\frac{1}{2}\alpha_r \convolution Q(t,z),
\label{eq:L-L}
\end{IEEEeqnarray}
where $\beta_2$ is the second-order chromatic dispersion coefficient,  $\alpha_r(t)$ is the residual attenuation coefficient (remained after imperfect amplification), $\convolution$ is convolution resulting from the dependence of the attenuation coefficient with frequency, and $j=\sqrt{-1}$.
The operator 
\begin{IEEEeqnarray}{rCl}
L_N(Q)=j\gamma \abs{Q}^2 Q
\label{eq:L-N}
\end{IEEEeqnarray}
represents the Kerr nonlinearity, where $\gamma$ is the nonlinearity parameter. 
Finally, $N(t,z)$ is zero-mean circularly-symmetric complex Gaussian noise process with covariance matrix %
\begin{IEEEeqnarray}{c}
\mathbb{E}\left[ N(t,z)N^*(t',z')\right] =\tilde{\sigma}^2 \delta_{\mathcal{B}}(t-t')\delta(z-z'),
\label{eq:N-cor}
\end{IEEEeqnarray}
where $\delta_{\mathcal{B}}(x) =\mathcal{B}\sinc (\mathcal{B}x)$, $\sinc (x)= \sin(\pi x)/(\pi x)$, $\mathcal{B}$ is the noise bandwidth,  
$\delta(\cdot)$ is the Dirac delta function, and $\tilde\sigma$ is the power spectral density of the ASE noise. Denote $\sigma\triangleq\tilde{\sigma}\sqrt{\mathcal{B}}$.

The capacity results obtained in this paper are expected to hold for more general linear and nonlinear operators $L_L$ and $L_N$ that take into account other forms of dispersion and nonlinearity. However, we restrict the analysis to \eqref{eq:L-L} and \eqref{eq:L-N}.


\subsection{Discrete-time SSFM model} \label{sec:ssfmDef}
Discretize a fiber of length $\mathcal{L}$ into $K$ segments of length $\varepsilon=\mathcal{L}/K$ in distance, and $Q(t,\cdot)$ to a vector of length $n$ with step size $\Delta_t$ in time. 
Let $\vc{V}_i\in\mathbb{C}^n$ be the input of the spatial segment $i$, where $\vc{V}_1=\vc{X}$ is the channel input and $\vc{V}_{K+1}=\vc{Y}_K$ is the channel output.

In segment $i$ of the modified SSFM, the following steps are performed \cite{yousefi2016cap}.

\begin{itemize}[leftmargin=*,wide]
\item[\textit{a})] \emph{Modified nonlinear step}: In this step  \eqref{eq:schrodinger} is solved analytically with $L_L=0$. Let $\vc{V}_i$  and $\vc{U}_i$ be the input and output in this step, and $M\ra\infty$ a large integer.
The channel $\vc V_i\mapsto \vc U_i$ is memoryless with the input output relation 
\cite[Eq. 5]{yousefi2016cap}
\begin{IEEEeqnarray}{c}
U_{i,\ell}= \left(V_{i,\ell}+W_{i,\ell}(M)\right)  e^{j\Phi_{i,\ell}},
\end{IEEEeqnarray}
where 
$U_{i,\ell}$ and $V_{i,\ell}$ are entries of $\vc U_i$ and $\vc V_i$ defined based on \eqref{eq:indexing}, and  $\bigl(W_{i,\ell}(m)\bigr)_{i,\ell}$ is a sequence of discrete-time Wiener processes in $m$ with auto-correlation function
\begin{IEEEeqnarray}{rCl}
{\sf E}\Bigl\{ W_{i,\ell}(m) W_{i',\ell'}(m') \Bigr\}=\sigma^2 \mu  \: \delta_{ii'}\delta_{\ell\ell'}\min(m,m'),
\IEEEeqnarraynumspace
\end{IEEEeqnarray}
for all $i,i'\in [K]$, $\ell,\ell'\in [n] $ and $m, m'\in[M]$, 
where  $\mu=\varepsilon/M$ and $\delta_{ij}$ is the  Kronecker delta function.  The  nonlinear phase is
\begin{IEEEeqnarray}{c}
    \Phi_{i,\ell} \triangleq \gamma \mu\sum \limits_{r=1}^{M} \Big|V_{i,\ell}+W_{i,\ell}(r) \Big|^2.
    \label{eq:Phi-ij}
\end{IEEEeqnarray}
Denote $\vc{\bar{Z}}_i\triangleq \vc{W}_{i}(M)$. 

\item[\textit{b})] \emph{Linear step}: In this step \eqref{eq:schrodinger} is solved analytically with $L_N=0$ and $N(t,z)=0$. If $\vc{U}_i$ is the input and $\vc{V}_{i+1}$ is the output of the linear step, the map  $\vc{U}_i \mapsto \vc{V}_{i+1}$ is
\begin{IEEEeqnarray*}{c}
\vc{V}_{i+1}=\mathrm{D}_K  \vc{U}_i, 
\end{IEEEeqnarray*}
where $\mathrm{D}_K$ is the (deterministic) dispersion matrix
\begin{IEEEeqnarray}{c}
\mathrm{D}_K=\mathrm{F}^{-1} \diag\left (\left( e^{a_{\ell}+jb_{\ell}}\right)_{\ell=1}^{n}\right)\mathrm{F}, \label{def:DK}
\end{IEEEeqnarray}  
where $\mathrm{F}$ is the discrete Fourier transform (DFT) matrix. Further, $a_{\ell}=-\mathcal{L}\alpha_{\ell}/(2K)$ where $(\alpha_{\ell})_{\ell}$ is a discretization of the loss coefficient $\alpha_r(f)$ in frequency $f$, and 
\begin{IEEEeqnarray}{c}
b_{\ell}=
-\frac{\mathcal{L}  \beta_2 (2\pi)^2 }{2K(n\Delta_t)^2} \times
\begin{cases}
\IEEEstrut
(\ell-1)^2,    & 1 \leq \ell \leq \frac{n}{2},\\
(n-\ell+1)^2 , & \frac{n}{2}+1 \leq \ell \leq n.
\IEEEstrut
\end{cases} \label{def:dispersions}
\end{IEEEeqnarray}%
\end{itemize}

\begin{remark}
The input dimension $n$ is fixed, and should not be confused with the block or codeword length that tends to $\infty$.
\end{remark}

\subsection{Transition from the continuous- to discrete-time model}
The NLS equation  \eqref{eq:schrodinger} defines a continuous-time channel from $Q(t,0)$ at the input of the fiber to $Q(t, \mathcal L)$ at the output of the fiber. Let $\mathcal{B}(z)$ denote the signal bandwidth at distance $z$.
To discretize the channel, we need to sample the input signal at $\mathcal{B}(0)$ and the output signal at $\mathcal{B}(\mathcal L)$. 
Due to Kerr nonlinearity, $\mathcal{B}(\mathcal L)$ is generally signal-dependent and may not equal to $\mathcal{B}(0)$. The relation between $\mathcal{B}(\mathcal L)$ and $\mathcal{B}(0)$ is an important open question. 

As a consequence, the continuous-time model \eqref{eq:schrodinger}  cannot be  discretized in a one-to-one manner as in linear channels by sampling the input output signals at the input bandwidth. In this paper, we do not include channel filters or a bandwidth constraint in the model, and let  $\mathcal{B} \eqdef \mathcal{B}(0)\ra\infty$.
The derivative operator in time can then be approximated using the discrete Fourier transform with an error that tends to zero as the step size in time $\Delta_t=1/\mathcal{B}\ra 0$. If $\snr =K^{\delta}$ and $\delta<1$, as $K\ra\infty$  operator splitting in distance yields a discretization of \eqref{eq:schrodinger} with a vanishing error as 
$\mathcal{B}\ra\infty$ and $K\ra\infty$. Finally, we consider a potentially sub-optimal discretization where the output signal is sampled at $\mathcal{B}$. This corresponds to a receiver that  ignores some of the samples that potentially carry information, and results in a discrete-time model in which the input output vectors have the same dimension.
Lower bounds obtained for this potentially sub-optimal receiver hold for better receivers as well.

\subsection{Limitations of the discrete-time model}
\label{sec:limitations}

The discrete-time model considered in this paper has a number of limitations. 
First, it considers signals with infinite bandwidth and does not account for a bandwidth constraint introduced by inline filters or receiver. Second, a receiver that samples the output signal at the input bandwidth ignores potentially useful samples. Third, the spectral efficiency (in bits/s/Hz) of the continuous-time model may not equal to the capacity (in bits/s) of  the discrete-time model. This is because the spectral broadening factor may 
increase with launch power due to nonlinearity \cite[Sec. VIII]{yousefi2011opc}, \cite{kramer2018autocorrelation}. 

In this paper, we do not derive a rigorous one-to-one discretization of the continuous-time model \eqref{eq:schrodinger}.
We consider the discrete-time SSFM model, whose capacity may be different from that of \eqref{eq:schrodinger}.

\section{A Capacity Lower Bound}
\label{sec:CapacityResults}

The capacity of the SSFM model as a function of the signal-to-noise ratio $\snr=\mathcal{P}/(\sigma^2 \mathcal{L})$, where  $\mathcal{P}$ is the average signal power and $\sigma^2 \mathcal{L}$ is total noise power, and the
number of spatial segments $K$ is
\begin{IEEEeqnarray}{c}
    \mathcal{C}(\snr, K){=}\max\limits_{p_{\vc X}(\vc x)} \Bigl\{
        \frac{1}{n}I(\vc{X};\vc{Y}_K): \:\frac{1}{n}\mathbb{E}\left[\norm{\vc{X}}^2 \right]\leq\mathcal{P} \Bigr\},
        \:\label{def:SSFMCapacity}
\IEEEeqnarraynumspace
\end{IEEEeqnarray}
where $\vc{X}\in\mathbb C^n$ and $\vc{Y}_K\in\mathbb C^n$ are the channel input and output, and 
$I(\vc{X};\vc{Y}_K)$ is the mutual information measured in bits/2D.

The capacity of the SSFM model when $K\ra\infty$ independently of \snr\ is
\begin{IEEEeqnarray}{c}
\mathcal{C}(\snr) \eqdef \lim\limits_{K\ra\infty}  \mathcal{C}(\snr, K).
\IEEEeqnarraynumspace
\end{IEEEeqnarray}
In this case, the asymptotic capacity corresponds to limits $\lim_{\snr\ra\infty}\lim_{K\ra\infty}$ with that order.
We also study the capacity when  $K$ and \snr\ go to infinity as $K=\sqrt[\delta]{\snr}$, $\snr\ra\infty$, in which case the capacity is  $\mathcal{C}(\snr, \sqrt[\delta]{\snr})$.

\begin{remark}
Since noise power is fixed by the channel, we express the capacity as a function of \snr\ instead of launch 
power. However, this should not imply that the capacity of the nonlinear channel, which is a two-dimensional function of signal and noise powers, is a one-dimensional function of \snr\ \cite[Sec. VII]{yousefi2011opc}.
\end{remark}

The main result of this paper is Theorem~\ref{th:mainThSSFM} stating that for sufficiently large number 
of segments $K$, rate $\frac{1}{2}\log_2(\snr)-\frac{1}{2}$ is achievable at high SNRs in the continuous-space lossless model.

\begin{figure}[t]
\centering
\includestandalone[scale=0.45]{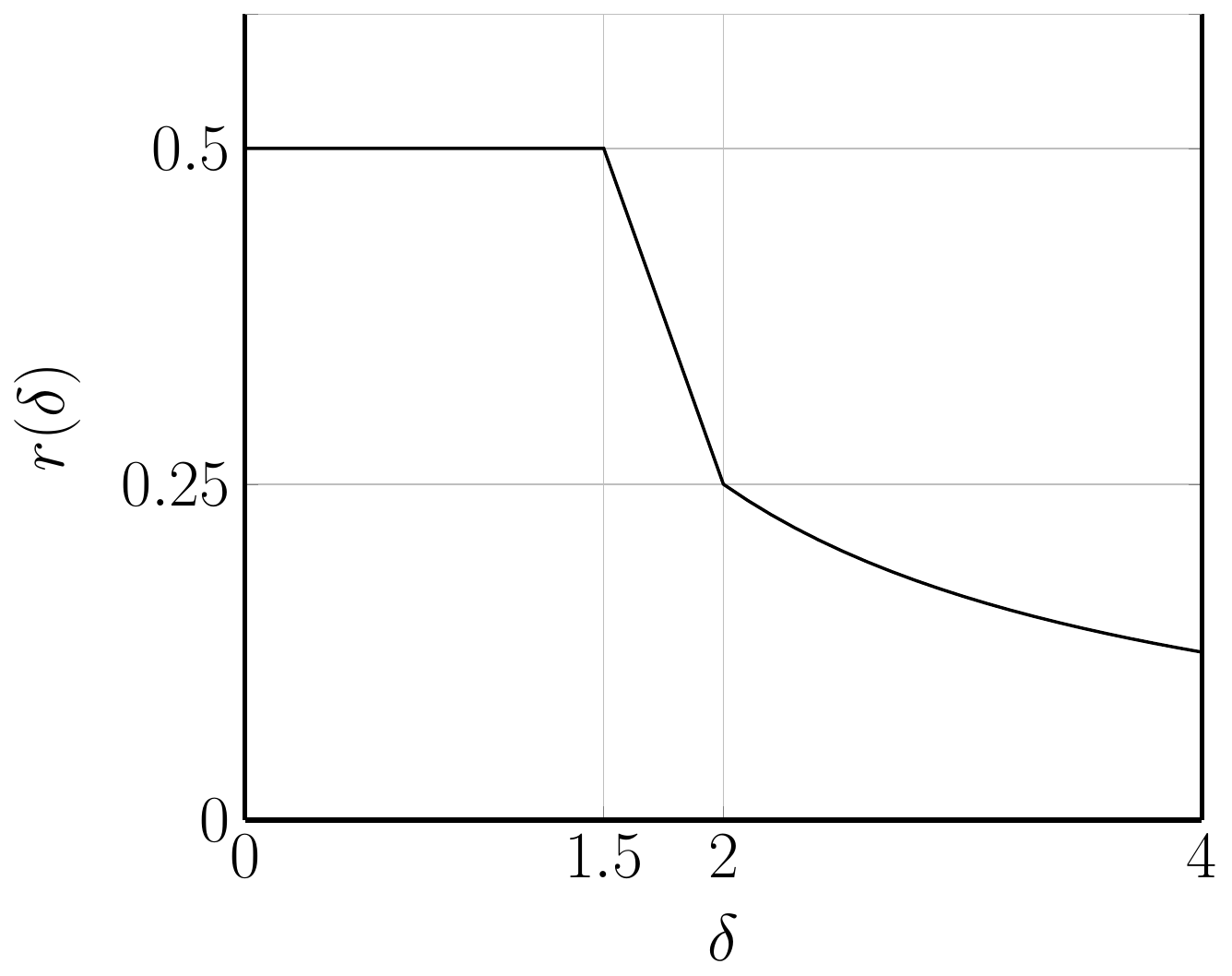}
\caption{Pre-log in the capacity lower bound as a function of the growth rate of the number of segments with power.}
 \label{fig:preLog}
\end{figure}

\begin{theorem} \label{th:mainThSSFM} 
The capacity of the SSFM channel when $K\ra\infty$ independently of \snr\ is lower bounded as 
\begin{IEEEeqnarray}{c}
     \mathcal{C}(\snr)\geq \frac{1}{2}\log_2\left(1+ \snr\right) + \frac{1}{2}\log_2 a+o(1),
     \label{eq:bound-a}
\end{IEEEeqnarray}
where the term $o(1)$ tends to zero with  $\snr\ra\infty$. For the lossless fiber, $a=1/2$.

The capacity of the SSFM channel when $K= \sqrt[\delta]{\snr}$, $\delta>0$, $\gamma\neq 0$, satisfies
\begin{IEEEeqnarray}{rCl}
\mathcal{C}\left(\snr,\sqrt[\delta]{\snr}\right) \geq r(\delta)\log_2(1+a\,\snr)+O(1), 
\quad\label{eq:capacity-a}
\end{IEEEeqnarray}
where $O(1)$ is bounded as $\snr \ra\infty$ and   the pre-log $r(\delta)$ is 
\begin{IEEEeqnarray}{rCl}
r(\delta) = 
\begin{cases}
\IEEEstrut
\frac{1}{2},&  0 < \delta \leq \frac{3}{2}, \\
\frac{3-\delta}{2\delta}, &  \frac{3}{2} \leq \delta \leq 2,\\
\frac{1}{2\delta}, &  2 \leq \delta<\infty.
\IEEEstrut
\end{cases}
\label{eq:r-theta}
\end{IEEEeqnarray}
Here, $a = \zeta e^{2\zeta } /(e^{2\zeta}-1)$, in which $\zeta=-\bar{\alpha}\mathcal{L}/2$, where $\bar{\alpha}$ is the average fiber loss in frequency. If $\delta\leq 3/2$, then $O(\cdot)$ can be replaced with $o(\cdot)$ in \eqref{eq:capacity-a}.
\end{theorem}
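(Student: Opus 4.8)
The plan is to show that, in the high-$\snr$ regime where the channel becomes the linear fading channel $\vc Y_K = \mathsf M_K\vc X + \vc N$ of \cite{yousefi2016cap}, the random matrix $\mathsf M_K$ converges to the diagonal phase-noise matrix $\mathsf R(\boldsymbol\theta)$ of \eqref{def:diagMat} as $K\ra\infty$, so that the channel decouples into $n$ independent scalar phase-noise channels whose achievable rate I then evaluate. First I would write $\mathsf M_K = \prod_{i=1}^K \mathrm D_K\,\mathsf R(\boldsymbol\Phi_i)$ with the convention of \cite{yousefi2016cap}, where $\mathsf R(\boldsymbol\Phi_i)=\diag\bigl((e^{j\Phi_{i,\ell}})_{\ell=1}^n\bigr)$ collects the per-segment nonlinear phases \eqref{eq:Phi-ij} and $\mathrm D_K$ is the dispersion matrix \eqref{def:DK}. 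Since $a_\ell,b_\ell=O(1/K)$, I would expand $\mathrm D_K=\mathrm I_n+\mathrm E_K$ with $\mathrm E_K$ of entrywise size $O(1/K)$, and group the terms of $\mathsf M_K$ by the number of factors of $\mathrm E_K$: the zeroth-order term is the purely diagonal $\prod_i\mathsf R(\boldsymbol\Phi_i)=\mathsf R(\sum_i\boldsymbol\Phi_i)$, while every off-diagonal contribution carries at least one factor of $\mathrm E_K$ sandwiched between phase rotations.

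The heart of the argument is to bound these off-diagonal contributions. Each first-order off-diagonal entry has the form $\sum_{i=1}^K (\mathrm E_K)_{\ell,s}\,e^{j\psi_{i,\ell,s}}$, a sum of $K$ terms of magnitude $O(1/K)$ whose phases $\psi_{i,\ell,s}$ are differences of accumulated nonlinear phases. I would argue that at high $\snr$ these phases are effectively randomized modulo $2\pi$, so the sum concentrates to zero and $\norm{\mathsf M_K-\mathsf R(\boldsymbol\theta)}_{\infty} = O_p(K^{-\upsilon})$ for a suitable rate $\upsilon>0$ in the sense of Section~\ref{sec:Notations}. Separately, I would show that the diagonal phases $\theta_\ell=\sum_i\Phi_{i,\ell}$ carry a total nonlinear phase $\sim\gamma\mathcal P\mathcal L$, which grows without bound as the input distribution escapes to infinity, so $\theta_\ell\ra\mathcal U(0,2\pi)$ and $\mathsf M_K\ra\mathsf R(\boldsymbol\theta)$. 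The channel then tends to $n$ independent phase-noise channels $Y_\ell=X_\ell e^{j\theta_\ell}+N_\ell$ with $\theta_\ell\widesim{\text{i.i.d.}}\mathcal U(0,2\pi)$.

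On the decoupled channel the phase of $Y_\ell$ is uninformative, so I would use intensity modulation and direct detection: take $\abs{X_\ell}$ half-Gaussian (Rayleigh modulus, uniform phase) and detect $\abs{Y_\ell}$. At high $\snr$, $\abs{Y_\ell}\approx\abs{X_\ell}+N'_\ell$, where $N'_\ell$ is the radial projection of $N_\ell$ and is Gaussian with variance equal to half the per-component noise power. Writing the received signal power $\mathcal P_{\mathrm{out}}$ and accumulated noise power $N_{\mathrm{out}}$ as integrals of the distributed signal decay $e^{-\bar\alpha z}$ and continuous noise addition over $z\in[0,\mathcal L]$, a short computation gives $\mathcal P_{\mathrm{out}}/N_{\mathrm{out}}=2a\,\snr$ with $a=\zeta e^{2\zeta}/(e^{2\zeta}-1)$, $\zeta=-\bar\alpha\mathcal L/2$; for the lossless fiber $\zeta\ra0$ and $a\ra\tfrac12$. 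Evaluating $I=h(\abs{Y_\ell})-h(N'_\ell)$ with the half-Gaussian input yields $\tfrac12\log_2\bigl(\mathcal P_{\mathrm{out}}/(2N_{\mathrm{out}})\bigr)=\tfrac12\log_2(a\,\snr)$, which matches \eqref{eq:bound-a}.

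Finally, for $K=\sqrt[\delta]{\snr}$ the convergence is incomplete and the residual off-diagonal ISI survives. I would treat it as additional noise on each amplitude channel, so the effective noise is the sum of the additive noise (power $\sim\sigma^2\mathcal L$) and the ISI power (which scales with $\mathcal P$ times a power of $K$), and evaluate $\tfrac12\log_2$ of the resulting signal-to-interference-plus-noise ratio; its scaling with $\snr$ gives the pre-log $r(\delta)$. Substituting $K=\snr^{1/\delta}$ and identifying the dominant term yields the three regimes of \eqref{eq:r-theta}, with the breakpoints $\delta=3/2$ and $\delta=2$ marking the crossovers from the noise-limited regime (full pre-log $1/2$) to successively more ISI-limited regimes. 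The main obstacle throughout is the rigorous control of the off-diagonal entries of $\mathsf M_K$ at the correct rate: the phases $\Phi_{i,\ell}$ are not independent across segments—each is a deterministic function of the signal propagated through all previous segments—so the concentration step must be carried out for a dependent, signal-driven phase process, and it is the precise dependence of this convergence rate on both $K$ and $\snr$ that ultimately pins down the exact pre-log and the locations of the breakpoints.
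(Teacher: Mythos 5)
Your outline reproduces the paper's high-level strategy (high-power fading limit, diagonalization of $\mathsf{M}_K$ as $K\ra\infty$, reduction to non-coherent phase-noise channels, residual ISI controlling the pre-log), and your lossless/lossy bookkeeping giving $a=\zeta e^{2\zeta}/(e^{2\zeta}-1)$ and the constant $-\tfrac12$ is consistent with the paper's use of the Lapidoth result \eqref{eq:capcityPhaseNoise}. However, there is a genuine gap: the entire quantitative content of \eqref{eq:r-theta} — the three regimes and the breakpoints $\delta=3/2$ and $\delta=2$ — comes from the explicit convergence rate $\upsilon(\delta)$ of \eqref{def:upsilonDelta}, and you never derive it; you posit ``a suitable rate $\upsilon>0$'' and assert that ``identifying the dominant term yields the three regimes.'' That derivation is Lemma~\ref{lem:mainThSSFM-ex}, proved by induction over segments with separate control of signal–noise mixing (dominant for $\delta\leq 3/2$) and intra-channel interaction (dominant for $3/2\leq\delta\leq 3$). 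Moreover, your proposed mechanism for the concentration of the off-diagonal entries — that the phases are ``effectively randomized modulo $2\pi$'' at high $\snr$ — is wrong precisely in the most important regime. For $\delta<1$ the per-segment phase difference is $K^{\delta-1}Q$ with $Q=\gamma\mathcal{L}K^{-\delta}\bigl(|X_1|^2-|X_2|^2\bigr)$, which tends to zero; conditioned on the input, the accumulated phases form a nearly deterministic arithmetic progression, and the off-diagonal sum is a geometric series whose cancellation is quantified by $1/\bigl|e^{jK^{\delta-1}Q}-1\bigr|=O_p\bigl(K^{1-\delta}\bigr)$ (cf.\ \eqref{eq:oscillation}) — cancellation by a signal-dependent rotation, not randomization. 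Uniformization of phases only enters through noise accumulation across blocks of length $\sim K^{1-\delta/3}$ (whence the breakpoint $3/2$), and directly only when $\delta\geq 2$; this block structure, together with a central-limit argument across blocks, is what produces $\upsilon(\delta)$ and hence $r(\delta)$.

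The second weak step is your evaluation of the rate when convergence is incomplete: treating the residual ISI as additional noise and computing $\tfrac12\log_2$ of an SINR is not a rigorous lower bound as stated, because the interference $\Delta\vc{X}$ is signal-dependent and non-Gaussian, so the worst-case-noise reasoning does not apply directly. The paper instead bounds $\tfrac1n I(\vc{X};\vc{Y}_K)=\tfrac1n\bigl(h(\vc{Y}_K)-h(\vc{Y}_K|\vc{X})\bigr)$ with a Gaussian input: the output entropy is computed as in \eqref{eq:outputEntropy}, and the conditional entropy is upper bounded by a maximum-entropy argument applied to the conditional law, where given $\vc{X}=\vc{x}$ the interference collapses to $\norm{\vc{x}}_4\,T_\ell/\sqrt{K}$ with $T_\ell$ of bounded variance (proof of Theorem~\ref{th:fadingHighPowerdG1}, reused in the $\upsilon(\delta)<\delta/2$ branch of the proof of Theorem~\ref{th:mainThSSFM}); it is this entropy calculation that yields the $O(1)$ term and the $\tfrac{1}{2\delta}\log_2 K$ scaling. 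Two further technical ingredients you would also need are the interchange of limits justified by continuity of mutual information at $K\ra\infty$ (Lemma~\ref{lemm:continuity}), and care with the fact that the diagonal phases $\sum_i\Phi_{i,\ell}$ are deterministic functions of $|X_\ell|$ given the input (the paper handles this by the output rotation $\vc{Y}'_K=\mathsf{R}(\boldsymbol{\theta}')\vc{Y}_K$ and data processing; your amplitude-only detection happens to sidestep it, but your claim that $\theta_\ell\ra\mathcal{U}(0,2\pi)$ ``independent'' of the rest is not correct conditionally on the input).
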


In practice the launch power is finite  and $K$ can be chosen to be arbitrarily large. 
Theorem \ref{th:mainThSSFM} indicates that rates given by \eqref{eq:bound-a} are achievable. 
In fiber-optic simulations based on SSFM,
choosing sufficiently large number of segments, the
channel capacity (bits/s) is between the lower bound in \eqref{eq:bound-a} and  upper bound \eqref{eq:ub}.

Theorem~\ref{th:mainThSSFM} indicates that the number of signal DoFs is at least half of the input dimension in the continuous-space model. In contrast, 
there is only one signal DoF in the discrete-space SSFM model where $\mathrm D_K \eqdef \mathrm D$ is independent of $K$.
The lower bound \eqref{eq:mainThSSFM} can be compared with the asymptotic capacity of the discrete-space SSFM model \eqref{eq:asym-cap}, where there is only one DoF, and the pre-log is $1/2n$ in the lossless model. 

The pre-log $1/2\delta\leq 1/8$ for $\delta \geq 4$ (part of the third branch in \eqref{eq:r-theta}) can be obtained from
\cite{kamran2015bound} as well. The maximum pre-log is however $1/2$.


\section{Proof of Theorem~\ref{th:mainThSSFM} and Related Capacity Theorems}
\label{sec:CapacityResultsExt}

In this section, we outline the proof of  Theorem~\ref{th:mainThSSFM}. In addition, we present a number of capacity theorems for models
related to the SSFM. Most of the proofs appear in Section~\ref{sec:proof}.

It is shown in \cite[Sec. V]{yousefi2016cap} that the SSFM channel when $K$ is fixed and the input distribution escapes to infinity with the average input power  tends to a linear fading channel.
To study the capacity of the SSFM channel when $K\ra\infty$, we first study this fading model in Section~\ref{sec:fading-K=infinity}, obtained by replacing nonlinearity with uniform \iid\ phase noise in segments of SSFM.  We show that if $K\ra\infty$, this fading channel tends to a diagonal phase noise channel in probability with capacity pre-log $1/2$.

Later in Section~\ref{sec:ssfm} we show that the limit of the  SSFM channel when the input distribution escapes to infinity with $\snr$ and there are  sufficiently large number of segments is  a continuous-space fading channel. 
Combining these two results, we obtain the pre-log $1/2$ in Theorem~\ref{th:mainThSSFM}. The cases where pre-log is less than $1/2$ are obtained similarly, with further analysis.

Note that, since we consider a continuous-space SSFM model, the proof in \cite{yousefi2016cap} showing that the SSFM channel tends to a fading channel with $\snr\ra\infty$ cannot be used here, because $\mathrm{D}_k$ does not depend on $K$ in the discrete-space model in \cite{yousefi2016cap}.


\subsection{Capacity of the continuous-space fading model} 
\label{sec:fading-K=infinity}

We begin by considering the following \emph{fading channel}, introduced in \cite{yousefi2016cap}:

\begin{IEEEeqnarray}{c}
 \mathbf{Y}_K=\mathsf{M}_K  \mathbf{X}+\mathbf{Z}_K, \label{def:fadingChannel}
\end{IEEEeqnarray}
where the random matrix $\mathsf{M}_K$ (independent of input), representing a multiplicative noise, is independent of $\vc X$ and given by \cite[Eq. 11]{yousefi2016cap}
\begin{IEEEeqnarray}{c}
\mathsf{M}_K= \prod \limits_{i=1}^{K} \left( \mathrm{D}_K  \mathsf{R}(\boldsymbol{\theta}_i)\right), \label{def:matrixMk}
\end{IEEEeqnarray}
where $\mathsf R(\cdot)$ is defined in \eqref{def:diagMat} and $\boldsymbol{\theta_i} \widesim{\text{i.i.d.}} \mathcal{U}^n(0,2\pi)$. The expression for the additive noise  is 
\begin{IEEEeqnarray}{c}
\vc{Z}_K= \sum \limits_{i=1}^K \left(\prod \limits_{r=i}^{K} \left( \mathrm{D}_K  \mathsf{R}(\boldsymbol{\theta}_r)\right)\right) \vc{\bar{Z}}_i,
\label{def:Noisek}
\end{IEEEeqnarray}
where 
$ \bar{\vc{Z}}_i \sim \mathcal{N}_{\mathbb{C}}\left(0, \sigma^2 \varepsilon\mathrm{I}_n\right)$.
In the lossless case, \eqref{def:Noisek} is simplified and $ \vc{Z}_K \sim \mathcal{N}_{\mathbb{C}}\left(0, \sigma^2 \mathcal{L}\mathrm{I}_n\right)$. In this paper, a fading channel is any model of the form \eqref{def:fadingChannel}, with multiplicative and additive noise, where $(\mathsf{M}_K, \mathbf{Z}_K)$ has arbitrary distribution and is independent of $\vc X$.


A special case of the fading channel \eqref{def:fadingChannel} is the non-coherent memoryless phase noise channel \cite{Lapidoth02}
\begin{IEEEeqnarray}{c}
\label{def:phaseNoiseChannel}
    Y_{\ell}=e^{j\theta_{\ell}} X_{\ell}+Z_{\ell},\quad {\ell}=1,2,\ldots, n,
\end{IEEEeqnarray}
where $\theta_{\ell} \sim \mathcal{U}(0,2\pi)$ and  $Z_{\ell} \sim \mathcal{N}_{\mathbb{C}}\left(0,\sigma_0^2 \right)$. The capacity of this channel, denoted by $\mathcal C_{P}$, is \cite[Eq.~23]{Lapidoth02}:
\begin{IEEEeqnarray}{c}
     \mathcal{C}_{P}(\snr)=\frac{1}{2}\log_2\left(1+\frac{1}{2}\snr\right) +o(1), \label{eq:capcityPhaseNoise}
\end{IEEEeqnarray}
where $\snr=\mathcal{P}/\sigma_0^2$ and the term $o(1)$ tends to zero with $\snr\ra\infty$. 

\begin{definition}
For a dispersion matrix of the form \eqref{def:DK}, we define the ``total dispersion values''  $d_{\ell}=Kb_{\ell} $, $\ell \in [n]$. We say dispersion is finite if  $b_{\ell}$ is  given by \eqref{def:dispersions}, for which $d_{\ell}<\infty$ independently of $K$, and infinite if $b_{\ell}$ does not depend on $K$, for which $d_{\ell} \rightarrow \infty$ as $K\rightarrow\infty$, $b_{\ell}\neq 0$. Similarly, we define the ``total loss values'' $\zeta_{\ell}=Ka_{\ell}$, $\ell \in [n]$.
\end{definition}

Denote the average values of the total loss and dispersion in frequency by
\begin{IEEEeqnarray}{c}
    \zeta \triangleq \frac{1}{n} \sum \limits_{\ell=1}^n \zeta_{\ell},\quad 
    d \triangleq \frac{1}{n} \sum \limits_{\ell=1}^n d_{\ell}.
    \label{def:averageLoss}
\end{IEEEeqnarray}
Alternatively, $\zeta =-\bar{\alpha}\mathcal L/2$, where $\bar\alpha \triangleq (1/n) \sum_{\ell=1}^n \alpha_{\ell}$ is the average
fiber loss.
If $n\ra\infty$, $\bar{\alpha} =\lim_{F\ra\infty}\frac{1}{F}\int_{-F/2}^{F/2}\alpha_r(f) df$.

For realistic fiber, $b_{\ell}$ is given by \eqref{def:dispersions}, and dispersion is finite due to factor $1/K$ 
in \eqref{def:dispersions}. In this case, the effect of dispersion locally in a small segment is infinitesimal, and $\lim_{K\ra\infty}\mathrm D_K= I_n$.
However, we also consider models where $b_{\ell}$ is independent of $K$; here the dispersion matrix in each small segment is  fixed $\mathrm D_K \eqdef \mathrm D$, and the  total dispersion value is infinite for $K\ra\infty$. 
The models with infinite dispersion or nonlinearity are not realistic, but help to understand the capacity as $\beta_2$ or $\gamma$ tend to infinity. They distinguish  models with fixed and variable $D_K$.


Below, we lower bound the capacity of the fading channel \eqref{def:fadingChannel} with finite and infinite dispersion.

\subsubsection{Finite Dispersion} 

A key result of this paper is Lemma~\ref{lem:asymDistK} stating that as $K\ra\infty$, the random matrix $\mathsf{M}_K$ in \eqref{def:fadingChannel} tends to a diagonal matrix with independent phase noise components. 

For a random vector $\boldsymbol{\theta}$, define
\begin{IEEEeqnarray}{c}
\mathsf{L}(\boldsymbol{\theta})\triangleq\mathsf{R}(\boldsymbol{\theta}) \mathrm{C}_1 \mathsf{R}(\boldsymbol{\theta})^{-1},
\label{eq:Ltheta}
\end{IEEEeqnarray}
where
$\mathrm{C}_1  \triangleq  \mathrm{F}^{-1}  
\diag\bigl( \left(\zeta_{\ell} + jd_{\ell}\right)_{\ell=1}^n\bigr)
\mathrm{F}$ and $\bar{\mathsf{L}}\triangleq \mathbb{E}_{\boldsymbol{\theta}} \left[\mathsf{L}(\boldsymbol{\theta}) \right]$.

\begin{lemma} 
\label{lem:asymDistK}  
The random matrix $\mathsf{M}_K$ has the expansion 
\begin{IEEEeqnarray}{rcl}
 \mathsf{M}_K &\stackrel{d}{=}&\left(e^{\zeta+jd}\mathrm{I}_n+\frac{1}{K}\sum \limits_{i=1}^{K}\mathsf{L}(\boldsymbol{\theta}_i)-\bar{\mathsf{L}}\right)\mathsf{R}\left(\boldsymbol{\theta}\right)+O_p\left(\frac{1}{K}\right),\nonumber\\
   &\stackrel{d}{=}&e^{\zeta}\mathsf{R}\left(\boldsymbol{\theta}\right)+O_p\left(\frac{1}{\sqrt{K}}\right),\label{eq:Mk-Complete}
\end{IEEEeqnarray}
where  $\boldsymbol{\theta}$ and $\boldsymbol{\theta}_i$ are drawn i.i.d. from $\mathcal{U}^n(0,2\pi)$.
In particular,
\begin{IEEEeqnarray}{c}
\lim \limits_{K \rightarrow   \infty} \mathsf{M}_{K}
\stackrel{d}{=}
e^{\zeta} \mathsf{R}(\boldsymbol{\theta}).
\label{eq:lim-Mk}
\end{IEEEeqnarray}
\end{lemma}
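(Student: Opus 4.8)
The plan is to expand each factor of the product \eqref{def:matrixMk} to first order in $1/K$, gather the $K$ independent phase matrices into a single uniform phase acting on the right, and then show that the residual stochastic interference averages out by a second-moment estimate. First I would write the dispersion matrix as a matrix exponential. In the finite-dispersion regime $a_\ell+jb_\ell=(\zeta_\ell+jd_\ell)/K$, so by \eqref{def:DK} the matrix $\mathrm{D}_K$ is diagonalized by the DFT with eigenvalues $e^{(\zeta_\ell+jd_\ell)/K}$, whence $\mathrm{D}_K=\exp(\mathrm{C}_1/K)=\mathrm{I}_n+\tfrac{1}{K}\mathrm{C}_1+O(1/K^2)$ with a deterministic remainder.

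Next I would pull every phase matrix to the right. Setting $\boldsymbol{\sigma}_i\triangleq\sum_{r=i+1}^{K}\boldsymbol{\theta}_r$ and $\boldsymbol{\phi}_K\triangleq\sum_{r=1}^{K}\boldsymbol{\theta}_r$, a telescoping rearrangement of \eqref{def:matrixMk} (repeatedly using $\mathsf{R}(\boldsymbol{\psi})\mathrm{D}_K=(\mathsf{R}(\boldsymbol{\psi})\mathrm{D}_K\mathsf{R}(\boldsymbol{\psi})^{-1})\mathsf{R}(\boldsymbol{\psi})$ and additivity of diagonal phases) yields $\mathsf{M}_K=\bigl(\prod_{i=K}^{1}\mathsf{R}(\boldsymbol{\sigma}_i)\mathrm{D}_K\mathsf{R}(\boldsymbol{\sigma}_i)^{-1}\bigr)\mathsf{R}(\boldsymbol{\phi}_K)$. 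Since conjugation by a unitary preserves the $O(1/K^2)$ bound, each factor becomes $\mathrm{I}_n+\tfrac{1}{K}\mathsf{L}(\boldsymbol{\sigma}_i)+O(1/K^2)$ with $\mathsf{L}$ as in \eqref{eq:Ltheta}, while $\boldsymbol{\phi}_K$ is a sum of i.i.d.\ uniforms and is therefore itself uniform, so $\mathsf{R}(\boldsymbol{\phi}_K)\stackrel{d}{=}\mathsf{R}(\boldsymbol{\theta})$.

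I would then separate the diagonal and off-diagonal parts of $\mathsf{L}$. Because conjugating $\diag(\zeta_\ell+jd_\ell)$ by the DFT produces a circulant matrix whose diagonal entries all equal the average eigenvalue, $(\mathrm{C}_1)_{\ell\ell}=\zeta+jd$ and hence $\bar{\mathsf{L}}=(\zeta+jd)\mathrm{I}_n$, so the diagonal of each $\mathsf{L}(\boldsymbol{\sigma}_i)$ is the deterministic scalar $\zeta+jd$. Factoring this scalar out as $\bigl(1+\tfrac{\zeta+jd}{K}\bigr)^K\to e^{\zeta+jd}$ and expanding the remaining product of $\mathrm{I}_n+\tfrac{1}{K}\tilde{\mathsf{L}}(\boldsymbol{\sigma}_i)$ factors, where $\tilde{\mathsf{L}}\triangleq\mathsf{L}-\bar{\mathsf{L}}$ is purely off-diagonal and mean-zero, the first-order term is $\tfrac{1}{K}\sum_i\tilde{\mathsf{L}}(\boldsymbol{\sigma}_i)$. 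Its $(r,s)$ entry equals $(\mathrm{C}_1)_{rs}\tfrac{1}{K}\sum_i e^{j(\sigma_{i,r}-\sigma_{i,s})}$; since every increment $\theta_{u,r}-\theta_{u,s}$ is uniform, these unit-modulus summands are pairwise uncorrelated, so the entry has variance $O(1/K)$ and the term is $O_p(1/\sqrt K)$. Re-expressing the partial-sum phases by i.i.d.\ $\boldsymbol{\theta}_i$ (marginally uniform, with matching second-order structure) gives exactly $\tfrac{1}{K}\sum_i\mathsf{L}(\boldsymbol{\theta}_i)-\bar{\mathsf{L}}$, i.e.\ the first line of \eqref{eq:Mk-Complete}; absorbing this $O_p(1/\sqrt K)$ fluctuation and using that $e^{jd}\mathsf{R}(\boldsymbol{\theta})\stackrel{d}{=}\mathsf{R}(\boldsymbol{\theta})$ because a uniform phase is invariant under a constant shift yields the second line and, letting $K\to\infty$, the limit \eqref{eq:lim-Mk}.

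The hard part will be bounding the second- and higher-order terms of the product expansion so that the whole remainder is genuinely $O_p(1/K)$, as claimed in the first line of \eqref{eq:Mk-Complete}. A naive count gives $\binom{K}{m}$ products of $m$ conjugated factors scaled by $K^{-m}$, which is $O(1)$ termwise, so cancellations are essential. I would control these by the same phase-averaging mechanism: the expectation of a product $\tilde{\mathsf{L}}(\boldsymbol{\sigma}_{i_1})\cdots\tilde{\mathsf{L}}(\boldsymbol{\sigma}_{i_m})$ is a sum of terms $\prod(\mathrm{C}_1)_{\cdot\,\cdot}\,\mathbb{E}[e^{j(\cdots)}]$, and each phase expectation vanishes unless the accumulated uniform increments cancel identically, which forces index coincidences that cut the number of surviving terms by a factor $K$ per order. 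Establishing this uniform moment bound, together with a rigorous justification of the distributional replacement of the dependent partial-sum phases $\boldsymbol{\sigma}_i$ by independent $\boldsymbol{\theta}_i$, is the technical crux of the lemma.
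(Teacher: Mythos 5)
Your proposal is correct and follows essentially the same route as the paper's proof: your telescoping rearrangement into conjugated factors $\mathsf{R}(\boldsymbol{\sigma}_i)\mathrm{D}_K\mathsf{R}(\boldsymbol{\sigma}_i)^{-1}$ is exactly the paper's commutator decomposition, and the subsequent Taylor expansion of $\mathrm{D}_K$, the identification $\bar{\mathsf{L}}=(\zeta+jd)\mathrm{I}_n$, and the second-moment/law-of-large-numbers bound giving the $O_p(1/\sqrt{K})$ term all coincide with the paper's steps. The one remark worth making is that the step you call the technical crux---replacing the dependent partial-sum phases $\boldsymbol{\sigma}_i$ by i.i.d.\ $\boldsymbol{\theta}_i$---needs no delicate justification: modulo $2\pi$ the partial sums are \emph{jointly} (not merely marginally) distributed as i.i.d.\ uniform vectors, since adding an independent uniform phase mod $2\pi$ produces a uniform phase independent of all previously formed sums, and this exact equality in distribution is precisely what the paper invokes.
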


\begin{proof}
The proof is given in  Section~\ref{pr:asymDistK}. The first equality is shown by algebra. The second equation is obtained by applying a concentration inequality for weak law of large numbers, or central limit theorem.  
\end{proof}

Substituting the limit of $\mathsf M_K$ in \eqref{def:matrixMk} into \eqref{def:Noisek}, we also obtain
 $\lim_{K\ra\infty}\vc{Z}_K=\vc{Z}$, where $\vc{Z} \sim \mathcal{N}_{\mathbb{C}}(0,\sigma'\mathrm{I}_n)$, $\sigma'=
\sigma^2 \mathcal{L} (e^{2\zeta} - 1)/2\zeta$. The following lemma follows.
\begin{lemma} 
\label{lem:asymFadingK} 
As $K\ra\infty$, the fading channel \eqref{def:fadingChannel} with finite dispersion tends to a sequence of independent phase noise channels

\begin{IEEEeqnarray}{c}
\label{def:phaseNoiseChannels}
    Y_{\ell}=e^{\zeta+j\theta_{\ell}} X_{\ell}+Z_{\ell},\quad \ell=1,\ldots, n,
\end{IEEEeqnarray}
where $\theta_{\ell} \widesim{\text{i.i.d.}} \mathcal{U}(0,2\pi)$ and $Z_{\ell} \widesim{\text{i.i.d.}}   \mathcal{N}_{\mathbb{C}}\left(0,\eta \sigma^2 \mathcal{L} \right)$, in which
\begin{IEEEeqnarray}{c}
        \eta = \begin{cases} 
        \IEEEstrut
        1, &\textnormal{ \emph{for lossless channel},}\\
        \frac{e^{2\zeta}-1}{2\zeta}, &\textnormal{\emph{otherwise}.}
        \IEEEstrut
        \end{cases} \label{def:eta}
    \end{IEEEeqnarray}
\end{lemma}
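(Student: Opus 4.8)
The plan is to treat the multiplicative and additive parts of \eqref{def:fadingChannel} separately and then combine them into a joint limit. The multiplicative part is immediate from Lemma~\ref{lem:asymDistK}: by \eqref{eq:lim-Mk}, $\mathsf{M}_K\stackrel{d}{\ra}e^{\zeta}\mathsf{R}(\boldsymbol{\theta})$, and since $\mathsf{R}(\boldsymbol{\theta})$ is diagonal this produces exactly the $\ell$-th gain $e^{\zeta+j\theta_{\ell}}$ multiplying $X_{\ell}$ in \eqref{def:phaseNoiseChannels}. What remains is to identify the limit of the additive noise $\vc{Z}_K$ in \eqref{def:Noisek}, to show that it is Gaussian with the isotropic covariance $\eta\sigma^2\mathcal{L}\mathrm{I}_n$, and --- most importantly --- that it becomes independent of the residual phase $\boldsymbol{\theta}$.

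For the noise, write $\mathsf{P}_i^K\triangleq\prod_{r=i}^K\mathrm{D}_K\mathsf{R}(\boldsymbol{\theta}_r)$, so that $\vc{Z}_K=\sum_{i=1}^K\mathsf{P}_i^K\bar{\vc{Z}}_i$ with $\bar{\vc{Z}}_i\widesim{\text{i.i.d.}}\mathcal{N}_{\mathbb{C}}(0,\sigma^2\varepsilon\mathrm{I}_n)$ and $\varepsilon=\mathcal{L}/K$. Conditioned on the phases $\{\boldsymbol{\theta}_r\}$, $\vc{Z}_K$ is a sum of independent circularly-symmetric Gaussians, hence Gaussian, with conditional covariance $\mathsf{\Sigma}_K\triangleq\sigma^2\varepsilon\sum_{i=1}^K\mathsf{P}_i^K(\mathsf{P}_i^K)^{\dagger}$. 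To compute $\mathbb{E}[\mathsf{\Sigma}_K]$ I would peel off one factor at a time: using $\mathbb{E}_{\boldsymbol{\theta}}[\mathsf{R}(\boldsymbol{\theta})\mathrm{A}\mathsf{R}(\boldsymbol{\theta})^{-1}]=\diag(\mathrm{A})$ (the same computation behind \eqref{eq:Ltheta}), the matrices $\mathsf{G}_m\triangleq\mathbb{E}[\mathsf{P}\mathsf{P}^{\dagger}]$ for an $m$-fold product $\mathsf{P}=\prod_{r=1}^m\mathrm{D}_K\mathsf{R}(\boldsymbol{\theta}_r)$ satisfy $\mathsf{G}_m=\mathrm{D}_K\diag(\mathsf{G}_{m-1})\mathrm{D}_K^{\dagger}$; since $\mathrm{D}_K\mathrm{D}_K^{\dagger}$ has the constant diagonal $\bar{g}\triangleq\frac1n\sum_{\ell}e^{2a_{\ell}}=1+2\zeta/K+O(1/K^2)$, this yields $\mathsf{G}_m=\bar{g}^{\,m-1}\mathrm{D}_K\mathrm{D}_K^{\dagger}$. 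Summing the resulting geometric series over $i$ and using $\varepsilon=\mathcal{L}/K$, $\bar{g}^{\,K}\ra e^{2\zeta}$ and $\mathrm{D}_K\mathrm{D}_K^{\dagger}\ra\mathrm{I}_n$ gives $\mathbb{E}[\mathsf{\Sigma}_K]\ra\sigma^2\mathcal{L}\frac{e^{2\zeta}-1}{2\zeta}\mathrm{I}_n=\eta\sigma^2\mathcal{L}\mathrm{I}_n$; equivalently, the same constant arises as the Riemann sum $\frac1K\sum_{m=1}^K e^{2m\zeta/K}\ra\int_0^1 e^{2\zeta u}\der u$. The lossless case is the limit $\zeta\ra0$, where $\eta=1$ and, because every $\mathrm{D}_K$ is unitary, $\mathsf{\Sigma}_K=\sigma^2\mathcal{L}\mathrm{I}_n$ holds exactly and deterministically for each $K$.

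The main obstacle is upgrading the convergence of the \emph{mean} covariance to a genuine channel limit, in particular the decoupling of $\vc{Z}_K$ from the residual phase $\boldsymbol{\theta}$: since $\mathsf{M}_K$ and $\vc{Z}_K$ are built from the same phases $\{\boldsymbol{\theta}_r\}$, their independence is not automatic. I would establish it by showing that the random conditional covariance $\mathsf{\Sigma}_K$ concentrates on its deterministic limit $\eta\sigma^2\mathcal{L}\mathrm{I}_n$, applying the weak law of large numbers / concentration mechanism already used for Lemma~\ref{lem:asymDistK} to each partial product $\mathsf{P}_i^K$ --- which contains $K-i+1$ i.i.d.\ phases and hence concentrates for all but the $O(\sqrt{K})$ boundary indices $i\approx K$, whose contribution to $\mathsf{\Sigma}_K$ is negligible after the factor $\varepsilon=\mathcal{L}/K$. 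Granting $\mathsf{\Sigma}_K\ra\eta\sigma^2\mathcal{L}\mathrm{I}_n$ in probability, the conditional density of $\vc{Y}_K$ given $\vc{X}=\vc{x}$ is the phase average $\mathbb{E}_{\{\boldsymbol{\theta}_r\}}[\det(\pi\mathsf{\Sigma}_K)^{-1}\exp(-(\vc{y}-\mathsf{M}_K\vc{x})^{\dagger}\mathsf{\Sigma}_K^{-1}(\vc{y}-\mathsf{M}_K\vc{x}))]$, and bounded convergence (the integrand is bounded once $\mathsf{\Sigma}_K$ is bounded away from singularity, which happens with probability tending to one) together with $\mathsf{M}_K\stackrel{d}{\ra}e^{\zeta}\mathsf{R}(\boldsymbol{\theta})$ sends this to $\mathbb{E}_{\boldsymbol{\theta}}[(\pi\eta\sigma^2\mathcal{L})^{-n}\exp(-\norm{\vc{y}-e^{\zeta}\mathsf{R}(\boldsymbol{\theta})\vc{x}}^2/(\eta\sigma^2\mathcal{L}))]$, which is precisely the conditional law of the bank of independent phase noise channels \eqref{def:phaseNoiseChannels}. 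This single step simultaneously delivers Gaussianity, the isotropic variance, and independence from $\boldsymbol{\theta}$. In the lossless case no concentration is needed, as $\mathsf{\Sigma}_K$ is already exactly $\sigma^2\mathcal{L}\mathrm{I}_n$ and independent of the phases, so the entire technical burden is confined to the lossy covariance-concentration argument, where the per-segment loss breaks unitarity.
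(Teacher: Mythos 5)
Your proposal is correct, and at the top level it follows the same two-part route as the paper: the multiplicative part is inherited from Lemma~\ref{lem:asymDistK} via \eqref{eq:lim-Mk}, and the additive part reduces to identifying the limit of $\vc{Z}_K$ in \eqref{def:Noisek}, whose variance is the geometric/Riemann sum $(e^{2\zeta}-1)/(2\zeta)$ --- exactly the paper's $\sigma'=\sigma^2\mathcal{L}(e^{2\zeta}-1)/2\zeta$. Where you genuinely depart is in how the noise limit is justified. The paper's entire argument is the sentence preceding the lemma: substitute the limit of (the partial products of) $\mathsf{M}_K$ termwise into \eqref{def:Noisek}, absorb the diagonal phases by circular symmetry of $\bar{\vc{Z}}_i$, and sum the variances; it does not address why termwise substitution of distributional limits is legitimate in a sum whose number of terms grows with $K$, nor why the limiting noise is independent of the residual phase $\boldsymbol{\theta}$ in $\mathsf{M}_K$ --- an independence that the lemma statement (and its later use in Theorem~\ref{th:asymCapFadingFinDis} through \eqref{eq:capcityPhaseNoise}) actually requires, since $\mathsf{M}_K$ and $\vc{Z}_K$ are built from the same phases. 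Your proof supplies exactly these pieces: conditioning on the phases makes $\vc{Z}_K$ exactly Gaussian with random covariance $\mathsf{\Sigma}_K$; the recursion $\mathsf{G}_m=\mathrm{D}_K\diag(\mathsf{G}_{m-1})\mathrm{D}_K^{\dagger}=\bar{g}^{\,m-1}\mathrm{D}_K\mathrm{D}_K^{\dagger}$ (valid because $\mathrm{D}_K\mathrm{D}_K^{\dagger}$ has constant diagonal) gives the mean covariance in closed form; and concentration of $\mathsf{\Sigma}_K$ --- needed only in the lossy case, where $\mathsf{P}_i^K(\mathsf{P}_i^K)^{\dagger}$ is genuinely random --- combined with bounded convergence of the conditional density delivers Gaussianity, isotropy, and independence from $\boldsymbol{\theta}$ in one stroke, in the paper's own sense of channel convergence (pointwise convergence of conditional laws). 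What the paper's route buys is brevity; what yours buys is an actual proof of the lossy case and of the independence claim, at the modest cost of a concentration estimate that you correctly reduce to the same law-of-large-numbers mechanism already used in Lemma~\ref{lem:asymDistK}.
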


The first capacity result in this paper is the following theorem showing that the pre-log of the capacity of the continuous-space fading channel with finite dispersion when  $K\ra\infty$ independently of $\snr$ is $1/2$, \ie\ there are $n$ real signal DoFs when the input dimension is $2n$. Let $\bar{\mathcal{C}}(\snr, K)$ be the capacity of \eqref{def:fadingChannel} and  $ \bar{\mathcal{C}}(\snr)=\lim_{K \rightarrow \infty } \bar{\mathcal{C}}(\snr, K)$.

\begin{theorem} 
\label{th:asymCapFadingFinDis} 
Capacity of fading channel \eqref{def:fadingChannel} with finite dispersion  and \snr\ satisfies
\begin{IEEEeqnarray}{c}
     \bar{\mathcal{C}}(\snr)= \frac{1}{2}\log_2\left(1+a \,\snr\right) +o(1),
     \label{eq:capacityFadingFinite}
\end{IEEEeqnarray}
where $\snr=\mathcal{P}/(\sigma^2 \mathcal{L})$ and $a$ is given in Theorem~\ref{th:mainThSSFM}. 
\end{theorem}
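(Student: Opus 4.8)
The plan is to combine the channel convergence of Lemma~\ref{lem:asymFadingK} with Lapidoth's phase-noise capacity \eqref{eq:capcityPhaseNoise}, thereby collapsing the $n$-dimensional problem to a single scalar phase-noise channel. Let $\mathcal{C}_{\star}(\snr)$ denote the capacity of the limiting channel \eqref{def:phaseNoiseChannels}. The argument splits into three parts: (i) show that the capacity converges, i.e.\ $\bar{\mathcal{C}}(\snr)=\lim_{K\ra\infty}\bar{\mathcal{C}}(\snr,K)=\mathcal{C}_{\star}(\snr)$; (ii) reduce the $n$ independent channels in \eqref{def:phaseNoiseChannels} to a single scalar channel by symmetry; and (iii) evaluate that scalar capacity through \eqref{eq:capcityPhaseNoise}.

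For part (i) I would establish the two directions separately. For achievability ($\ge$), fix a bounded-support input that is $\epsilon$-optimal for \eqref{def:phaseNoiseChannels} and feed it into \eqref{def:fadingChannel}; by Lemma~\ref{lem:asymDistK} we have $\mathsf{M}_K\vc{X}+\vc{Z}_K\ra e^{\zeta}\mathsf{R}(\boldsymbol{\theta})\vc{X}+\vc{Z}$ in distribution jointly with $\vc{X}$, so lower semicontinuity of mutual information (equivalently of relative entropy) under weak convergence gives $\liminf_K I(\vc{X};\vc{Y}_K)\ge I(\vc{X};\vc{Y})$, and letting $\epsilon\ra0$ yields $\liminf_K\bar{\mathcal{C}}(\snr,K)\ge\mathcal{C}_{\star}(\snr)$. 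For the converse ($\le$) I would use a genie argument: write $\mathsf{M}_K=\mathrm{G}\,\mathsf{R}(\boldsymbol{\theta}_1)$ with $\mathrm{G}=\prod_{i=2}^{K}\bigl(\mathrm{D}_K\mathsf{R}(\boldsymbol{\theta}_i)\bigr)\mathrm{D}_K$, and reveal $\boldsymbol{\theta}_2,\dots,\boldsymbol{\theta}_K$ (hence $\mathrm{G}$) to the receiver. In the lossless case $\mathrm{G}$ is unitary, so de-rotating by $\mathrm{G}^{\dagger}$ turns the channel into $\mathsf{R}(\boldsymbol{\theta}_1)\vc{X}+\mathrm{G}^{\dagger}\vc{Z}_K$ with $\mathrm{G}^{\dagger}\vc{Z}_K$ still white, which is exactly $n$ i.i.d.\ copies of \eqref{def:phaseNoiseChannel}. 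Since side information cannot reduce capacity, $\bar{\mathcal{C}}(\snr,K)\le\mathcal{C}_{\star}(\snr)$ for every $K$, matching the achievability bound in the limit.

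For part (ii), the limiting channel \eqref{def:phaseNoiseChannels} is a set of $n$ i.i.d.\ scalar phase-noise channels under a sum power constraint; additivity of mutual information over independent parallel channels for product inputs, together with concavity of the capacity-cost function, forces equal power allocation, so the per-dimension capacity equals the scalar capacity at power $\mathcal{P}$. For part (iii), the substitution $X'_{\ell}=e^{\zeta}X_{\ell}$ converts $Y_{\ell}=e^{\zeta+j\theta_{\ell}}X_{\ell}+Z_{\ell}$ into Lapidoth's channel \eqref{def:phaseNoiseChannel} with noise variance $\eta\sigma^2\mathcal{L}$ and power $e^{2\zeta}\mathcal{P}$; then \eqref{eq:capcityPhaseNoise} gives $\tfrac12\log_2\bigl(1+\tfrac{e^{2\zeta}}{2\eta}\snr\bigr)+o(1)$, and one checks from \eqref{def:eta} that $e^{2\zeta}/(2\eta)=a$ in both the lossless case ($\zeta=0$, $\eta=1$, $a=\tfrac12$) and the lossy case ($a=\zeta e^{2\zeta}/(e^{2\zeta}-1)$), yielding \eqref{eq:capacityFadingFinite}.

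I expect the main obstacle to be the converse direction of part (i) in the lossy case: there $\mathrm{G}$ is no longer unitary, so de-rotation by $\mathrm{G}^{-1}$ colors and amplifies the additive noise and the clean reduction to \eqref{def:phaseNoiseChannel} breaks. One must then either control the resulting colored-noise phase channel as $K\ra\infty$, or establish upper semicontinuity of the capacity directly from the convergence rate $O_p(1/\sqrt{K})$ in Lemma~\ref{lem:asymDistK}, for instance via a uniform-integrability bound on the information density that exploits the Gaussian additive noise and the power constraint. Obtaining matching bounds for all finite $\snr$, rather than merely asymptotically, is the delicate step.
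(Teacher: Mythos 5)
Your skeleton coincides with the paper's: Lemmas~\ref{lem:asymDistK} and \ref{lem:asymFadingK} reduce the fading channel \eqref{def:fadingChannel} to the $n$ i.i.d.\ phase-noise channels \eqref{def:phaseNoiseChannels}, the parallel-channel symmetry argument collapses these to one scalar channel, and the rescaling $X'_{\ell}=e^{\zeta}X_{\ell}$ in Lapidoth's formula \eqref{eq:capcityPhaseNoise} produces $a=e^{2\zeta}/(2\eta)$, which by \eqref{def:eta} equals the $a$ of Theorem~\ref{th:mainThSSFM} in both the lossless and lossy cases. Where you genuinely differ is in your step (i), the transfer of the limit from the channel law to the capacity --- exactly the step the paper compresses into a single sentence, with the analytic substance deferred to the dominated-convergence continuity result of Lemma~\ref{lemm:continuity}. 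Your achievability via lower semicontinuity of mutual information under weak convergence is essentially equivalent to that route. Your genie converse, however, is new relative to the paper and stronger where it applies: revealing $\boldsymbol{\theta}_2,\dots,\boldsymbol{\theta}_K$ and de-rotating by $\mathrm{G}^{H}$ (valid because, in the lossless case, $\vc{Z}_K$ in \eqref{def:Noisek} is white and independent of all the phases, so $\mathrm{G}^{H}\vc{Z}_K$ remains white and independent of $\boldsymbol{\theta}_1$) yields the non-asymptotic inequality $\bar{\mathcal{C}}(\snr,K)\le\mathcal{C}_{\star}(\snr)$ for every finite $K$. This is something per-input convergence of $I(\vc{X};\vc{Y}_K)$ cannot deliver on its own: pointwise convergence controls $\liminf_K\sup_{P}$ but not $\limsup_K\sup_{P}$, so the paper's terse proof leaves precisely the converse direction implicit, and your genie closes it cleanly in the lossless case.

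The one real gap is the one you flag yourself: the theorem also covers the lossy case, $a=\zeta e^{2\zeta}/(e^{2\zeta}-1)$, and there your converse breaks --- $\mathrm{G}$ is not unitary, and worse, the conditional covariance of $\vc{Z}_K$ given the phases actually depends on the phases (the factors $\mathrm{D}_K\mathrm{D}_K^{H}\neq\mathrm{I}_n$ do not commute with the $\mathsf{R}(\boldsymbol{\theta}_i)$), so the de-rotated channel is not a clean phase-noise bank with independent additive noise. Your fallback plan --- upper semicontinuity of capacity extracted from the $O_p(1/\sqrt{K})$ rate in Lemma~\ref{lem:asymDistK} together with uniform integrability from the Gaussian noise and the power constraint --- is the right one, and it is morally what Lemma~\ref{lemm:continuity} supplies; note, though, that even that lemma gives continuity of mutual information for each \emph{fixed} input, so a uniformity-over-inputs step is still required before one may interchange $\lim_K$ and $\sup_{P}$ in the converse direction. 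In short: for the lossless case your proof is complete and, on the converse side, sharper than the paper's; for the lossy case it is left at essentially the same level of incompleteness as the paper's own one-line argument.
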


\begin{proof}
The result is obtained by combining Lemmas~\ref{lem:asymDistK} and \ref{lem:asymFadingK} and the capacity result \eqref{eq:capcityPhaseNoise}. 
\end{proof}

Note that \eqref{eq:capacityFadingFinite} holds for any \snr, if $K\rightarrow\infty$ independently of \snr. Next, we consider the capacity of the finite dispersion fading channel when $ K = \sqrt[\delta]{\snr}$,
$\delta>0$, $\snr \rightarrow \infty$. We require this model in the Section~\ref{sec:infinite-dips} when we study the SSFM channel. Substituting \eqref{eq:Mk-Complete} into \eqref{def:fadingChannel},  \eqref{def:phaseNoiseChannel} is modified to
\begin{IEEEeqnarray}{c}
\vc{Y}_K\stackrel{p}{\ra}  e^{\zeta}\mathsf{R}(\boldsymbol{\theta}) \vc{X}+\vc{Z} + \Delta \vc{X},\quad \textnormal{as~~~} K\ra\infty,
\end{IEEEeqnarray}

where $\Delta=O_p\left(K^{-1/2}\right)$ and hence, $\Delta\vc{X}=O_p\left(K^{(\delta-1)/2}\right)$. Here, $\Delta$ is a dense matrix, that is generally not diagonal, capturing ISI or intra-channel interactions.

If $\delta<1$, then $\Delta \vc{X} \stackrel{p}{\ra} (0,\ldots,0)$ and $\vc{Y}_K \stackrel{p}{\ra} e^{\zeta}\mathsf{R}(\boldsymbol{\theta}) \vc{X}+\vc{Z}$.
In this case, the capacity is provided by Theorem~\ref{th:asymCapFadingFinDis}.

If $\delta>1$, $\Delta \vc{X}$ grows with $K$ and constitutes the dominant stochastic impairment. The following theorem establishes a lower bound on the capacity in this case.

\begin{theorem} 
\label{th:fadingHighPowerdG1}
The capacity of the fading channel \eqref{def:fadingChannel} when $K= \sqrt[\delta]{\snr}$, $\delta>1$, is lower bounded as
\begin{IEEEeqnarray}{c}
 \bar{\mathcal{C}}\left(\snr, \sqrt[\delta]{\snr}\right) \geq\frac{1}{2 \delta} \log_2(1+\snr)+O(1),
\end{IEEEeqnarray}
where the $O(1)$ term is bounded as $\snr \ra \infty$.
\end{theorem}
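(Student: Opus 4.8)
The plan is to reduce the channel to a diagonal phase-noise channel with an inflated effective noise, using the expansion of $\mathsf M_K$ in Lemma~\ref{lem:asymDistK}. As recorded just before the statement, substituting \eqref{eq:Mk-Complete} into the fading channel \eqref{def:fadingChannel} gives
\[
\vc Y_K \;\stackrel{p}{\ra}\; e^{\zeta}\mathsf R(\boldsymbol\theta)\vc X + \vc Z + \Delta\vc X, \qquad \Delta = O_p\!\left(K^{-1/2}\right),
\]
as $K\ra\infty$, where the leading two terms form the phase-noise channel \eqref{def:phaseNoiseChannels} whose capacity is governed by Theorem~\ref{th:asymCapFadingFinDis}, and the dense residual $\Delta\vc X$ is the stochastic interference that dominates $\vc Z$ once $\delta>1$.

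The quantitative core is to control the power of $\Delta\vc X$. Because $n$ is fixed and $\Delta=O_p(K^{-1/2})$, all matrix norms are equivalent and $\norm{\Delta\vc X}^2 \le \norm{\Delta}^2_{\mathrm{op}}\,\norm{\vc X}^2 = O_p(K^{-1})\,\norm{\vc X}^2$; combined with a matching second-moment bound on $\Delta$ this yields $\mathbb E\norm{\Delta\vc X}^2 = O(K^{-1}\,n\mathcal P)=O(K^{-1}\snr)$. Substituting $K=\sqrt[\delta]{\snr}$ makes the interference power of order $\snr^{(\delta-1)/\delta}$, so that the signal-to-interference-plus-noise ratio is of order $\snr/\snr^{(\delta-1)/\delta}=\snr^{1/\delta}$, which is precisely the scaling that produces the pre-log $1/(2\delta)$.

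To convert this into a capacity lower bound I would treat $\vc Z+\Delta\vc X$ as a single additive impairment and apply a worst-case (Gaussian) noise argument: for a fixed input and a fixed second moment of the impairment, replacing it by a Gaussian one of matched covariance cannot increase the mutual information. The model then reduces to the phase-noise channel \eqref{def:phaseNoiseChannels} with effective noise variance $\Theta(\snr^{(\delta-1)/\delta})$, and Theorem~\ref{th:asymCapFadingFinDis} (equivalently the bound \eqref{eq:capcityPhaseNoise}) applied at SNR of order $\snr^{1/\delta}$ gives
\[
\bar{\mathcal C}\!\left(\snr,\sqrt[\delta]{\snr}\right)\;\ge\;\tfrac12\log_2\!\bigl(1+c\,\snr^{1/\delta}\bigr)=\tfrac{1}{2\delta}\log_2(1+\snr)+O(1),
\]
where the loosening from $o(1)$ to $O(1)$ reflects the constant lost in the crude second-moment bound on the interference.

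The main obstacle is that $\Delta\vc X$ is \emph{not} independent of the signal $e^{\zeta}\mathsf R(\boldsymbol\theta)\vc X$, since both are driven by the same input $\vc X$ and the same phases $\boldsymbol\theta$, so the textbook worst-case-noise lemma does not apply verbatim. I expect to circumvent this either by restricting to an intensity-modulation input, for which $\mathsf R(\boldsymbol\theta)$ is immaterial and the interference enters only through its magnitude, or by working with $I(\vc X;\vc Y_K)=h(\vc Y_K)-h(\vc Y_K\mid\vc X)$, upper-bounding $h(\vc Y_K\mid\vc X)$ by the Gaussian entropy of its conditional covariance (into which the interference power enters additively) and lower-bounding $h(\vc Y_K)$ directly. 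Checking that the signal–interference cross-correlation contributes only to the $O(1)$ term, and does not erode the pre-log $1/(2\delta)$, is the delicate step.
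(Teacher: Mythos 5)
Your starting point (the expansion in Lemma~\ref{lem:asymDistK}), your estimate $\mathbb{E}\norm{\Delta\vc X}^2=O\bigl(\snr^{(\delta-1)/\delta}\bigr)$, and the resulting signal-to-interference scaling $\snr^{1/\delta}$ are all correct and are exactly the mechanism behind the paper's proof. The gap is in the step that converts this scaling into a bound on $I(\vc X;\vc Y_K)$ --- the step you yourself flag as delicate. Your primary route (worst-case Gaussian noise) fails for the reason you note, but your fallback (b) does not repair it: upper-bounding $h(\vc Y_K\mid\vc X)$ by the Gaussian entropy of its conditional covariance is fatally lossy, because that covariance is \emph{not} ``noise plus interference.'' Conditioned on $\vc X=\vc x$ alone, the phases $\boldsymbol{\theta}$ are still unknown, so the component $e^{\zeta}e^{j\theta_\ell}x_\ell$ has conditional mean zero and contributes variance $e^{2\zeta}\abs{x_\ell}^2=\Theta(\mathcal P)$ to the conditional covariance, which therefore has the same order as the unconditional one. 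The resulting bound can only give $h(\vc Y_K)-h(\vc Y_K\mid\vc X)=O(1)$, and the pre-log $1/(2\delta)$ is lost entirely. Conditioning on $\boldsymbol{\theta}$ as well does not help either: $I(\vc X;\vc Y_K\mid\boldsymbol{\theta})$ is an \emph{upper} bound on $I(\vc X;\vc Y_K)$, and the deficit $I(\vc X;\boldsymbol{\theta}\mid\vc Y_K)$ is itself of order $\tfrac{n}{2}\log_2 K$, so it cannot be absorbed into $O(1)$.

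The missing idea --- which is how the paper's proof (Section~\ref{pr:fadingHighPowerdG1}) proceeds --- is to apply the max-entropy bound only \emph{after} a polar decomposition that strips out the phase and the interference scale. Conditioned on $\vc X=\vc x$, the CLT form of the interference gives $Y_{K,\ell}\stackrel{d}{=}e^{\zeta+j\theta_\ell}x_\ell+\frac{e^{j\theta_\ell}\norm{\vc x}_4}{\sqrt{K}}T_\ell+\sqrt{\eta}\,Z_\ell$ with $T_\ell$ Gaussian of variance at most $\rho$. Since the conditional phase is exactly uniform, $h(Y_{K,\ell}\mid\vc X)=\log_2\pi+h\bigl(\abs{Y_{K,\ell}}^2\mid\vc X\bigr)+O(\cdot)$; inside $\abs{Y_{K,\ell}}^2$ the deterministic shift $e^{2\zeta}\abs{x_\ell}^2$ drops out, the cross term $\frac{2e^{\zeta}\abs{x_\ell}\norm{\vc x}_4}{\sqrt{K}}\mathfrak{R}\bigl(T_\ell e^{-j\angle x_\ell}\bigr)$ dominates, and its scale factors out of the entropy as $-\frac{1}{2}\log_2 K+\mathbb{E}\log_2\bigl(\abs{X_\ell}\norm{\vc X}_4\bigr)$, so that only the bounded-variance real variable $\mathfrak{R}(T_\ell)$ receives the Gaussian bound. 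The $\log_2 2\pi$ phase terms and the input-dependent log-moments then cancel (to within $O(1)$, for Gaussian input) against matching terms in $h(\vc Y_K)$, leaving $\frac{1}{2}\log_2 K=\frac{1}{2\delta}\log_2\snr+O(1)$. Your option (a), intensity modulation, is viable and is in fact the same idea --- confine attention to the magnitude, where the interference genuinely is an additive impairment of power $\mathcal P/K$ --- but as written it is a proposal, not a proof, and the bookkeeping above is precisely what it would require.
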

\begin{proof}
See Section~\ref{pr:fadingHighPowerdG1}.
\end{proof}


\subsubsection{Infinite Dispersion}
\label{sec:infinite-dips}

We consider the SSFM model when $\mathrm{D}_K \eqdef \mathrm{D}$ is independent of $K$.
Lemma~\ref{lem:haarMeasure} below shows that, in the lossless infinite dispersion case,
as $K\rightarrow \infty$ the random matrix $\mathsf{M}_K$ in \eqref{def:fadingChannel} tends to a random unitary matrix.

\begin{lemma} \label{lem:haarMeasure}
Let $\nu$ be  an   equivalence   relation  on   the  set $\{1,2,\ldots, n\}$, and  
$\mathbb{U}_n(\nu)$  the smallest subgroup of block diagonal matrices in $\mathbb{U}_n$ that contains $\mathrm{D}$. Then, the distribution of $\mathsf{M}_{K}$ tends to the Haar measure on $\mathbb{U}_n(\nu)$ as $K\ra\infty$.
\end{lemma}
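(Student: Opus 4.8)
The plan is to view $\mathsf{M}_K$ as a left random walk on the compact group $\mathbb{U}_n$ and to appeal to the Ito--Kawada convergence theorem for such walks. In the lossless infinite-dispersion case the loss exponents vanish, so the dispersion matrix $\mathrm{D}$ in \eqref{def:DK} is unitary; since each $\mathsf{R}(\boldsymbol{\theta}_i)$ is a diagonal unitary matrix, every factor $\mathrm{D}\mathsf{R}(\boldsymbol{\theta}_i)$ belongs to $\mathbb{U}_n$, and by \eqref{def:matrixMk} the law of $\mathsf{M}_K$ is the $K$-fold convolution $\mu^{*K}$, where $\mu$ is the distribution of a single factor $\mathrm{D}\mathsf{R}(\boldsymbol{\theta})$. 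Writing $\mathbb{T}_n$ for the group of diagonal unitary matrices, the choice $\boldsymbol{\theta}\sim\mathcal{U}^n(0,2\pi)$ makes the law of $\mathsf{R}(\boldsymbol{\theta})$ equal to the Haar measure on $\mathbb{T}_n$, so $\mathrm{supp}(\mu)=\mathrm{D}\,\mathbb{T}_n$. The Ito--Kawada theorem asserts that $\mu^{*K}$ converges weakly to the Haar measure on the closed subgroup $G$ generated by $\mathrm{supp}(\mu)$, provided $\mathrm{supp}(\mu)$ is not contained in a coset of a proper closed normal subgroup of $G$. It therefore remains to (i) identify $G=\overline{\langle\mathrm{D},\mathbb{T}_n\rangle}$ with $\mathbb{U}_n(\nu)$, and (ii) verify the aperiodicity condition.

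For (i), the inclusion $G\subseteq\mathbb{U}_n(\nu)$ is immediate, since $\mathrm{D}\in\mathbb{U}_n(\nu)$ and every diagonal matrix is block diagonal, so $\mathbb{T}_n\subseteq\mathbb{U}_n(\nu)$. For the reverse inclusion I would work at the level of Lie algebras. The Lie algebra of $G$ contains the imaginary diagonal matrices $\mathfrak{t}$ and their conjugates $\mathrm{Ad}_{\mathrm{D}}\mathfrak{t}$; using $\mathrm{Ad}_{\mathrm{D}}(jE_{kk})=j\,\mathbf{d}_k\mathbf{d}_k^{\dagger}$, with $\mathbf{d}_k$ the $k$-th column of $\mathrm{D}$, and bracketing against $\mathfrak{t}$ one produces skew-Hermitian generators supported off the diagonal that couple two indices whenever they are simultaneously reached by a column of $\mathrm{D}$. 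Iterating the brackets should fill in all off-diagonal directions within each block of $\nu$, so that the Lie algebra of $G$ equals that of $\mathbb{U}_n(\nu)$; as $\mathbb{U}_n(\nu)$ is connected and $G$ is closed and contains the torus, this gives $G=\mathbb{U}_n(\nu)$.

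For (ii), suppose $\mathrm{D}\,\mathbb{T}_n\subseteq gH$ for some proper closed normal subgroup $H\triangleleft G$. Since $\mathrm{D}\in\mathrm{D}\,\mathbb{T}_n$ we may take $gH=\mathrm{D}H$, and then $\mathrm{D}\,\mathbb{T}_n\subseteq\mathrm{D}H$ forces $\mathbb{T}_n\subseteq H$. Thus $H$ is a closed normal subgroup of $G=\mathbb{U}_n(\nu)$ containing the full diagonal torus, so by normality $\bigcup_{u\in G}u\,\mathbb{T}_n\,u^{-1}\subseteq H$; but every block-diagonal unitary matrix is unitarily diagonalizable within its blocks and hence lies in some conjugate $u\,\mathbb{T}_n\,u^{-1}$ with $u\in G$, so $H=G$, contradicting properness. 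The support therefore lies in no such coset, and Ito--Kawada yields the conclusion.

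I expect the main obstacle to be the reverse inclusion in (i). Connectivity of the support graph of $\mathrm{D}$ alone is not quite enough: the bracket computation shows that what is really needed is that $\mathrm{D}$ genuinely mixes the coordinates within each block (a pure coordinate permutation, for instance, would leave the generated group equal to the monomial subgroup rather than the full $\mathbb{U}_n(\nu)$). Establishing that the circulant, DFT-conjugated structure of $\mathrm{D}$ in \eqref{def:DK} indeed generates the full unitary Lie algebra on each block---and that the resulting partition coincides with $\nu$---is the delicate point; the aperiodicity check and the invocation of the convergence theorem are comparatively routine.
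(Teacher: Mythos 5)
Your proposal follows the same route as the paper's proof: both view $\mathsf{M}_K$ as a random walk on a compact group and invoke the Kawada--It\^{o}/Stromberg theorem (Theorem~\ref{th:kawadaItoStromberg}), reducing the lemma to (i) identifying the closed group generated by the support $\mathrm{D}\,\mathbb{T}_n$ of $\mu$ (with $\mathbb{T}_n$ the group of diagonal unitary matrices) with $\mathbb{U}_n(\nu)$, and (ii) ruling out that this support lies in a coset of a proper closed normal subgroup. Your treatment of (ii) is correct and is essentially the paper's argument: reduce to $\mathbb{T}_n\subseteq H$, then use normality together with the fact that every element of $\mathbb{U}_n(\nu)$ is diagonalized by a block-diagonal unitary, hence lies in a $G$-conjugate of $\mathbb{T}_n$, forcing $H=G$. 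Yours is in fact slightly cleaner, since for a normal subgroup left and right cosets coincide, whereas the paper runs the two cases separately.

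The divergence is in step (i). The paper closes it by citing Borevich's theorem (Theorem~\ref{th:Borevich}): a subgroup of $\mathbb{U}_n$ containing all diagonal unitaries contains $\mathbb{U}_n(\nu')$, where $r$ and $s$ are related under $\nu'$ iff some matrix in the subgroup has a column supported on both $r$ and $s$. Your Lie-algebra sketch is the infinitesimal version of exactly this statement: a subalgebra of $\mathfrak{u}(n)$ containing the diagonal torus algebra decomposes into the torus plus complete root spaces, bracket-closure makes the set of admitted root pairs an equivalence relation, and your computation of $[jE_{\ell\ell},\, j\mathbf{d}_k\mathbf{d}_k^{\dagger}]$ shows the $(\ell,s)$ root space is present whenever $\ell$ and $s$ lie in the support of a common column of $\mathrm{D}$. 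So your approach is completable, but as written you leave it open ("should fill in all off-diagonal directions"), so the proposal has a gap at precisely the step where the paper leans on a citation. I would add that your permutation-matrix caveat is not a defect of your method but a genuine issue with the lemma itself: if $\mathrm{D}$ is a cyclic shift (which is circulant, hence of the form \eqref{def:DK} for a suitable choice of the phases $b_{\ell}$), the common-column-support relation is trivial, the generated closed group is a group of monomial matrices --- a proper closed subgroup of $\mathbb{U}_n=\mathbb{U}_n(\nu)$ --- and $\mathsf{M}_K$ can never equidistribute on $\mathbb{U}_n(\nu)$. Borevich's theorem does not repair this, since it yields only the one-sided inclusion $\mathbb{U}_n(\nu')\leq\mathbb{H}$, and for such $\mathrm{D}$ the relation $\nu'$ is strictly coarser than $\nu$. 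Both your argument and the paper's therefore need the additional, implicit, generically true hypothesis that the common-column-support graph of $\mathrm{D}$ is connected within each block of $\nu$ (e.g., all entries of $\mathrm{D}$ inside each block nonzero); under that hypothesis your bracket argument goes through and the two proofs become equivalent.
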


\begin{proof} 
See Section~\ref{pr:haarMeasure}.
\end{proof}

In the following assume that $\mathrm{D}$ is not a block diagonal matrix (of more than one block), \textit{i.e.} $\mathbb{U}_n(\nu)=\mathbb{U}_n$. 

Decompose the mutual information for the fading channel \eqref{def:fadingChannel} as:
\begin{IEEEeqnarray}{rcl}
    I\Big(\vc{X};\vc{Y}_K\Big) = I\Big(\vc{X};\norm{\vc{Y}_K}\Big)+I\left(\vc{X};\hat{\vc{Y}_K}\big|\norm{\vc{Y}_K}\right). \label{eq:mutualILossless1}
\end{IEEEeqnarray}
Lemma~\ref{lem:haarMeasure} and Theorem~\ref{th:independence}  in Appendix~\ref{sec:math} imply that the second term approaches zero as $K \rightarrow \infty$. The conditional PDF of the signal norm is
\begin{IEEEeqnarray}{rCl}
P_{\norm{\vc{Y}_K}\big|\vc{X}}\left(\norm{\vc{y}}\big|\vc{x}\right)&=&P_{\norm{\vc{Y}_K}\big|\norm{\vc{X}}}\left(\norm{\vc{y}}\big|\norm{\vc{x}}\right) \nonumber \\
&=&\frac{2\norm{\vc{y}}^n}{\sigma^2 \mathcal{L} \norm{\vc{x}}^{n-1}} \exp\left(-\frac{\norm{\vc{y}}^2+\norm{\vc{x}}^2}{\sigma^2 \mathcal{L}} \right) \nonumber
\\
&&\times
\mathrm{I}_{n-1}\left(\frac{2\norm{\vc{x}}\norm{\vc{y}}}{\sigma^2 \mathcal{L}}\right),
 \label{eq:conditionalFadingOutputNorm}\
\end{IEEEeqnarray} 
where $I_m(\cdot)$ is the modified Bessel function of the first kind with order $m$.
Equations \eqref{eq:mutualILossless1} and \eqref{eq:conditionalFadingOutputNorm}  yield the following theorem.

\begin{theorem} \label{th:infFadCap}
Suppose that $\mathrm D \eqdef \mathrm{D}_K$ is independent of $K$ and  is not a block diagonal matrix.
The continuous-space fading channel \eqref{def:fadingChannel} has one signal DoF with capacity
\begin{IEEEeqnarray}{rCl}
\bar{\mathcal{C}}(\snr)\
&=& \max\limits_{P_{\norm{\vc{X}}}(\norm{\vc{x}})}
\Bigl\{ \frac{1}{n} I(\norm{\vc{X}};\norm{\vc{Y}}) :
\frac{1}{n}\mathbb{E}\left[\norm{\vc{X}}^2 \right]\leq\mathcal{P} \Bigr\}
 \nonumber\\
&=&
\frac{1}{2n} \log_2\left(1+\snr \right)+O(1), \label{eq:capacityFiniteFading}
\IEEEeqnarraynumspace
\end{IEEEeqnarray}
where $P\left(\norm{\vc{y}}\big|\norm{\vc{x}}\right)$ is given by ~\eqref{eq:conditionalFadingOutputNorm}, $\snr=\mathcal{P}/\sigma^2 \mathcal{L}$ and the $O(1)$ term  is bounded as $\snr \ra\infty$. 

\end{theorem}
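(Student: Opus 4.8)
The plan is to split the proof into a \emph{reduction} establishing the first equality in \eqref{eq:capacityFiniteFading}, followed by an evaluation of the resulting scalar amplitude channel yielding the second equality. For the reduction, I would invoke Lemma~\ref{lem:haarMeasure}: since $\mathrm{D}$ is not block diagonal, $\mathbb{U}_n(\nu)=\mathbb{U}_n$, so as $K\ra\infty$ the matrix $\mathsf{M}_K$ converges in distribution to a Haar-distributed $\mathsf{U}\in\mathbb{U}_n$, independent of $\vc{X}$, and the limiting channel is $\vc{Y}=\mathsf{U}\vc{X}+\vc{Z}$ with isotropic Gaussian $\vc{Z}\sim\mathcal{N}_{\mathbb{C}}(0,\sigma^2\mathcal{L}\mathrm{I}_n)$. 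Starting from the decomposition \eqref{eq:mutualILossless1}, the directional term $I(\vc{X};\hat{\vc{Y}_K}\mid\norm{\vc{Y}_K})$ vanishes as $K\ra\infty$ by Theorem~\ref{th:independence}, because a Haar rotation makes the output direction uniform on $\mathcal{S}^{n}$ and independent of $\vc{X}$. Since $\mathsf{U}$ is unitary and $\vc{Z}$ is rotationally invariant, the conditional law of $\norm{\vc{Y}}$ given $\vc{X}$ depends on the input only through $\norm{\vc{X}}$, as recorded in \eqref{eq:conditionalFadingOutputNorm}; hence $\norm{\vc{X}}$ is a sufficient statistic and $I(\vc{X};\norm{\vc{Y}})=I(\norm{\vc{X}};\norm{\vc{Y}})$ along the Markov chain $\vc{X}\ra\norm{\vc{X}}\ra\norm{\vc{Y}}$. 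Because the cost $\tfrac1n\mathbb{E}\norm{\vc{X}}^2$ likewise depends only on $\norm{\vc{X}}$, the capacity maximization collapses to one over $P_{\norm{\vc{X}}}$, which is the first equality and shows that only the amplitude carries information (one signal DoF).

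It then remains to show that the amplitude channel $R\triangleq\norm{\vc{X}}\mapsto S\triangleq\norm{\vc{Y}}$ with transition law \eqref{eq:conditionalFadingOutputNorm} has capacity $\tfrac12\log_2(1+\snr)+O(1)$ under $\mathbb{E}[R^2]\le n\mathcal{P}$, the prefactor $1/n$ then giving \eqref{eq:capacityFiniteFading}. The heuristic is that writing $\norm{\vc{Y}}^2=R^2+2R\,\mathfrak{R}\langle\widehat{\mathsf{U}\vc{X}},\vc{Z}\rangle+\norm{\vc{Z}}^2$ and expanding for large $R$ gives $S\approx R+\tilde{Z}$ with $\tilde Z$ approximately real Gaussian of variance $\sigma^2\mathcal{L}/2$, so the channel is asymptotically scalar additive Gaussian with pre-log $1/2$. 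I would make this rigorous with matching bounds. For the upper bound, $I(R;S)=h(S)-h(S\mid R)$, where the maximum-entropy inequality gives $h(S)\le\tfrac12\log_2(2\pi e\,\mathbb{E}[S^2])$ with $\mathbb{E}[S^2]=\mathbb{E}[R^2]+n\sigma^2\mathcal{L}\le n\sigma^2\mathcal{L}(1+\snr)$, while $h(S\mid R)$ is lower bounded by a constant using the Bessel density in \eqref{eq:conditionalFadingOutputNorm}; this yields $I(R;S)\le\tfrac12\log_2(1+\snr)+O(1)$. For the lower bound, I would either exhibit an input law for $R$ escaping to infinity and evaluate $I(R;S)$ directly, or observe that \eqref{eq:conditionalFadingOutputNorm} coincides with the amplitude transition law already analyzed in \cite{yousefi2016cap} and borrow the matching lower bound from \eqref{eq:asym-cap}.

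The main obstacle is the scalar capacity evaluation, specifically the uniform-in-input lower bound on $h(S\mid R)$ and the control of the noncentral density \eqref{eq:conditionalFadingOutputNorm} across \emph{all} amplitudes --- not merely the high-amplitude regime where the Gaussian approximation is transparent --- so that the additive constants on both sides collapse into a single $O(1)$ that stays bounded (though possibly $n$-dependent) as $\snr\ra\infty$. By contrast, the directional-independence and sufficiency steps are routine once Lemma~\ref{lem:haarMeasure} and Theorem~\ref{th:independence} are in hand.
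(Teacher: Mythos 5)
Your proposal is correct and follows essentially the same route as the paper: the same decomposition \eqref{eq:mutualILossless1} with Lemma~\ref{lem:haarMeasure} and Theorem~\ref{th:independence} eliminating the directional term (hence the reduction to the amplitude channel), and the same maximum-entropy upper bound $h(\norm{\vc{Y}})\leq \frac{1}{2}\log_2(1+\snr)+O(1)$ paired with $h\bigl(\norm{\vc{Y}}\,\big|\,\norm{\vc{X}}\bigr)=O(1)$. The only difference is cosmetic, namely the source of the scalar lower bound: the paper cites the non-central chi-square channel result of \cite[Theorem~1]{Shev2018}, whereas you propose borrowing the equivalent amplitude-channel analysis behind \eqref{eq:asym-cap} from \cite{yousefi2016cap}; both rest on the same transition law \eqref{eq:conditionalFadingOutputNorm}.
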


\begin{proof}
The first line follows from Lemma~\ref{lem:haarMeasure}, stating that if $K\ra\infty$, then $I\left(\vc{X};\hat{\vc{Y}}\big|\norm{\vc{Y}}\right)\ra 0$. 
The second line follows from the lower bound on the capacity of  the non-central chi-square channel derived in  \cite[Theorem~1]{Shev2018} and an upper bound derived by noting that $h(\|\vc{Y}\|) {\leq} \frac{1}{2} \log_2\left(1+\snr \right)+O(1)$ due to principle of maximum entropy and $h(\|\vc{Y}\|\big| \|\vc{X}\|)=O(1)$.
\end{proof}

\begin{remark}
In general, if $\mathbb{U}_n(\nu)$ is the smallest subgroup of block diagonal matrices containing $\mathrm{D}$, then the capacity of the continuous-space fading channel with infinite dispersion has $m(\nu)$ real signal DoFs, where $m(\nu)$ is the number of blocks of $\mathrm{D}$.
\end{remark}


\subsection{Capacity of the continuous-space SSFM model} 
\label{sec:ssfm}

In this section,  the capacity of the continuous-space SSFM model is investigated in the high power regime, as well as with infinite nonlinearity, and infinite dispersion.


\subsubsection{High power regime}
Let $K = \sqrt[\delta]{\snr}$, $\delta>0$, and assume that the input $\vc{X}$ escapes to infinity with \snr. We show that the limit of the discrete-space SSFM channel when $\snr\ra\infty$ is a diagonal model with phase noise. The convergence rate to this diagonal model depends on $\delta$. We derive a lower bound on the convergence rate, from which a lower bound on the capacity is established. 

Denote
\begin{IEEEeqnarray}{c}
\mathsf{S}_K \triangleq \diag\left( \Bigl( \exp\Bigl(j \sum \limits_{i=1}^K \Phi_{i,\ell}\Bigr) \Bigr)_{\ell=1}^n \right),
\end{IEEEeqnarray}
and
\begin{IEEEeqnarray}{c}
       \upsilon(\delta)= \begin{cases}
       \IEEEstrut
        5\delta/6,& 0 < \delta \leq 1,\\
        1-\delta/6,& 1 \leq \delta \leq 1.5 ,\\
        1.5-\delta/2,& 1.5 \leq \delta \leq 2 \;\; \\
  &\text{and i.i.d. $\vc X$}, \\
         0.5,& 2 \leq \delta \leq 3 \;\; \\
  & \text{and i.i.d. $\vc X$},\\
    0.5,& 3 \leq \delta.
    \IEEEstrut
\end{cases} \label{def:upsilonDelta}
\end{IEEEeqnarray}

\begin{lemma} 
\label{lem:mainThSSFM-ex} 

If the input distribution  is absolutely continuous  and escapes to infinity with $\snr$, the SSFM channel with $K=\sqrt[\delta]{\snr}$, $\delta>0$, $\gamma\neq 0$, tends to the following channel in distribution as $\snr\ra\infty$
\begin{IEEEeqnarray}{c}
\vc{Y}_K=\mathsf{M}_K \vc{X}+ \vc{Z},
\end{IEEEeqnarray}
where
\begin{IEEEeqnarray}{c}
    \mathsf{M}_K=e^{\zeta+jd} \mathsf{S}_K+O_p\left(K^{-\upsilon(\delta) + \epsilon'}\right),
\end{IEEEeqnarray}
and $\vc{Z}\sim \mathcal{N}_{\mathbb{C}}(0,\eta\sigma^2\mathcal{L}\mathrm{I}_n)$, where $\eta$ is defined in Lemma~\ref{lem:asymFadingK}. Further,       $\epsilon'>0$ can be made arbitrarily small with \snr.
\end{lemma}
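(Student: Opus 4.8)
The plan is to reduce the SSFM channel to a fading channel of the form \eqref{def:fadingChannel} by exploiting that, as the input escapes to infinity with $\snr$, the signal-dependent nonlinear phases $\Phi_{i,\ell}$ in \eqref{eq:Phi-ij} randomize modulo $2\pi$ and thereby play the role of the i.i.d.\ uniform phases $\boldsymbol{\theta}_i$ in \eqref{def:matrixMk}. The argument then parallels Lemma~\ref{lem:asymDistK}, with the added task of tracking how fast the off-diagonal (ISI) terms decay as a function of $\delta$. First I would iterate the nonlinear and linear steps of Section~\ref{sec:ssfmDef} to obtain the exact multiplicative matrix $\mathsf{M}_K=\prod_{i=1}^K \mathrm{D}_K\mathsf{R}(\boldsymbol{\Phi}_i)$, with $\boldsymbol{\Phi}_i=(\Phi_{i,\ell})_{\ell=1}^n$, together with the additive term of \eqref{def:Noisek} with $\boldsymbol{\theta}_i$ replaced by $\boldsymbol{\Phi}_i$. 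The absolute-continuity hypothesis on the input is what guarantees that each $|V_{i,\ell}|$ has a density, so that $\Phi_{i,\ell}\bmod 2\pi$ can be shown to equidistribute.

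Since $\mathrm{D}_K=\exp(\mathrm{C}_1/K)$ with $\mathrm{C}_1$ as defined after \eqref{eq:Ltheta}, I would commute all phase matrices to the right, writing
\[
\mathsf{M}_K=\Bigl(\prod_{i=1}^K \mathsf{T}_i\,\mathrm{D}_K\,\mathsf{T}_i^{-1}\Bigr)\mathsf{S}_K,\qquad \mathsf{T}_i=\prod_{t=i+1}^K \mathsf{R}(\boldsymbol{\Phi}_t),
\]
so that the total accumulated phase is factored exactly into $\mathsf{S}_K$, and the remaining product involves conjugated dispersion matrices $\mathsf{T}_i\mathrm{D}_K\mathsf{T}_i^{-1}=\mathrm{I}_n+\tilde{\mathrm{C}}_i/K+O(1/K^2)$, where $\tilde{\mathrm{C}}_i=\mathsf{T}_i\mathrm{C}_1\mathsf{T}_i^{-1}$ has $(r,s)$ entry $(\mathrm{C}_1)_{r,s}\,e^{j(\psi_{i,r}-\psi_{i,s})}$ with $\psi_{i,\ell}=\sum_{t>i}\Phi_{t,\ell}$. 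Multiplying out the product as in the proof of Lemma~\ref{lem:asymDistK}, the diagonal of each $\tilde{\mathrm{C}}_i$ equals that of $\mathrm{C}_1$, i.e.\ the constant $\zeta+jd$ (the DFT average of $\zeta_\ell+jd_\ell$), which exponentiates to the prefactor $e^{\zeta+jd}$; the off-diagonal entries are $(\mathrm{C}_1)_{r,s}\,\frac1K\sum_{i}e^{j(\psi_{i,r}-\psi_{i,s})}$ and constitute the residual $O_p(\cdot)$ term.

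The core of the proof is bounding these off-diagonal averages, and this is where $\upsilon(\delta)$ enters. From \eqref{eq:Phi-ij} a single nonlinear phase scales like $\gamma\varepsilon|V_{i,\ell}|^2$, so with $\varepsilon=\mathcal L/K$, $\mathcal P=\snr\,\sigma^2\mathcal L$, and $K=\sqrt[\delta]{\snr}$ the per-segment phase increment grows like $\snr^{1-1/\delta}$. The differences $\psi_{i,r}-\psi_{i,s}=\sum_{t>i}(\Phi_{t,r}-\Phi_{t,s})$ are partial sums (in $i$) of such increments, so $\frac1K\sum_i e^{j(\psi_{i,r}-\psi_{i,s})}$ is an average of $e^{jS_i}$ along a random walk $S_i$ on the circle; its mean decays like the geometric factor $\bigl(\mathbb E\,e^{j(\Phi_{t,r}-\Phi_{t,s})}\bigr)^{K-i}$ and its fluctuations are controlled by a concentration inequality. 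When the per-segment increment is large ($\delta\geq2$, and $\delta\geq3$ without the i.i.d.\ hypothesis) consecutive terms effectively decorrelate and a central limit theorem yields the $K^{-1/2}$ rate, i.e.\ $\upsilon=1/2$; for smaller $\delta$ the increments are weaker, the walk mixes slowly, and the bound degrades to the remaining branches of \eqref{def:upsilonDelta}. The i.i.d.\ input hypothesis in the two middle branches supplies the independence across dimensions that sharpens the concentration, and $\epsilon'>0$ absorbs logarithmic factors from that inequality.

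For the additive noise I would argue exactly as in Lemma~\ref{lem:asymFadingK}: each $\mathrm{D}_K$ is unitary in the lossless case and otherwise contracts by the loss factor, while the phase matrices $\mathsf{R}(\boldsymbol{\Phi}_t)$ leave a scalar covariance invariant; hence the circularly-symmetric increments $\vc{\bar Z}_i\sim\mathcal N_{\mathbb C}(0,\sigma^2\varepsilon\mathrm I_n)$ accumulate to $\vc Z\sim\mathcal N_{\mathbb C}(0,\eta\sigma^2\mathcal L\mathrm I_n)$ with $\eta$ of \eqref{def:eta}. The main obstacle I anticipate is the quantitative phase-randomization step feeding the concentration bound: the phases are neither exactly uniform nor independent across segments—they share the input $\vc X$ and, within a segment, the same Wiener path that also determines $\vc{\bar Z}_i$—so pinning down the decay of $\mathbb E\,e^{j(\psi_{i,r}-\psi_{i,s})}$ and the dependence structure that yields each branch of $\upsilon(\delta)$, in particular justifying why i.i.d.\ input is required for $3/2\le\delta\le3$, is the delicate part.
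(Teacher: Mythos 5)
Your algebraic skeleton coincides with the paper's: the factorization $\mathsf{M}_K=\bigl(\prod_i \mathsf{T}_i\,\mathrm{D}_K\,\mathsf{T}_i^{-1}\bigr)\mathsf{S}_K$, the identification of the off-diagonal residual as $(\mathrm{C}_1)_{r,s}\,\frac{1}{K}\sum_i e^{j(\psi_{i,r}-\psi_{i,s})}$ (this is exactly \eqref{eq:phiGeneral}), and the accounting of the additive noise are all the same as in Section~\ref{pr:mainThSSFM-ex}. The gap is in the one step that carries the content of the lemma: the derivation of the rates $\upsilon(\delta)$ in \eqref{def:upsilonDelta}. Your proposed mechanism --- that the mean of $e^{j(\psi_{i,r}-\psi_{i,s})}$ decays like $\bigl(\mathbb{E}\,e^{j(\Phi_{t,r}-\Phi_{t,s})}\bigr)^{K-i}$, with fluctuations handled by concentration --- presupposes independent (or rapidly mixing) phase increments across segments. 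Conditioned on the input, which is the law one must control, this is false for small $\delta$: the paper's Claim~2, Eq.~\eqref{eq:arithSum}, shows that over blocks of length up to $K^{1-\delta/3-\epsilon'}$ (resp.\ $K^{2-\delta-\epsilon'}$, see \eqref{eq:iDelta}) the increments are all essentially the \emph{same} deterministic number $\gamma\varepsilon\bigl(|X_r|^2-|X_s|^2\bigr)$, so $\bigl|\mathbb{E}\bigl[e^{j(\Phi_{t,r}-\Phi_{t,s})}\mid\vc{X}\bigr]\bigr|\approx 1$ and there is no geometric decay of the conditional mean at all. The cancellation in this regime is deterministic, not stochastic: $\sum_i e^{jS_i}$ is a geometric series with ratio $e^{jK^{\delta-1}Q}$, bounded by $2/\bigl|e^{jK^{\delta-1}Q}-1\bigr| = O_p\bigl(K^{1-\delta}\bigr)$ (see \eqref{eq:secondOrderD1} and \eqref{eq:oscillation}). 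Stochastic decorrelation enters only \emph{across} blocks, because the signal-noise term $2\gamma\varepsilon\,\mathfrak{R}\bigl[X_\ell^*\sum_t E_{t,\ell}\bigr]$ grows and uniformizes the phase once a block exceeds the lengths in \eqref{eq:iDelta}; the branches $5\delta/6$, $1-\delta/6$, $1.5-\delta/2$ of \eqref{def:upsilonDelta} come precisely from combining the within-block geometric bound with a CLT over the $K^{\delta/3}$ (or $K^{\delta-1}$) resulting independent blocks. A pure random-walk-mixing argument cannot reproduce these exponents: the per-segment conditional phase randomness has variance $O(K^{\delta-3})$, giving a decorrelation length $K^{3-\delta}\gg K$ whenever $\delta<2$, i.e.\ no mixing at all within the available $K$ segments.

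A second, related omission is the circularity you gloss over: the phases $\Phi_{i,\ell}$ are not given quantities but are defined through $V_{i,\ell}$, which itself contains the accumulated ISI $\Delta_{i-1}$ and noise that you are trying to bound. The paper resolves this by induction on the segment index: the bound \eqref{eq:inductionAssumption} on $\Delta_i$ is assumed for $i\leq r-1$, used to expand $\Phi_{r,\ell}$ (step $(a)$ of \eqref{eq:phiExpansion}), and then re-established at step $r$, with loss handled via \eqref{eq:lossEffect}. Without this bootstrap even your starting point --- that each increment is approximately $\gamma\varepsilon|X_\ell|^2$ plus small corrections --- is unjustified, since the corrections involve exactly the matrix $\Delta_i$ whose smallness is the conclusion. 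So while your reduction, the factorization of $\mathsf{S}_K$, and the noise limit are sound and match the paper, the proof as proposed does not go through; it needs the induction together with the block decomposition (arithmetic phases within blocks, CLT across blocks) to yield \eqref{def:upsilonDelta}.
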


\begin{proof}
See Section~\ref{pr:mainThSSFM-ex}.
\end{proof}

Theorem~\ref{th:mainThSSFM} will be proved using Lemma~\ref{lem:mainThSSFM-ex} in Section \ref{pr:mainThComp}. The term $O(1)$ in Theorem~\ref{th:mainThSSFM} for $\delta\geq 2$ is given in the proof of Theorem~\ref{th:fadingHighPowerdG1}.

\begin{remark} \label{rem:lowNoise}
For $0\leq\delta \leq 1.5$, the dominant term in $O_p\left(K^{-\upsilon(\delta)}\right)$ is the signal-noise mixing, and for $1.5 \leq \delta \leq 3$, is the  intra-channel interactions (between different indices). For $3 < \delta$ both effects are significant. 
\end{remark}


\subsubsection{Infinite nonlinearity} \label{sec:ssfmGamma}

It is commonly believed that nonlinearity is a distortion that reduces the capacity.
However, in this section we show that when $\gamma \ra \infty$, the continuous-space channel has
at least $n$ real signal DoFs.

\begin{theorem} \label{th:infGamma}
Capacity of the continuous-space SSFM model  satisfies
\begin{IEEEeqnarray}{c}
\lim \limits_{K \rightarrow \infty} \lim \limits_{\gamma  \rightarrow \infty}  \mathcal{C}(\snr, K)= \frac{1}{2}\log_2\left(1+a\,\snr\right)+o(1),
    \end{IEEEeqnarray}
where $a$ is given in Theorem~\ref{th:mainThSSFM} and $o(1)$ term tends to zero with $\snr\rightarrow \infty$.
\end{theorem}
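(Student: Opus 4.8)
The plan is to reduce Theorem~\ref{th:infGamma} to the capacity of the continuous-space fading channel, which is already characterized in Theorem~\ref{th:asymCapFadingFinDis}. The key observation is that the nonlinearity parameter $\gamma$ enters the SSFM model only through the phase $e^{j\Phi_{i,\ell}}$ of the modified nonlinear step, whereas the amplitudes $|V_{i,\ell}+W_{i,\ell}(M)|$ and the dispersion matrix $\mathrm{D}_K$ are independent of $\gamma$. I would first show that, for every fixed $K$ and $\snr$, as $\gamma\ra\infty$ the SSFM channel tends (pointwise in its transition density) to the fading channel \eqref{def:fadingChannel}, and then take $K\ra\infty$ using Theorem~\ref{th:asymCapFadingFinDis}, whose right-hand side is exactly the claimed expression with the same constant $a$.

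The heart of the argument is an equidistribution step. Writing $S_{i,\ell}=\mu\sum_{r=1}^{M}|V_{i,\ell}+W_{i,\ell}(r)|^2$ so that $\Phi_{i,\ell}=\gamma S_{i,\ell}$ in \eqref{eq:Phi-ij}, and conditioning within segment $i$ on $\vc{V}_i$ and the Wiener endpoints $\vc{\bar{Z}}_i=\vc{W}_i(M)$, the residual paths $(W_{i,\ell}(r))_{r<M}$ are Brownian bridges, so $S_{i,\ell}$ has an absolutely continuous conditional law. By the Riemann--Lebesgue lemma its Fourier coefficients $\mathbb{E}[e^{jk\gamma S_{i,\ell}}]=\phi_{S_{i,\ell}}(k\gamma)\ra 0$ for every integer $k\neq0$ as $\gamma\ra\infty$, hence $\Phi_{i,\ell}\bmod 2\pi\ra\mathcal{U}(0,2\pi)$ and becomes independent of the conditioning variables. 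Since the Wiener processes are independent across $(i,\ell)$, the limiting phases are jointly independent, so $\boldsymbol{\theta}_i\ra\mathcal{U}^n(0,2\pi)$ as in \eqref{def:diagMat}, independent across segments. Consequently the nonlinear step tends to $\vc{U}_i=\mathsf{R}(\boldsymbol{\theta}_i)(\vc{V}_i+\vc{\bar{Z}}_i)$, and unrolling the recursion $\vc{V}_{i+1}=\mathrm{D}_K\vc{U}_i$ reproduces exactly the fading channel with $\mathsf{M}_K$ and $\vc{Z}_K$ of \eqref{def:matrixMk}--\eqref{def:Noisek}.

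It remains to pass from convergence of the channel to convergence of capacity, i.e.\ to justify $\lim_{\gamma\ra\infty}\mathcal{C}(\snr,K)=\bar{\mathcal{C}}(\snr,K)$, after which $\lim_{K\ra\infty}\bar{\mathcal{C}}(\snr,K)=\tfrac12\log_2(1+a\,\snr)+o(1)$ follows from Theorem~\ref{th:asymCapFadingFinDis}. The lower-bound direction is routine: fixing any admissible input $P_{\vc{X}}$, pointwise convergence of the output densities together with lower semicontinuity of mutual information gives $\liminf_{\gamma\ra\infty}I_\gamma(\vc{X};\vc{Y}_K)\geq I_{\mathrm{fad}}(\vc{X};\vc{Y}_K)$, and taking the supremum over inputs yields $\liminf_{\gamma\ra\infty}\mathcal{C}(\snr,K)\geq\bar{\mathcal{C}}(\snr,K)$. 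The main obstacle is the matching upper bound $\limsup_{\gamma\ra\infty}\mathcal{C}(\snr,K)\leq\bar{\mathcal{C}}(\snr,K)$, since the capacity-achieving input may depend on $\gamma$ and pointwise density convergence alone does not transfer to the optimized mutual information. I would resolve this by establishing a uniform-in-$\gamma$ integrability bound on the information density, making the interchange of $\lim_\gamma$ and $\sup_{P_{\vc{X}}}$ valid; this exploits that the amplitude statistics and the power constraint are independent of $\gamma$ and that the vanishing Fourier coefficients render the phase asymptotically uninformative. Combining the two directions gives the inner limit, and Theorem~\ref{th:asymCapFadingFinDis} completes the proof.
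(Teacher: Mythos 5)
Your proposal takes essentially the same route as the paper: show that as $\gamma\ra\infty$ the nonlinear phases $\Phi_{i,\ell}$ become uniform modulo $2\pi$ and independent across segments and coordinates, conclude that the SSFM channel tends to the finite-dispersion fading channel, and then invoke Theorem~\ref{th:asymCapFadingFinDis}. The paper's proof is a two-line version of this argument (``it is easy to verify\dots''); your Riemann--Lebesgue/Brownian-bridge justification of the equidistribution and your explicit handling of the interchange of $\lim_{\gamma}$ with $\sup_{P_{\vc X}}$ supply precisely the details the paper leaves implicit, so the proposal is correct and, if anything, more careful than the original.
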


\begin{proof} It is easy to verify that as $\gamma/K\ra \infty$, for any $i\in [K]$ and $\ell\in[n]$, $\Phi_{i,\ell} \ra \infty$, and consequently, $\mod\left(\Phi_{i,\ell} ,2\pi \right)\ra \mathcal{U}(0,2\pi)$, independent of $\Phi_{i',\ell'}$ in any other segment $l'$ or coordinate $i'$. Hence, the SSFM channel tends to a finite dispersion fading channel, and Theorem~\ref{th:asymCapFadingFinDis} yields the result.
\end{proof}


\subsubsection{Infinite dispersion} \label{sec:ssfmInfDisp}

The asymptotic capacity of the discrete-space SSFM channel where $\mathrm D_K \eqdef \mathrm D$ is independent of $K<\infty$ is given in \eqref{eq:asym-cap}. 
In this section, we show that this result holds under the same assumption $\mathrm D_K \eqdef \mathrm D$ for the continuous-space lossless SSFM channel as well. Note that in this case, as $K\ra\infty$ the dispersion is infinite.

\begin{theorem}\label{th:infDispSSFM} 
Consider the SSFM model when $\mathrm D_K \eqdef \mathrm D$ is independent of $K$ and is not a block diagonal matrix of more than one block. If $K = \sqrt[\delta]{\snr}$, $\delta>3$, $\gamma\neq 0$, then
\begin{IEEEeqnarray}{c}
    \mathcal{C}\left(\snr,\sqrt[\delta]{\snr}\right) = \frac{1}{2n} \log_2\left(1+\snr \right)+O(1),
\end{IEEEeqnarray}
where $O(1)$ term is bounded as $\snr \ra \infty$. 
\end{theorem}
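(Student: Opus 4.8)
The plan is to prove that the lossless infinite-dispersion SSFM channel inherits the asymptotic capacity $\frac{1}{2n}\log_2(1+\snr)+O(1)$ of the infinite-dispersion fading channel \eqref{def:fadingChannel} already established in Theorem~\ref{th:infFadCap}; the whole argument then reduces to transferring that capacity across the convergence of the SSFM channel to its fading counterpart under $K=\sqrt[\delta]{\snr}$, $\delta>3$. First I would reuse the phase-uniformization argument underlying Lemma~\ref{lem:mainThSSFM-ex}: when the input escapes to infinity with \snr, the accumulated nonlinear phases $\Phi_{i,\ell}$ become i.i.d.\ uniform modulo $2\pi$, so the SSFM multiplicative operator $\prod_{i=1}^K \mathrm{D}\,\mathsf{R}(\boldsymbol{\Phi}_i)$ converges in distribution to the fading matrix $\mathsf{M}_K=\prod_{i=1}^K \mathrm{D}\,\mathsf{R}(\boldsymbol{\theta}_i)$ of \eqref{def:matrixMk} with $\mathrm{D}_K\eqdef\mathrm{D}$, up to a residual of order $O_p(K^{-\upsilon(\delta)+\epsilon'})$ read from \eqref{def:upsilonDelta}. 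Since $\upsilon(\delta)=1/2$ for $\delta\geq 3$, this residual, encoding the signal-noise mixing and intra-channel interference, is asymptotically negligible relative to the signal, and the SSFM output agrees in the limit with $\mathsf{M}_K\vc{X}+\vc{Z}$, $\vc{Z}\sim\mathcal{N}_{\mathbb{C}}(0,\sigma^2\mathcal{L}\mathrm{I}_n)$.

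For the lower bound I would restrict the input to amplitude modulation with an arbitrary (say uniformly random) direction. In the lossless model the dispersion matrix $\mathrm{D}$ is unitary and the nonlinear step multiplies each coordinate by a unit-modulus phase, so every deterministic operation preserves the Euclidean norm and the only perturbation of $\norm{\vc{X}}$ is the injected Gaussian noise. The induced scalar channel $\norm{\vc{X}}\mapsto\norm{\vc{Y}_K}$ therefore coincides in the limit with the non-central chi channel \eqref{eq:conditionalFadingOutputNorm}, and the achievability bound for that channel from \cite[Theorem~1]{Shev2018}, already invoked in Theorem~\ref{th:infFadCap}, yields $\mathcal{C}(\snr,\sqrt[\delta]{\snr})\geq\frac{1}{2n}\log_2(1+\snr)+O(1)$.

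For the upper bound I would split the mutual information as in \eqref{eq:mutualILossless1}, $I(\vc{X};\vc{Y}_K)=I(\vc{X};\norm{\vc{Y}_K})+I(\vc{X};\hat{\vc{Y}}_K\mid\norm{\vc{Y}_K})$, and bound the two terms exactly as in the proof of Theorem~\ref{th:infFadCap}. The maximum-entropy principle gives $h(\norm{\vc{Y}_K})\leq\frac{1}{2}\log_2(1+\snr)+O(1)$ under the power constraint, while $h(\norm{\vc{Y}_K}\mid\norm{\vc{X}})=O(1)$, so the first term contributes the pre-log $1/(2n)$ after normalization. For the directional term I would first note that at high \snr\ a capacity-approaching input may be taken to escape to infinity at a cost of $O(1)$, so that the reduction of the first step applies; Lemma~\ref{lem:haarMeasure} then shows $\mathsf{M}_K$ tends to the Haar measure on $\mathbb{U}_n$ (using that $\mathrm{D}$ is not block diagonal), and Theorem~\ref{th:independence} makes $\hat{\vc{Y}}_K$ independent of $\vc{X}$ given $\norm{\vc{Y}_K}$, forcing $I(\vc{X};\hat{\vc{Y}}_K\mid\norm{\vc{Y}_K})\ra 0$.

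The main obstacle is that the converse must control the directional term uniformly over all input distributions, not merely for the favorable amplitude-modulation input used in the lower bound. This demands that the SSFM-to-fading convergence hold without the independence assumption attached to the intermediate branches of \eqref{def:upsilonDelta}; inspecting \eqref{def:upsilonDelta}, the rate $\upsilon(\delta)=1/2$ is guaranteed for general (not necessarily i.i.d.) inputs precisely when $\delta\geq 3$, which is the source of the hypothesis $\delta>3$. The delicate point is to show that, in the joint limit $K=\sqrt[\delta]{\snr}\ra\infty$, the simultaneous uniformization of the nonlinear phases and Haar randomization of $\mathsf{M}_K$ wipe out all directional information while the interference residual $O_p(K^{-1/2+\epsilon'})$ never resurrects a second degree of freedom; establishing this balance, and the uniformity of the $O(1)$ gaps in the joint limit, is the technical crux.
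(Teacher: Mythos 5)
Your proposal follows the paper's own route: reduce the SSFM channel to the infinite-dispersion fading channel by showing the nonlinear phases uniformize modulo $2\pi$ for inputs escaping to infinity (an argument valid for general, non-i.i.d.\ inputs precisely when $\delta\geq 3$, which as you note is why the hypothesis $\delta>3$ appears), and then apply the capacity result for that fading channel. The paper simply cites Theorem~\ref{th:infFadCap} at this last step, whereas you inline its proof (the non-central chi-channel lower bound, and the norm/direction decomposition with Lemma~\ref{lem:haarMeasure} and Theorem~\ref{th:independence} for the converse); the substance is the same and your argument is correct.
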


\begin{proof}
Similar to the analysis in \cite{yousefi2016cap}, it can be shown that for input distributions that escape to infinity with $\mathcal{P}=(\sigma^2 \mathcal{L})K^{\delta}$,  $\delta>3$, $V_{i,\ell} =\Omega_p\left(K^{\delta}\right)$ for all $i \in [K]$ and $\ell \in [n]$. Furthermore, in the proof of Lemma~\ref{lem:mainThSSFM-ex} it is shown that in this case $\mod\left(\Phi_{i,\ell},2\pi\right) \ra  \mathcal{U}(0,2\pi)$.  Hence, infinite dispersion SSFM channel tends to the infinite dispersion fading channel. These two steps can be proved alternatively using induction on the output of segment $i \in [K]$ and using Lemma~\ref{lem:haarMeasure} for sufficiently large $K$.  The result then follows from Theorem~\ref{th:infFadCap}.
\end{proof}

\section{Proofs} \label{sec:proof}

\subsection{Proof of Lemma~\ref{lem:asymDistK}}
\label{pr:asymDistK}

For matrices $A$ and $B$ define the commutator $[\mathrm{A},\mathrm{B}] = \mathrm{A}  \mathrm{B}  \mathrm{A}^{-1}  \mathrm{B}^{-1}$. It can be verified with algebraic manipulations that $\mathsf{M}_K$ can be written as
\begin{IEEEeqnarray}{rcl}
\mathsf{M}_K&=&\prod \limits_{i=1}^K \mathrm{D}_K  \mathsf{R}(\boldsymbol{\theta}_i)\nonumber\\
&=&\left\{\prod \limits_{i=2}^{K}  \mathrm{D}_K   \left[\prod \limits_{\ell=i}^K \mathsf{R}(\boldsymbol{\theta}_{\ell}),\mathrm{D}_K\right]\right\}  \mathrm{D}_K   \prod \limits_{\ell=1}^K \mathsf{R}(\boldsymbol{\theta}_{\ell})\nonumber\\
&=&\left\{\prod \limits_{i=2}^{K}  \mathrm{D}_K   \left[ \mathsf{R}\left(\sum \limits_{\ell=i}^K \boldsymbol{\theta}_{\ell}\right),\mathrm{D}_K\right]\right\} \mathrm{D}_K    \mathsf{R}\left(\sum \limits_{\ell=1}^K \boldsymbol{\theta}_{\ell}\right).
\end{IEEEeqnarray}

Since the joint distribution of $\Bigl(\sum_{\ell=i}^K\boldsymbol{\theta}_{\ell}\Bigr)_{i=1}^K$ and $\left(\boldsymbol{\theta}_i\right)_{i=1}^K$  are the same
\begin{IEEEeqnarray}{c}
\mathsf{M}_K\stackrel{d}{=} \left\{\prod \limits_{i=2}^{K} \left( \mathrm{D}_K   \left[ \mathsf{R}\left(\boldsymbol{\theta}_{i} \right),\mathrm{D}_K\right]\right) \right\} \mathrm{D}_K    \mathsf{R}\left(\boldsymbol{\theta}_1\right). \label{eq:MKreWritten}
\end{IEEEeqnarray}
Let 
\begin{IEEEeqnarray}{c}
\mathrm C_m\triangleq\frac{1}{m!}\mathrm{F}^{-1}  
\diag\Bigl( \bigl( (\zeta_{\ell}+jd_{\ell})^m \bigr)_{\ell=1}^n\Bigr)\mathrm{F},
\label{eq:C-i}
\end{IEEEeqnarray}
and $\bar{\mathrm{C}}_m \triangleq  (-1)^m \mathrm{C}_m$. 
Expand $\mathrm{D}_K$ and $\mathrm{D}_K^{-1}$ in $1/K$ using the Taylor's theorem 
\begin{IEEEeqnarray}{rcl}
\mathrm{D}_K&=&\mathrm{I}_n+\frac{1}{K}\mathrm{C}_1+\frac{1}{K^2}\mathrm{C}_2+O\left(\frac{1}{K^3}\right), \label{eq:Dk}\\
\mathrm{D}_K^{-1}&=&\mathrm{I}_n+\frac{1}{K}\bar{\mathrm{C}}_1+\frac{1}{K^2}\bar{\mathrm{C}}_2+O\left(\frac{1}{K^3}\right). \nonumber
\end{IEEEeqnarray}

A simple calculation shows that
\begin{IEEEeqnarray}{c}
 [\mathsf{R}(\boldsymbol{\theta}_i),D_K]=\mathrm{I}_n+\frac{\mathsf{L}(\boldsymbol{\theta}_i)+ \bar{\mathrm{C}}_1}{K}+O_p\left(\frac{1}{K^2}\right).
\end{IEEEeqnarray}
Using \eqref{eq:Dk} and $\mathrm{C}_1+\bar{\mathrm{C}}_1=0$,
\begin{IEEEeqnarray}{c}
\mathrm{D}_K   [\mathsf{R}(\boldsymbol{\theta}_i),D_K]=\mathrm{I}_n+\frac{\mathsf{L}(\boldsymbol{\theta}_i)}{K}+O_p\left(\frac{1}{K^2}\right).
\end{IEEEeqnarray}
Combining \eqref{eq:MKreWritten} and the above relation results in
\begin{IEEEeqnarray}{rcl}
  \mathsf{M}_K &=& \left\{\prod \limits_{i=2}^{K} \left( \mathrm{I}_n+\frac{\mathsf{L}(\boldsymbol{\theta}_i)}{K}+O_p\left(\frac{1}{K^2}\right) \right) \right\}  \mathrm{D}_K    \mathsf{R}\left(\boldsymbol{\theta}_1\right)\nonumber \\
  &=& \left\{\prod \limits_{i=2}^{K} \left( \mathrm{I}_n+\frac{\bar{\mathsf{L}}}{K}+\frac{\mathsf{L}(\boldsymbol{\theta}_i)-\bar{\mathsf{L}}}{K} \right) \right\}      \mathsf{R}\left(\boldsymbol{\theta}_1\right)+O_p\left(\frac{1}{K}\right)\nonumber \\
  &=& \left(\left\{\prod \limits_{i=2}^{K} \left( \mathrm{I}_n+\frac{\bar{\mathsf{L}}}{K} \right) \right\}+\frac{\sum \limits_{i=2}^K\left(\mathsf{L}(\boldsymbol{\theta}_i)-\bar{\mathsf{L}}\right)}{K}\right)      \mathsf{R}\left(\boldsymbol{\theta}_1\right)\nonumber\\
  &&+O_p\left(\frac{1}{K}\right)\nonumber\\
  &\stackrel{(a)}{=}& \left(e^{\bar{\mathsf{L}}}+\frac{1}{K}\sum \limits_{i=1}^{K}\mathsf{L}(\boldsymbol{\theta}_i)-\bar{\mathsf{L}}\right)  \mathsf{R}\left(\boldsymbol{\theta}_1\right)+O_p\left(\frac{1}{K}\right).
\end{IEEEeqnarray}
where $(a)$ is obtained using 
\begin{IEEEeqnarray}{c}
\left(1+\frac{\bar{\mathsf{L}}}{K}\right)^K = e^{\bar{L}}+O\left(\frac{1}{K}\right).
\end{IEEEeqnarray}

Finally, since $e^{\bar{\mathsf{L}}}=e^{(\zeta+jd)\mathrm{I}_n}=e^{\zeta+jd}\mathrm{I}_n$,
\begin{IEEEeqnarray*}{c}
 \mathsf{M}_K \stackrel{d}{=}\left(e^{\zeta+jd}\mathrm{I}_n+\frac{1}{K}\sum \limits_{i=1}^{K}\mathsf{L}(\boldsymbol{\theta}_i)-\bar{\mathsf{L}}\right)\mathsf{R}\left(\boldsymbol{\theta}\right)+O_p\left(\frac{1}{K}\right).
\end{IEEEeqnarray*}

\subsection{Proof of Theorem~\ref{th:fadingHighPowerdG1}} \label{pr:fadingHighPowerdG1}

In the following, we restrict the input to the class of  absolutely continuous random vectors, for which $h\left(\mathsf{M}_K \vc{X}\right)$
is a continuous function of $K$ with respect to the total variation distance \cite[Theorem~1]{GhourchianGohariAmini17}. First, we prove that:

\textit{i.}  If $\vc X$ is i.i.d, then
\begin{IEEEeqnarray}{rcl}
\frac{1}{n} I(\vc{X};\vc{Y}_K)&\geq& \frac{1}{2\delta} \log_2(\mathcal{P})+ h\left(|X_1|\right)-\mathbb{E}\left[\log_2\left(\norm{\mathbf{X}}_4\right)\right]\nonumber \\
&& +\frac{1}{2} \log_2\left(\frac{    e^{2\zeta-1}}{2\rho\pi} \right)+o(1),
\end{IEEEeqnarray} where the $o(1)$ term vanishes with $\mathcal{P}  \ra\infty$ and 
\begin{IEEEeqnarray}{c}
\rho \triangleq
\sqrt{\sum \limits_{r=1}^n \Bigg| \sum \limits_{s=1}^n (\zeta_s+jd_s)e^{-\frac{j2\pi rs}{n}}\Bigg|^4}.
\end{IEEEeqnarray} 
Next, using this general lower bound we obtain the following.

\textit{ii.} By choosing $\vc{X}\sim \mathcal{N}_{\mathbb{C}}(0,\mathcal{P}\mathrm{I}_n)$,  we have
\begin{IEEEeqnarray}{cl}
 \bar{\mathcal{C}}(\snr, \sqrt[\delta]{\snr}) \geq &\frac{1}{2 \delta} \log_2\left(1+\snr\right)+\frac{1}{2} \log_2\left(\frac{e^{2\zeta+1}}{\rho\pi\sqrt{8n} }\right)\nonumber \\
 &+o(1). \IEEEeqnarraynumspace \label{eq:fadingLower}
\end{IEEEeqnarray}

\textit{Part i:}
Define the matrix $\Delta \eqdef \frac{1}{K}\sum \limits_{m=1}^{K}\mathsf{L}(\boldsymbol{\theta}_m)-\bar{\mathsf{L}}$.
Considering \eqref{eq:Ltheta},  $\Delta_{rr}$ is deterministic and thus zero. If $r\neq \ell$,
\begin{IEEEeqnarray}{rcl}
\Delta_{\ell r}&=&
\frac{(C_1)_{\ell r}}{K}\sum\limits_{i=1}^{K}e^{j(\theta_{i,\ell}-\theta_{i,r})} \nonumber
\\
&\stackrel{d}{\ra}& \frac{ (C_1)_{\ell r}}{\sqrt{K}}T_{\ell r}
\end{IEEEeqnarray}
where $T_{\ell r}\sim\mathcal{N}_{\mathbb{C}}(0,1)$ and we used the central limit theorem.
This yields 
\begin{IEEEeqnarray}{c}
  \mathsf{M}_K \stackrel{d}{=}  e^{\zeta} \mathsf{R}\left(\boldsymbol{\theta}\right)+O_p\left(\frac{1}{\sqrt{K}}\right).
\end{IEEEeqnarray}
Hence,
\begin{IEEEeqnarray}{rcl}
h(\mathbf{Y}_K)&=&h(\mathsf{M}_K  \mathbf{X})+o(1)\nonumber\\
&=& n\log_2\left( e^{\zeta} \mathsf{R}(\boldsymbol{\theta})\mathbf{X}\right)+o(1) \nonumber\\
&=&n\log_2\left( 2\pi e^{2\zeta} \right)+ n h\left(|X_1|\right)+n\mathbb{E}\left[\log_2\left(|X_1|\right)\right]+o(1),\nonumber\\ \label{eq:outputEntropy}
\end{IEEEeqnarray}
where $o(1)$ term vanishes with $K \ra \infty$ and $\mathcal{P}/K\ra \infty$.

Next, we bound the conditional entropy part as:
\begin{IEEEeqnarray}{c}
h(\mathbf{Y}_K|\mathbf{X})\leq  \sum \limits_{\ell=1}^n h(Y_{K,\ell}|\mathbf{X}). \label{eq:cEntropySum}
\end{IEEEeqnarray}

The output $Y_{K,\ell}$ is equal in probability to 
\begin{IEEEeqnarray}{c}
 e^{\zeta+j(d+\theta_{\ell})}X_{\ell}+\frac{e^{j\theta_{\ell}}}{\sqrt{K}}\sum \limits_{r\neq \ell}(\mathrm{C}_{1})_{\ell r} T_{\ell r} X_r+\sqrt{\eta}~Z_{\ell}+o_p(1),\nonumber\\ \end{IEEEeqnarray}
where $\theta_{\ell} \sim \mathcal{U}(0,2\pi)$, $\eta$ is given in \eqref{def:eta}, and $o_p(1)$ term vanishes as $K \ra \infty$. Note that  for a fixed $\ell$,  $(T_{\ell,r})_r$  are independent. Hence, given $\mathbf{X}=\mathbf{x}$,
\begin{IEEEeqnarray}{c}
\sum \limits_{r\neq \ell}(\mathrm{C}_{1})_{\ell r} x_r T_{\ell r} \stackrel{d}{=} \norm{\mathbf{x}}_4 T_{\ell},\end{IEEEeqnarray}
 where $T_{\ell} \sim \mathcal{N}_{\mathbb{C}}(0,\sigma^2_{T_{\ell}})$, and 
\begin{IEEEeqnarray}{rcl}
\sigma^2_{T_{\ell}}&=&\frac{1}{\norm{\mathbf{x}}_4^2}  \sum \limits_{r\neq \ell}|(\mathrm{C}_{1})_{\ell r}|^2 |x_r|^2 \nonumber \\
&\stackrel{(a)}{\leq}&  \sqrt{\sum \limits_{r\neq \ell} |(\mathrm{C}_{1})_{\ell r}|^4} \nonumber \\
&\stackrel{(b)}{\leq}& \rho.
\end{IEEEeqnarray}
Step $(a)$ is derived using the
Cauchy–Schwarz inequality. Step $(b)$ follows from the structure of $\mathrm{C}_1$. Thus, conditioned on $\vc{X}=\vc{x}$,
\begin{IEEEeqnarray}{c}
Y_{K,\ell}\stackrel{d}{=} e^{\zeta+j\theta_{\ell}} x_{\ell}+\frac{e^{j\theta_{\ell}}\norm{\mathbf{x}}_4}{\sqrt{K}}T_{\ell}+\sqrt{\eta}~Z_{\ell}+o_p(1). \end{IEEEeqnarray}

Now, the conditional entropy can be bounded as following 
\begin{IEEEeqnarray}{rcl}
 h(Y_{K,\ell}|\mathbf{X}) &\leq &  h\left(e^{\zeta+j\theta_{\ell}} X_{\ell}+\frac{e^{j\theta_{\ell}}\norm{\mathbf{X}}_4}{\sqrt{K}}T_{\ell}\Big|\mathbf{X}\right)+O\left(\sqrt{\frac{K}{\mathcal{P}}}\right)\nonumber \\
&= & \log_2(\pi)+O\left(\sqrt{\frac{K}{\mathcal{P}}}\right)\nonumber\\
&&  +h\left(\Bigg| e^{\zeta} X_{\ell}+\frac{\norm{\mathbf{X}}_4}{\sqrt{K}}T_{\ell} \Bigg|^2\Big| \mathbf{X}\right) \nonumber\\
&=&  \log_2(\pi)+O\left(\sqrt{\frac{K}{\mathcal{P}}}\right)\nonumber\\
&&+h\bigg(\frac{2e^{\zeta} |X_{\ell}|\norm{\mathbf{X}}_4}{\sqrt{K}}\mathfrak{R}\left(T_{\ell} e^{-j\angle X_{\ell}}\right)+\frac{\norm{\mathbf{X}}_4^2}{K}|T_{\ell}|^2\Big| \mathbf{X}\bigg)\nonumber\\
&=& \log_2(\pi)+O\left(\sqrt{\frac{K}{\mathcal{P}}}\right)+O\left(\frac{1}{\sqrt{K}}\right)\nonumber\\
&&+h\left(\frac{2e^{\zeta} |X_{\ell}|\norm{\mathbf{X}}_4}{\sqrt{K}}\mathfrak{R}\left(T_{\ell} e^{-j\angle X_{\ell}}\right)\Big| \mathbf{X}\right)\nonumber\\
&=&  \log_2\left(\frac{2e^{\zeta}\pi}{\sqrt{K}}\right)+O\left(\sqrt{\frac{K}{\mathcal{P}}}\right)+O\left(\frac{1}{\sqrt{K}}\right)\nonumber\\
&&+\mathbb{E}\left[\log_2\left(|X_{\ell}|\norm{\mathbf{X}}_4\right)\right]+ h\left(\mathfrak{R}(T_{\ell})\big| \mathbf{X}\right)\nonumber\\
&\leq&  \log_2\left(\frac{2e^{\zeta}\pi}{\sqrt{K}}\right)+O\left(\sqrt{\frac{K}{\mathcal{P}}}\right)+O\left(\frac{1}{\sqrt{K}}\right)\nonumber\\
&&+ \mathbb{E}\left[\log_2\left(|X_{\ell}|\norm{\mathbf{X}}_4\right)\right]+\frac{1}{2}\log_2\left(2\pi e \rho  \right).
\end{IEEEeqnarray}
This relation together with \eqref{eq:outputEntropy} and \eqref{eq:cEntropySum}  yields the first part of the theorem.

\paragraph*{Part ii}For $\vc{X}\sim \mathcal{N}_{\mathbb{C}}(0,\mathcal{P}\mathrm{I}_n)$, 
\begin{IEEEeqnarray}{c}
h\left(\mathsf{R}(\boldsymbol{\theta})\mathbf{X}\right)=h\left(\mathbf{X}\right)=n\log_2\left( 2\pi e \mathcal{P} \right). \label{eq:outputEntropyGaussian}
\end{IEEEeqnarray}
 Moreover, 
\begin{IEEEeqnarray}{rcl}
&\mathbb{E}&\left[\log_2\left(|X_1|\right)+\log_2\left(\norm{\mathbf{X}}_4\right)\right]\nonumber\\
&=& \frac{1}{2}\log_2\left(\mathbb{E}\left[|X_1|^2\right]\right)+\frac{1}{4}\log_2\left(\mathbb{E}\left[\norm{\mathbf{X}}_4^4\right]\right)+O\left(\frac{1}{\mathcal{P}}\right)\nonumber \\
&=&\frac{1}{2} \log_2(\mathcal{P})+\frac{1}{4}\log_2\left(2n\mathcal{P}^2\right)+O\left(\frac{1}{\mathcal{P}}\right)\nonumber \\
&=&\log_2(\mathcal{P})+\frac{1}{4}\log_2\left(2n\right)+O\left(\frac{1}{\mathcal{P}}\right).
\end{IEEEeqnarray}
This, together with \eqref{eq:outputEntropyGaussian} and the first part of theorem implies that
\begin{IEEEeqnarray*}{c}
\frac{1}{n} I(\vc{X};\vc{Y}_K) \geq\frac{1}{2} \log_2(K)+\frac{1}{2} \log_2\left(\frac{e^{2\zeta+1}}{ \rho \pi \sqrt{8n}}\right)+o(1),
\end{IEEEeqnarray*}where $o(1)$ term tends to zero with $K\rightarrow \infty $ and $\mathcal{P}/K \rightarrow \infty$. Setting $K = \sqrt[\delta]{\snr}\ra\infty$, 
we obtain
\begin{IEEEeqnarray*}{rcl}
 \frac{1}{2 \delta} \log_2\left(1+\snr\right)+\frac{1}{2} \log_2\left(\frac{e^{2\zeta+1}}{\rho\pi\sqrt{8n} }\right)
 &\leq& \frac{1}{n} \lim_{K\ra\infty}I(\vc{X};\vc{Y}_K) 
\\
&= & \frac{1}{n} I(\vc{X}; \lim_{K\ra\infty}\vc{Y}_K) 
\end{IEEEeqnarray*}
Since the left hand side does not depend on $p_X(x)$, we obtain \eqref{eq:fadingLower}. The last equality above follows from the continuity of $I(\vc X, \vc Y_K) $ for the  SSFM channel as a function of $K$ at  $K\ra\infty$, shown in Lemma~\ref{lemm:continuity}.


\subsection{Proof of Lemma~\ref{lem:haarMeasure}} \label{pr:haarMeasure}

The proof is based on Theorem~\ref{th:kawadaItoStromberg} in Appendix~\ref{sec:math}.  The reader is referred to Appendix~\ref{sec:math} for the notation used in this section. 

Let $\mathsf{T}\triangleq\mathrm{D}\mathsf{R}(\boldsymbol{\theta})$. Clearly, $\mathsf{T}$ is a unitary matrix. Denote the probability measure of $ \mathsf{T}$ by $\mu$.  
We show that the following two conditions of  Theorem~\ref{th:kawadaItoStromberg} hold. 

\begin{itemize}[leftmargin=*,wide]
    \item[] \emph{Condition 1.} Denote the smallest closed subgroup of $\mathbb{U}_n$ that contains support of $\mu$, \textit{i.e.} $\mathcal{S}(\mu)$, as $\mathbb{H}$. Moreover, denote the smallest subgroup of block diagonal matrices that contains $\mathrm{D}$ as $\mathbb{U}_n(\nu)$. 
    
    The first condition to verify is $\mathbb{H}=\mathbb{U}_n(\nu)$. 
    This condition is needed, because if $\mathbb{H} \subset \mathbb{U}_n(\nu)$, then the product of instances of $\mathsf{T}$ will not be in $\mathbb{H}$. Hence, the probability measure of the product of $K$ i.i.d. instances of $\mathsf{T}$ would not be a Haar measure on $\mathbb{U}_n(\nu)$.

    By letting $\boldsymbol{\theta}=(0,\ldots,0)$, we have $\mathrm{D} \in \mathbb{H}$, and thus $\mathrm{D}^{-1} \in \mathbb{H}$, and $\mathrm{D}^{-1}  \mathrm{D}  \mathsf{R}\boldsymbol(\theta)=\mathsf{R}\boldsymbol(\theta)\in \mathbb{H} $. Hence, $\mathbb{H}$ contains all diagonal unitary matrices including the matrix $\mathrm{D}$. Since the smallest block diagonal subgroup that contains $D$ is $\mathbb{U}_n(\nu)$, Theorem~\ref{th:Borevich} in Appendix~\ref{sec:math} implies $\mathbb{H}=\mathbb{U}_n(\nu)$.

    \item[]\emph{Condition 2.} The next condition to verify is that $\mu$ is not normally aperiodic. This means that $\mathcal{S}(\mu)$ is not contained in a (left or right) coset of a proper closed normal subgroup of $\mathbb{U}_n(\nu)$. To see why this condition is needed, by contradiction suppose that there exists a proper closed normal subgroup $\mathbb{H}$ of $\mathbb{U}_n(\nu)$ and $\mathrm{V} \in \mathbb{U}_n(\nu)$, such that $\mathcal{S}(\mu) \subseteq \mathrm{V}\mathbb{H}$ or $\mathcal{S}(\mu) \subseteq \mathbb{H}\mathrm{V}$, or equivalently $\mathrm{V}^{-1}\mathcal{S}(\mu) \subseteq \mathbb{H}$ or $\mathcal{S}(\mu) \mathrm{V}^{-1} \subseteq \mathbb{H}$. Suppose that  
    \begin{IEEEeqnarray}{c}
    V^{-1}=
        \mathrm{D} \mathrm{R}(\theta_r)  \cdots   \mathrm{D} \mathrm{R}(\theta_1).  \label{eq:firstRsteps}
    \end{IEEEeqnarray}
    Consider all matrices 
    $\mathsf M=\prod\limits_{i=r+1}^{\infty}\mathsf{D} \mathrm{R}(\boldsymbol{\theta}_j) V^{-1}$.
    The second condition states that the smallest closed normal subgroup that contains these matrices is $\mathbb{U}_n(\nu)$. In other words,  starting from any $r$ initial steps, all possible unitary matrices in $\mathbb{U}_n(\nu)$ can be reached.

    To verify the second condition, we consider the cases $\mathrm{V}^{-1}\mathcal{S}(\mu) \subseteq \mathbb{H}$ or $\mathcal{S}(\mu)\mathrm{V}^{-1} \subseteq \mathbb{H}$ separately.
    
    Left Coset: In this case, $\mathrm{V}^{-1}\mathrm{D}$ and the subgroup of diagonal matrices belong to $\mathbb{H}$. Suppose that there exists a $\mathrm{W} \in \mathbb{U}_n(\nu)$ such that $\mathrm{W} \notin \mathbb{H}$ and $\mathrm{W}=\mathrm{Q}\Gamma \mathrm{Q}^{-1}$, where $\Gamma$ is a diagonal matrix. However, since $\mathbb{H}$ is a normal subgroup of $\mathbb{U}_n(\nu)$ and $\Gamma \in \mathbb H$, then $\mathrm{W}=\mathrm{Q}\Gamma \mathrm{Q}^{-1} \in \mathbb{H}$, which is a contradiction.

    Right Coset: Since $\mathrm{D}\mathsf{R}(\boldsymbol{\theta}_1)\mathrm{V}^{-1} \in \mathbb{H}$ and $\mathrm{D}\mathsf{R}(\boldsymbol{\theta}_2)\mathrm{V}^{-1}\in \mathbb{H}$, we have $\mathrm{D}\mathsf{R}(\boldsymbol{\theta}_1-\boldsymbol{\theta}_2)\mathrm{D}^{-1} \in \mathbb{H}$. Hence, for any  $\boldsymbol{\theta}$,  $\mathrm{D}\mathsf{R}(\boldsymbol{\theta})\mathrm{D}^{-1} \in \mathbb{H}$. Similar to the previous case, by contradiction suppose that there exists $\mathrm{W} \in \mathbb{U}_n(\nu)$ such that $\mathrm{W} \notin \mathbb{H}$ and $\mathrm{W}=\mathrm{Q}\Gamma \mathrm{Q}^{-1}$, where $\Gamma$ is a diagonal matrix. However, since $\mathbb{H}$ is a normal subgroup of $\mathbb{U}_n(\nu)$, $\mathrm{Q}\mathrm{D}^{-1} \in \mathbb{U}_n(\nu)$, and $ \mathrm{D}\Gamma \mathrm{D}^{-1} \in \mathbb{H}$, thus $\mathrm{W}=(\mathrm{Q}\mathrm{D}^{-1}) \mathrm{D}\Gamma \mathrm{D}^{-1} (\mathrm{D}\mathrm{Q}^{-1}) \in \mathbb{H}$, which is a contradiction.
\end{itemize}


\subsection{Proof of Lemma~\ref{lem:mainThSSFM-ex}} \label{pr:mainThSSFM-ex}

The proof of Lemma~\ref{lem:mainThSSFM-ex} is similar to the proof of Lemma~\ref{lem:asymFadingK}, where $M_K$ is expanded in $1/K$. 

Note that, if $D_K$ does not depend on $K$, when $\mathcal P\ra\infty$, phase tends to a uniform random variable in every segment for every input \cite{yousefi2016cap}. However, if dispersion values scale as $1/K$ and $\mathcal{P}=(\sigma^2 \mathcal{L})K^{\delta}$, phase tends to zero in one segment if $\delta<1$. But, if we add sufficiently large number $K^{1-\delta+0^+}$ of segments, so that the variance of phase tends to infinity, output phase tends to a uniform variable for every input. In what follows, we make these statements precise.

We prove the lemma formally by induction on the segment index $i$. The output $\vc{V}_{i+1}$ of the segment $i\in [K]$ as a function of the channel input $X$ is 
\begin{IEEEeqnarray}{c}
    \mathbf{V}_{i+1}=\mathsf{M}_{i,K}\vc{X}+\vc{Z}_{i}, \label{eq:ViX}
\end{IEEEeqnarray}
where
\begin{IEEEeqnarray}{rCl}
    \mathsf{M}_{i,K}
    &\triangleq& \prod \limits_{t=1}^{i} \left( \mathrm{D}_K \mathsf{\Phi}(t) \right)
    \nonumber\\
    &=&\mathrm{D}_K \mathsf{\Phi}(i) \mathsf{M}_{i-1,K},
\end{IEEEeqnarray}
in which
\begin{IEEEeqnarray}{c}
\mathsf{\Phi}(i)\triangleq \diag\Bigl( \bigl(\exp(j\Phi_{i,\ell}) \bigr)_{\ell=1}^n \Bigr),  \end{IEEEeqnarray}
where the nonlinear phase $\Phi_{i,\ell}$ is given in \eqref{eq:Phi-ij}. Further,
\begin{IEEEeqnarray*}{rCl}
    \vc{Z}_{i} &\triangleq&  \sum \limits_{t=1}^i \left( \left(\prod \limits_{s=t}^{i} \left( \mathrm{D}_K  \mathsf{\Phi}(s)\right)\right) \vc{\bar{Z}}_t\right)
    \\ &=& \mathrm{D}_K  \mathsf{\Phi}(i)\left(\vc{\vc{Z}_{i-1}+\bar{Z}}_i\right), 
\end{IEEEeqnarray*}
where $ \mathsf{M}_{0,K} \triangleq \mathrm{I}_n$ and $\vc{Z}_0\triangleq 0$. Note that $\mathsf{M}_{K,K}=\mathsf{M}_K$.

First, we expand $\mathsf{M}_{i,K}$ similar to the analysis in the proof of Lemma~\ref{lem:asymDistK}. For $t \in [i]$, denote 
\begin{IEEEeqnarray*}{c}
\mathsf{R}_i(t)\triangleq \diag\left( \Bigl( \exp\bigl( j \sum \limits_{s=t}^i \Phi_{s,\ell} \bigr)\Bigr)_{\ell=1}^n \right),
\end{IEEEeqnarray*}
\begin{IEEEeqnarray}{c}
\mathsf{L}_i(t)\triangleq \mathsf{R}_i(t) \mathrm{C}_1 \mathsf{R}_i(t)^{-1},\end{IEEEeqnarray}
and
\begin{IEEEeqnarray}{c}
\Delta_i \triangleq \frac{1}{K}\sum \limits_{t=1}^{i}\mathsf{L}_i(t)-\frac{i}{K}(\zeta+jd)\mathrm{I}_n,
\end{IEEEeqnarray}
where $\mathrm{C}_1$ is defined in \eqref{eq:C-i}. Expand $\mathsf{M}_{i,K}$ as:
\begin{IEEEeqnarray}{rcl}
\mathsf{M}_{i,K}
&=&\left\{\prod \limits_{t=2}^{i} \left( \mathrm{D}_K   \left[\mathsf{R}_i(t) ,\mathrm{D}_K\right]\right)\right\}  \mathrm{D}_K \mathsf{R}_i(1)\nonumber\\
&=& \left\{\prod \limits_{t=2}^{i} \left( \mathrm{I}_n+\frac{1}{K}\mathsf{L}_i(t)\right)\right\} \mathsf{R}_i\left(1\right)+O_p\left(K^{-1}\right)\nonumber \\
  &=& \left(e^{\frac{i}{K}(\zeta+jd)}\mathrm{I}_n+\frac{1}{K}\sum \limits_{t=1}^{i}\mathsf{L}_i(t)-\frac{i}{K}(\zeta+jd)\mathrm{I}_n\right)  \mathsf{R}_i\left(1\right)\nonumber \\
  &&+O_p\left(K^{-1}\right)\nonumber\\
   &=& e^{\frac{i}{K}(\zeta+jd)}\mathsf{R}_i(1)+\Delta_i \mathsf{R}_i(1)+O_p\left(K^{-1}\right).
  \label{eq:Mr-K}
\end{IEEEeqnarray}
Note that $\mathsf{R}_K(1)=\mathsf{S}_K$.

Fix $\epsilon'>0$ sufficiently small. We shall prove that for each $i$:

$\bullet$ \emph{Claim 1.} For $\ell,\ell' \in [n]$,
\begin{IEEEeqnarray}{c}
    \left(\Delta_i\right)_{\ell,\ell'}=O_p\left(K^{-\underline{\upsilon}(\delta)}\right), \label{eq:inductionAssumption}
\end{IEEEeqnarray}
where $\underline{\upsilon}(\delta)\geq \upsilon(\delta)-\epsilon'$ and $\upsilon(\delta)$ is defined in \eqref{def:upsilonDelta}.

$\bullet$ \emph{Claim 2.} 
For $\ell \in [n]$ and $t \in [i]$,
\begin{IEEEeqnarray}{c}
    \sum \limits_{s=t}^i \Phi_{s,\ell}-(i-t+1)\gamma \varepsilon |X_{\ell}|^2 \stackrel{p}{\ra} 0, \label{eq:arithSum}
\end{IEEEeqnarray}
 when 
\begin{IEEEeqnarray}{c}
    i \leq \begin{cases} 
    \IEEEstrut
    K^{1-\frac{\delta}{3}-\epsilon'},& 0 \leq \delta \leq 1.5,\\
    K^{2-\delta-\epsilon'},& 1.5 \leq \delta \leq 2.
    \IEEEstrut
    \end{cases}\label{eq:iDelta}
\end{IEEEeqnarray}

$\bullet$ \emph{Claim 3.} 
If $i$ satisfies \eqref{eq:iDelta}, then $\underline{\upsilon}(\delta)$ in \eqref{eq:inductionAssumption} is bounded by
    \begin{IEEEeqnarray}{c}
   \underline{\upsilon}(\delta)\geq \begin{cases} 
    \IEEEstrut
    \delta,& 0 \leq \delta \leq 1,\\
    1-g,& 1 \leq \delta \leq 2,
    \IEEEstrut
    \end{cases}\label{eq:uDelta}
\end{IEEEeqnarray}
where 
 \begin{IEEEeqnarray}{c}
g \triangleq \max \limits_{l,l' \in [n]} -\log_K\left(e^{j(|X_l|^2-|X_{l'}|^2)/K}-1\right). \label{def:g}
\end{IEEEeqnarray}
Note that $g=o_p(1)$, and thus vanishes for absolutely continuous inputs, when $\delta>1$.

Lemma~\ref{lem:mainThSSFM-ex} follows  from \eqref{eq:Mr-K} at $i=k$ and  Claim 1. Claim 2 and 3 are needed in the proof of Claim 1.

Note that above Claim 1--3 yield
\begin{IEEEeqnarray}{c}
    \lim \limits_{K \ra \infty}\vc{Z}_K \stackrel{(d)}{=}
  \vc{Z},
 \end{IEEEeqnarray}
where $\vc{Z}\in \mathbb{C}^n$ and $\vc{Z}\sim \mathcal{N}_{\mathbb{C}}\left(0,\eta \sigma^2\mathcal{L}\mathrm{I}_n\right)$.

For $i=1$, Claim 1-3 hold, since
\begin{IEEEeqnarray}{c}
    \mathsf{M}_{1,K}= \mathrm{D}_K \mathsf{\Phi}(1)=e^{\frac{1}{K}(\zeta+jd)}\mathsf{\Phi}(1)+O_p\left(K^{-1}\right).
\end{IEEEeqnarray}
Assume that Claim 1-3 hold for $i\in [r-1]$. We need to show that they hold for $i=r$ as well. 

Denote 
\begin{IEEEeqnarray}{c}
    \vc{E}_t\triangleq e^{-(\zeta+jd)\frac{ t-1}{K}} \mathsf{R}_{t-1}^*(1) \vc{\bar{Z}}_t=O_p\left(K^{-1/2}\right). \label{def:Et}
\end{IEEEeqnarray}
Using the assumption of the induction \eqref{eq:inductionAssumption} together with \eqref{eq:Mr-K}  and \eqref{eq:ViX}, the nonlinear phase $\Phi_{i+1,\ell}$, where $i\in [r-1]$, can be expanded as:
\begin{IEEEeqnarray}{rcl}
    \Phi_{i+1,\ell} 
    &=&\gamma \varepsilon |V_{i+1,\ell}|^2+2 \gamma \varepsilon \sqrt{\varepsilon} \mathfrak{R}(V_{i+1,\ell}^* Z_{i+1,\ell}^{'})+\gamma \varepsilon^2|Z^{''}_{i+1,\ell}|^2 \nonumber \\
     &=&\gamma \varepsilon |V_{i+1,\ell}|^2+O_p\left(K^{\frac{\delta-3}{2}}\right) \nonumber \\
    &\overset{(a)}{=}& \gamma \varepsilon e^{\frac{2i}{K}\zeta}|X_{\ell}|^2+O_p\left(K^{\frac{\delta-3}{2}}\right)\nonumber\\
    &&+2\gamma \varepsilon {\sum \limits_{\ell'\neq \ell}} \mathfrak{R}\left[\left(e^{\frac{ i}{K}(\zeta+jd)}\mathsf{R}_{i}(1)\vc{X}\right)_{\ell}^* (\Delta_{i,K})_{\ell,\ell'}X_{\ell'} \right]\nonumber\\
    &&+2\gamma \varepsilon \mathfrak{R}\left[\left(e^{\frac{ i}{K}(\zeta+jd)}\mathsf{R}_{i}(1)\vc{X}\right)_{\ell}^* Z_{i,\ell} \right] \nonumber \\
    &=& \gamma \varepsilon e^{\frac{2i}{K}\zeta}|X_{\ell}|^2+O_p\left(K^{\frac{\delta-3}{2}}\right)\nonumber\\
    &&{+}2\gamma \varepsilon {\sum \limits_{\ell'\neq \ell}} {\mathfrak{R}}\left[\left(e^{\frac{ i}{K}(\zeta+jd)}\mathsf{R}_{i}(1)\vc{X}\right)_{\ell}^* (\Delta_{i,K})_{\ell,\ell'}X_{\ell'} \right] \nonumber \\
    &&+2\gamma \varepsilon e^{\frac{2i}{K}\zeta} \mathfrak{R}\left[X_{\ell}^* \sum \limits_{t=1}^i  E_{t,\ell} \right]. \label{eq:phiExpansion}
\end{IEEEeqnarray}
Here step $(a)$ follows by substituting \eqref{eq:ViX} and   \eqref{eq:Mr-K} into the previous line.
Variables $Z'_{i+1,\ell}$ and $Z^{''}_{i+1,\ell}$ denote Gaussian noises with variances that do not depend on $K$ and $E_{t,\ell}$ is defined in \eqref{def:Et}.

The term
\begin{IEEEeqnarray}{c}
    2\gamma \varepsilon \sum \limits_{\ell'\neq \ell} \mathfrak{R}\left[\left(e^{\frac{ i}{K}(\zeta+jd)}\mathsf{R}_{i}(1)\vc{X}\right)_{\ell}^* (\Delta_{i,K})_{\ell,\ell'}X_{\ell'} \right],
\end{IEEEeqnarray}
captures intra-channel interactions, while the terms
\begin{IEEEeqnarray}{c}
    2\gamma \varepsilon e^{\frac{2i}{K}\zeta} \mathfrak{R}\left[X_{\ell}^* \sum \limits_{t=1}^i  E_{t,\ell} \right],
\end{IEEEeqnarray} 
and
\begin{IEEEeqnarray}{c}
    2 \gamma \varepsilon \sqrt{\varepsilon} \mathfrak{R}\left(\left(e^{\frac{ i}{K}(\zeta+jd)}\mathsf{R}_{i}(1)\vc{X}\right)_{\ell}^* Z_{i+1,\ell}^{'}\right),
\end{IEEEeqnarray}
represent the signal-noise interactions.

To show Claims~1 and 3, note that $\left(\Delta_r\right)_{\ell,\ell}$, $\ell \in [n]$, is equal to zero, with probability one. It remains to show that the off-diagonal elements are $O_p\left(K^{-\underline{\upsilon}(\delta)}\right)$. We show this for element $(1,2)$; the proof is similar for other elements. This element is equal to
\begin{IEEEeqnarray}{c}
    (\Delta_r)_{1,2}=\frac{(\mathrm{C}_1)_{1,2}}{K}\sum \limits_{i=1}^{r}e^{j \sum \limits_{s=i}^r \left(\Phi_{s,1}-\Phi_{s,2}\right)}. \label{eq:phiGeneral}
\end{IEEEeqnarray}

The rest of the proof is presented for different ranges of $\delta$, and for $\zeta=0$, separately. Since in general, for $i \in [K]$,
\begin{IEEEeqnarray}{c}
    e^{\frac{i}{K}\zeta} \vc{X}=K^{\xi} \vc{X}, \label{eq:lossEffect}
\end{IEEEeqnarray}
where 
\begin{IEEEeqnarray}{c}
    \xi=\frac{2i\zeta}{K\ln(K)} \rightarrow 0.
\end{IEEEeqnarray} 
It can be shown that asymptotically as $K\ra\infty$ the effect of loss in the convergence rate of $\Delta_r$ vanishes.

$\bullet$ \emph{Case $0<\delta<1$:} In this case, first for $r \leq K^{1-\delta/3-\epsilon'}$, we prove Claims 2 and 3, and consequently Claim 1 follows. Then using this result, we prove Claim 1 for $r > K^{1-\delta/3-\epsilon'}$ as well.

Assume that $r \leq K^{1-\delta/3-\epsilon'}$.
We argue first for $t \in [r]$, Claim~2 holds, from which Claims~1 and 3 are then concluded. Let 
\begin{IEEEeqnarray}{c}
Q_{\ell}\triangleq \gamma \mathcal{L} K^{-\delta}|X_{\ell}|^2,~\ell \in [n].
\end{IEEEeqnarray}
Using the induction assumption \eqref{eq:inductionAssumption} and since $r-t+1 \leq K^{1-\delta/3-\epsilon'}$,
\begin{IEEEeqnarray}{rl} 
\sum \limits_{s=t}^r& \Phi_{s,\ell} \nonumber\\
=&(r-t+1)K^{\delta-1}Q_{\ell} \nonumber
 \\
&+2\gamma \varepsilon  \mathfrak{R}\left[X_{\ell}^* \left(\sum \limits_{s=1}^{t-1} (r-t+1) E_{s,\ell} +\sum \limits_{s=t}^{r-1} \left(r-s\right) E_{s,\ell} \right)\right] \nonumber\\
 &+O_p\left((r-t+1)K^{\delta-1-\underline{\upsilon}(\delta)}\right)+O_p\left((r-t+1)K^{\frac{\delta-3}{2}}\right) \nonumber\\
 \stackrel{(a)}{=}&(r-t+1)K^{\delta-1}Q_{\ell}+O_p\left(K^{-\delta/3-2\epsilon'}\right)+O_p\left(K^{\frac{\delta-3}{6}-\epsilon'}\right) \nonumber \\&+O_p\left(K^{\delta/2-1-1/2}A\right)  \nonumber  \\=&(r-t+1)K^{\delta-1}Q_{\ell} +O_p\left(K^{-\delta/3-2\epsilon'}\right)+O_p\left(K^{\frac{\delta-3}{6}-\epsilon'}\right)\nonumber\\
&+O_p\left(K^{-3\epsilon'/2}\right)\nonumber\\
\ra& (r-t+1)K^{\delta-1}Q_{\ell}, \label{eq:secondOrderD1}
\end{IEEEeqnarray}
where $(a)$ holds due to assumption of induction \eqref{eq:uDelta} and
\begin{IEEEeqnarray}{rcl}
    A &\triangleq& \Bigl((t-1)(r-t+1)^2 + \sum_{\ell=1}^{r-t} \ell^2\Bigr)^{1/2}\nonumber \\
    &=&O\left(r^{3/2}\right).
\end{IEEEeqnarray}

This proves Claim 2. Note that as can be seen in the above calculations, due to signal-noise interactions, at most $ K^{1-\delta/3-\epsilon'}$ consecutive terms $\sum \limits_{s=t}^r \Phi_{s,\ell}$ constitute  arithmetic series. For small values of noise, it can be seen that the above argument almost holds for any $r \leq K^{1-\epsilon'}$, except for very large values of $K$,  when $K \gg \left(\gamma \mathcal{L}^{3/2} \sigma\right)^{-2/\delta}$.

Now, to show Claim~3, by denoting $Q\triangleq Q_1-Q_2$ and due to \eqref{eq:secondOrderD1},
\begin{IEEEeqnarray}{rcl}
    (\Delta_r)_{1,2}&=&\frac{(\mathrm{C}_1)_{1,2}}{K}\sum \limits_{i=1}^{r}e^{j \sum \limits_{s=i}^r \left(\Phi_{s,1}-\Phi_{s,2}\right)}\nonumber \\
    &=& \frac{(\mathrm{C}_1)_{1,2}e^{jK^{\delta-1}Q}\left(e^{jK^{2\delta/3-\epsilon'}Q}-1\right)}{K\left(e^{jK^{\delta-1}Q}-1\right)}.
\end{IEEEeqnarray}
Furthermore, since for $0 \leq \delta \leq 1$,
\begin{IEEEeqnarray}{c}
e^{jK^{\delta-1}Q}-1=O_p\left(K^{\delta-1}\right),  
\end{IEEEeqnarray}
then
\begin{IEEEeqnarray}{c}
    \big|(\Delta_r)_{1,2}\big| = \big|(\mathrm{C}_1)_{1,2}\big| ~O_p\left(K^{-\delta}\right),
\end{IEEEeqnarray}
which completes the proof of Claim 3 and consequently Claim 1.  

For $r > K^{1-\delta/3-\epsilon'}$, similarly it can be verified that for each $K^{1-\delta/3-\epsilon'}$ consecutive terms, 
\begin{IEEEeqnarray}{c}
\sum \limits_{i=t_1}^{t_2}e^{j \sum \limits_{s=i}^{r} \left(\Phi_{s,1}-\Phi_{s,2}\right)}=O_p\left(K^{1-\delta}\right),
\end{IEEEeqnarray}
where $t_2=t_1+K^{1-\delta/3-\epsilon'}$.
Hence
\begin{IEEEeqnarray}{rcl}
 \big|(\Delta_r)_{1,2}\big|&=&\big|(\mathrm{C}_1)_{1,2}\big| ~O_p\left(K^{\delta/3+\epsilon'}K^{1-\delta}K^{-1}\right)\nonumber\\
 &=&\big|(\mathrm{C}_1)_{1,2}\big| ~O_p\left(K^{-2\delta/3+\epsilon'}\right).
\end{IEEEeqnarray}
The above bound can be improved using the following approach. When $t$ consecutive terms form a geometric series, then
\begin{IEEEeqnarray}{rl}
    \sum \limits_{i=m-t+1}^{m}&e^{j \sum \limits_{s=i}^{r} \left(\Phi_{s,1}-\Phi_{s,2}\right)}\nonumber \\
    &= \frac{e^{j \sum \limits_{s=m-t }^{r} \left(\Phi_{s,1}-\Phi_{s,2}\right)}-e^{j \sum \limits_{s=m}^{r} \left(\Phi_{s,1}-\Phi_{s,2}\right)}}{e^{jK^{\delta-1}Q}-1}.
\end{IEEEeqnarray}

On the other hand, if $t_2-t_1 = K^{ 1-\delta/3+\epsilon'}$, then similar to the analysis in \eqref{eq:secondOrderD1},
\begin{IEEEeqnarray}{rl} 
\sum \limits_{s=t_1}^{t_2-1}& \Phi_{s,\ell} \nonumber\\
=&(t_2-t_1)K^{\delta-1}Q_{\ell} \nonumber
 \\
&+2\gamma \varepsilon  \mathfrak{R}\left[X_{\ell}^* \left(\sum \limits_{s=1}^{t_1-1} (t_2-t_1) E_{s,\ell} +\sum \limits_{s=t_1}^{t_2-1} \left(t_2-s\right) E_{s,\ell} \right)\right] \nonumber\\
 &+O_p\left((t_2-t_1)K^{\delta-1-\underline{\upsilon}(\delta)}\right)+O_p\left((t_2-t_1)K^{\frac{\delta-3}{2}}\right). 
 \end{IEEEeqnarray}
The second term on the RHS of the above relation is non-deterministic given the input and is of order 
\begin{IEEEeqnarray}{c}
\Omega_p \left((t_2-t_1)^{\frac{3}{2}} K^{\frac{\delta-3}{2}}\right)  = \Omega_p\left( K^{\frac{3\epsilon'}{2}}\right). 
\end{IEEEeqnarray} 
This concludes that this term grows as $K \ra \infty$.  It can be verified that in general if $A=B\cdot Z$, where $B$ is a constant and $Z$ is a  random variable with continuous PDF, then as $B \ra \infty$, $\mod(A,2\pi) \ra \mathcal{U}(0,2\pi)$. Consequently
\begin{IEEEeqnarray}{c}
    \mod\left(\sum \limits_{s=t_1}^{t_2-1} \left(\Phi_{s,1}-\Phi_{s,2}\right),2\pi \right) \ra \mathcal{U}(0,2\pi).
\end{IEEEeqnarray}
Hence, the summation of terms with a distance more than $K^{ 1-\delta/3+\epsilon'}$ can be considered as the summation of independent random variables. Using central limit theorem  yields
\begin{IEEEeqnarray}{rcl}
 \big|(\Delta_r)_{1,2}\big|&=& \big|(\mathrm{C}_1)_{1,2}\big| ~O_p\left(K^{2\epsilon'}K^{\delta/6}K^{1-\delta}K^{-1}\right)\nonumber \\
 &=&O_p\left(K^{-5\delta/6+2\epsilon'}\right).
\end{IEEEeqnarray}
This completes the proof of Claim 1.

$\bullet$ \emph{Case $1\leq \delta<1.5$:} 
Similar to the previous case, first assume that $r\leq K^{1-\delta/3-\epsilon'}$. For $i \in [r-1]$, due to the assumption of the induction \eqref{eq:uDelta},  $\underline{\upsilon}(\delta)\geq 1-g$. Since $1-\delta/3<2-\delta$, then it can be verified similar to the previous case that the restrictive term is the signal-noise interaction term and \eqref{eq:arithSum} (Claim 2) holds. Hence, 
\begin{IEEEeqnarray}{c}
    \sum \limits_{i=1}^{r}e^{j \sum \limits_{s=i}^r \left(\Phi_{s,1}-\Phi_{s,2}\right)}
    = \frac{e^{j\left(\Phi_{1,1}-\Phi_{2,2}\right)}\left(e^{jr K^{\delta-1}Q}-1\right)}{K\left(e^{jK^{\delta-1}Q}-1\right)}.~\label{eq:oscillation}
\end{IEEEeqnarray}
For $\delta>1$, the term $1/(e^{jK^{\delta-1}Q}-1)$ is an oscillating function which is of order $O_p(K^g)$, which concludes that $\underline{\upsilon}(\delta)=1-g$ for $r$ steps as well, $r\leq K^{1-\delta/3-\epsilon'}$. This proves Claim~3 and consequently Claim~1 for $r\leq K^{1-\delta/3-\epsilon'}$.

For $r> K^{1-\delta/3-\epsilon'}$, similar to the previous case it can be verified that sum of each $K^{1-\delta/3-\epsilon'}$ consecutive terms is $O_p(K^{g})$. Thus,
\begin{IEEEeqnarray}{c}
 \big|(\Delta_r)_{1,2}\big|= \big|(\mathrm{C}_1)_{1,2}\big| ~O_p\left(K^{\delta/3-1+g+\epsilon'}\right).
 \end{IEEEeqnarray}
Furthermore, using the central limit theorem, this bound can be improved to 
\begin{IEEEeqnarray}{c}
 \big|(\Delta_r)_{1,2}\big|= \big|(\mathrm{C}_1)_{1,2}\big| ~O_p\left(K^{\delta/6-1+g+2\epsilon'}\right).
 \end{IEEEeqnarray}
Finally, note that for absolutely continuous inputs, $g=o_p(1)$ and  $O_p\left(K^{\delta/6-1+g+2\epsilon'}\right)=O_p\left(K^{\delta/6-1+2\epsilon'}\right)$. This completes the proof of Claim 1.

$\bullet$ \emph{Case $1.5\leq \delta<2$:} 
Since $2-\delta<1-\delta/3$, in this regime, the intra-channel term is the restrictive term. In this case, it can be verified that  $r\leq K^{2-\delta-\epsilon'}$ consecutive terms form geometric series and their sum is $O_p(K^g)$ (Claims 2 and 3). 
Hence, to show Claim 1, $\big|(\Delta_r)_{1,2}\big|$ can be bounded as:
\begin{IEEEeqnarray}{rcl}
 \big|(\Delta_r)_{1,2}\big|&=& \big|(\mathrm{C}_1)_{1,2}\big| ~O_p\left(K^{2\epsilon'+g}K^{\frac{1-(2-\delta)}{2}}K^{-1}\right)\nonumber \\
 &=&\big|(\mathrm{C}_1)_{1,2}\big| ~O_p\left(K^{\delta/2-1.5+g+2\epsilon'}\right).
 \end{IEEEeqnarray}
 Note that similar to the previous case, $g=o_p(1)$ for absolutely continuous inputs and hence $O_p\left(K^{\delta/2-1.5+g+2\epsilon'}\right)=O_p\left(K^{\delta/2-1.5+2\epsilon'}\right)$.

$\bullet$ \emph{Cases $2\leq \delta< 3$ and $3\leq\delta$:}   For these cases, first we argue that when \eqref{eq:inductionAssumption} holds, then as $K \ra \infty$, $\mod\left(\Phi_{r,\ell},2\pi\right)\ra \mathcal{U}(0,2\pi)$, independent of $\left(\Phi_{i,\ell}\right)_{i=1}^{r-1}$. In other words, in this regime as $K \ra \infty$, SSFM channel tends to the finite dispersion fading channel, except for the first segment and when $2<\delta <3$, which is negligible. This, concludes that 
\begin{IEEEeqnarray}{c}
    \Delta_r=\frac{O_p\left(\sqrt{r}\right)}{K}=O_p\left(K^{-1/2}\right),
\end{IEEEeqnarray}
which completes the proof of Claim 1. Note that Claims 2 and 3 are valid only for $\delta<2$.

For $2\leq \delta<3$, the second phase operator is equal to
\begin{IEEEeqnarray}{rcl}
    \Phi_{2,\ell} &=& \gamma \mu\sum \limits_{t=1}^{M} \Big|X_{\ell}+\frac{\left(\mathrm{C}_1\vc{X}\right)_{\ell}}{K}+\vc{Z}_1+O_p\left(\frac{1}{K^2}\right) +W_{2,\ell}(t)\Big|^2 \nonumber \\
    &\approx& \gamma \varepsilon |X_{1,\ell}|^2+\frac{\gamma \mathcal{L}}{K^2}X_{1,\ell}\left(\mathrm{C}_1\vc{X}\right)_{\ell} .
\end{IEEEeqnarray}
For i.i.d. inputs, the second term induces a stochastic impairment that grows if $\delta > 2$ and when $K \ra \infty$.

Hence, as $K \ra \infty$, each step will be reduced to uniform phase noise. Similarly, after $r$ steps, there would be a stochastic impairment of order $\frac{\sqrt{i}}{K}\|X\|_4$. Thus, the channel (except for the first segment) is equivalent to the fading channel.

For $3\leq \delta$,  SSFM channel tends to the fading channel for any input distribution that escapes to infinity, as $K \ra \infty$. To show this, first using the assumption of the induction, for $1 \leq i \leq r$, we have 
\begin{IEEEeqnarray}{c}
|V_{i,\ell}|=K^{\delta/2} |\vc{X}'_\ell|+O_p\left(K^{\delta/2-0.5}\right).
\end{IEEEeqnarray}
Hence, the second term of $ \Phi_{i,1}$, \textit{i.e.} $2 \gamma \varepsilon \sqrt{\varepsilon} \mathfrak{R}(V_{i,\ell} Z_{i,\ell}^{'*})$, which conditioned on other segments and coordinates is not deterministic, grows unboundedly. Thus, $\mod\left(\Phi_{i,\ell},2\pi\right) \ra \mathcal{U}(0,2\pi)$. This completes the proof.

\subsection{Proof of Theorem~\ref{th:mainThSSFM}}\label{pr:mainThComp}

First we show \eqref{eq:capacity-a} holds. Let $\vc{Y}'_K\triangleq \mathsf{R}(\boldsymbol{\theta}')\vc{Y}_K$, where $\theta'_{\ell} \widesim{\text{i.i.d}}\mathcal{U}(0,2\pi)$. Due to data processing inequality
\begin{IEEEeqnarray}{c}
       I(\vc{Y}_K;\vc{X}) \geq I(\vc{Y}'_K;\vc{X}).
\end{IEEEeqnarray}
In the following, we establish a lower bound on the channel $\vc{X} \mapsto \vc{Y}'_K$.

The channel $\vc{X} \mapsto \vc{Y}'_K$ when $K \ra \infty$, is 
\begin{IEEEeqnarray}{c}
\vc{Y}'_K=\mathsf{M}'_K \vc{X}+\vc{Z}, \label{eq:Mprime}
\end{IEEEeqnarray}
where 
\begin{IEEEeqnarray}{c}
    \mathsf{M}'_K\stackrel{p}{=}e^{\zeta}\mathsf{R}(\boldsymbol{\theta})+O_p\left(K^{-\upsilon(\delta)+\epsilon'}\right),
\end{IEEEeqnarray}
where $\theta_l \sim \mathcal{U}(0,2\pi)$, independent of $\vc{X}$.

If $\upsilon(\delta)<\frac{1}{2}\delta$, then for sufficiently small value of $\epsilon'$, the term $O\left(K^{-\upsilon(\delta)+\epsilon'}\right)\vc{X}$ vanishes as $K \ra \infty$ and the channel tends to $n$ independent phase noise channels. The lower bound \eqref{eq:capacity-a} on $\mathcal{C}(\snr,\sqrt[\delta]{\snr})$ for $0 \leq \delta \leq 3/2$ can be then established similar to Theorem~\ref{th:asymCapFadingFinDis}.

If $\upsilon(\delta)> \frac{1}{2}\delta$, an approach similar to that in the proof of Theorem~\ref{th:fadingHighPowerdG1} can be applied. The output entropy $h(\mathbf{Y}'_K)$ can be bounded as in  \eqref{eq:outputEntropy}. Bounding the conditional part also follows similarly as in the proof of Theorem~\ref{th:fadingHighPowerdG1}, with the difference that the defined variable $T_{\ell}$ is not anymore Gaussian and is a random variable with bounded variance $\sigma^2_{T_{\ell}}$. Applying the maximum entropy theorem and letting $\epsilon' \ra 0$, the conditional entropy can be similarly bounded as
\begin{IEEEeqnarray}{rcl}
 h(Y'_{K,\ell}|\mathbf{X}) &\leq & 
  \log_2\left(\frac{2e^{\zeta}\pi}{K^{\upsilon(\delta)}}\right)+O\left(\sqrt{\frac{K}{\mathcal{P}}}\right)+O\left(\frac{1}{\sqrt{K}}\right)\nonumber\\
&&+ \mathbb{E}\left[\log_2\left(|X_{\ell}|\norm{\mathbf{X}}_4\right)\right]+\frac{1}{2}\log_2\left(2\pi e \sigma^2_{T_{\ell}}  \right).
\end{IEEEeqnarray}
This relation together with  \eqref{eq:outputEntropy} and \eqref{eq:cEntropySum}, and letting $\vc{X}=\mathcal{N}_{\mathbb{C}}(0,\mathcal{P}\mathrm{I}_n)$, shows that \eqref{eq:capacity-a} holds also for $\upsilon(\delta) \geq \frac{1}{2}\delta$.

Now, to show \eqref{eq:bound-a}, first note that similar to the proof of Lemma~\ref{lem:mainThSSFM-ex} for the case of $\delta < 1$,  for any $\epsilon'$ when $\snr/K \ra 0$ and $\snr \ra \infty$, we have
\begin{IEEEeqnarray}{rcl}
    \mathsf{M}_K&=&e^{\zeta+jd} \mathsf{S}_K+O_p\left(K^{-\frac{5}{6}\log_{K}(\snr) + \epsilon'\log_{K}(\snr)}\right) \nonumber \\
    &=&e^{\zeta+jd} \mathsf{S}_K+O_p\left(\snr^{-5/6+ \epsilon'}\right).
\end{IEEEeqnarray}
Next, by considering $\vc{X}\mapsto \vc{Y}'_K$, as in \eqref{eq:Mprime}, we have 
\begin{IEEEeqnarray}{c}
   \mathsf{M}'_K\stackrel{p}{=}e^{\zeta}\mathsf{R}(\boldsymbol{\theta})+O_p\left(\snr^{-5/6+ \epsilon'}\right).
\end{IEEEeqnarray}
Hence, by choosing input as $\mathcal{N}(0,\mathcal{P}\mathrm{I}_n)$, we have
\begin{IEEEeqnarray}{cl}
    \lim_{\snr \ra \infty} \lim_{K \ra \infty} \Big[\mathcal{C}(\snr,K)-\frac{1}{2}\log_2\left(1+ \snr\right)&+\frac{1}{2}\log_2(a)\Big] \nonumber \\
    &\geq 0. 
\end{IEEEeqnarray}
This completes the proof.

\begin{figure*}[t]
\begin{subfigure}{\columnwidth}
  \centering
  \includestandalone[scale=0.4]{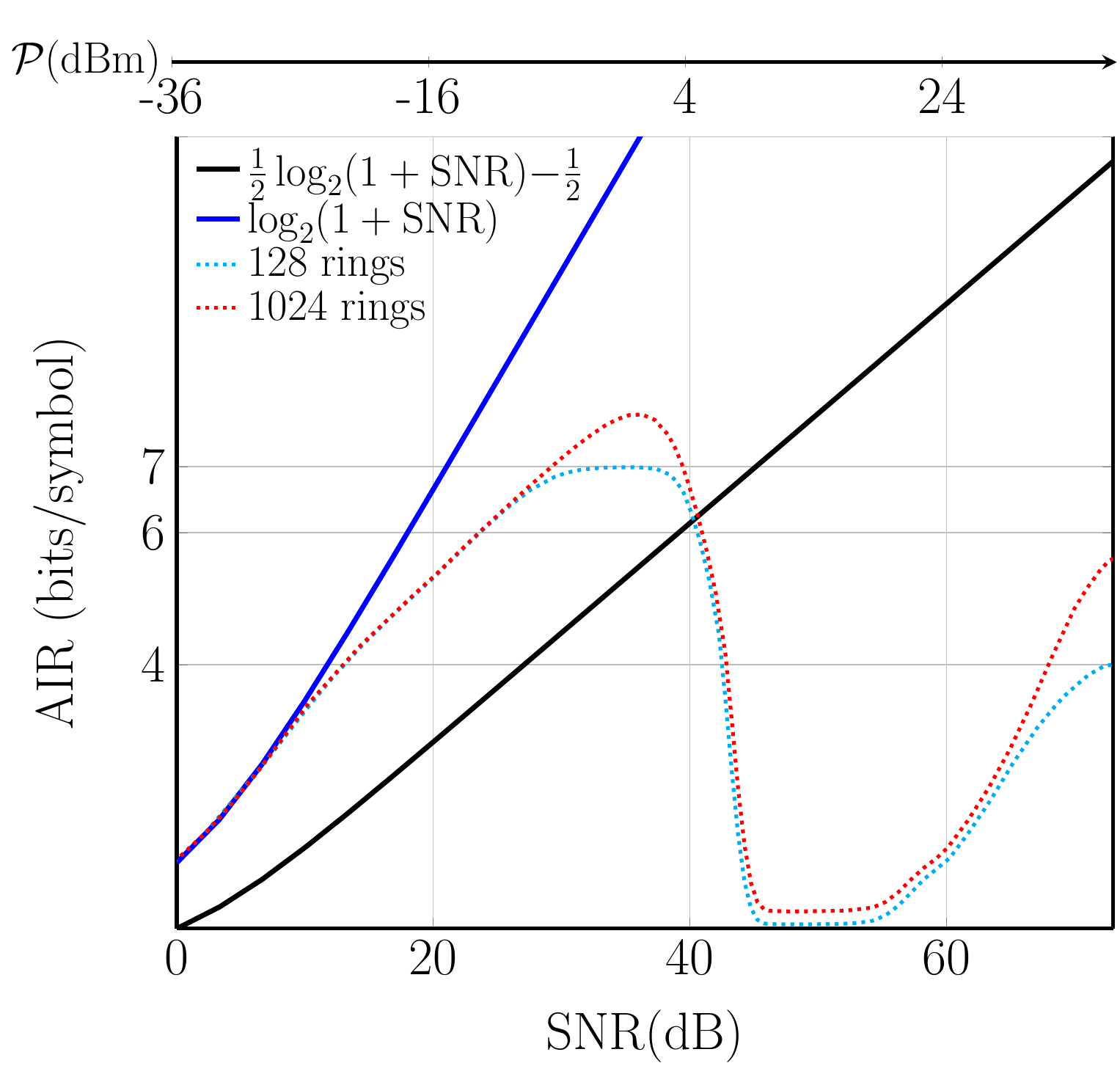}
\caption{$\mathcal{B}=20 \text{ GHz}$} 
 \label{fig:CapacityQAM16B20}
\end{subfigure}
\begin{subfigure}{\columnwidth}
    \centering
\includestandalone[scale=0.4]{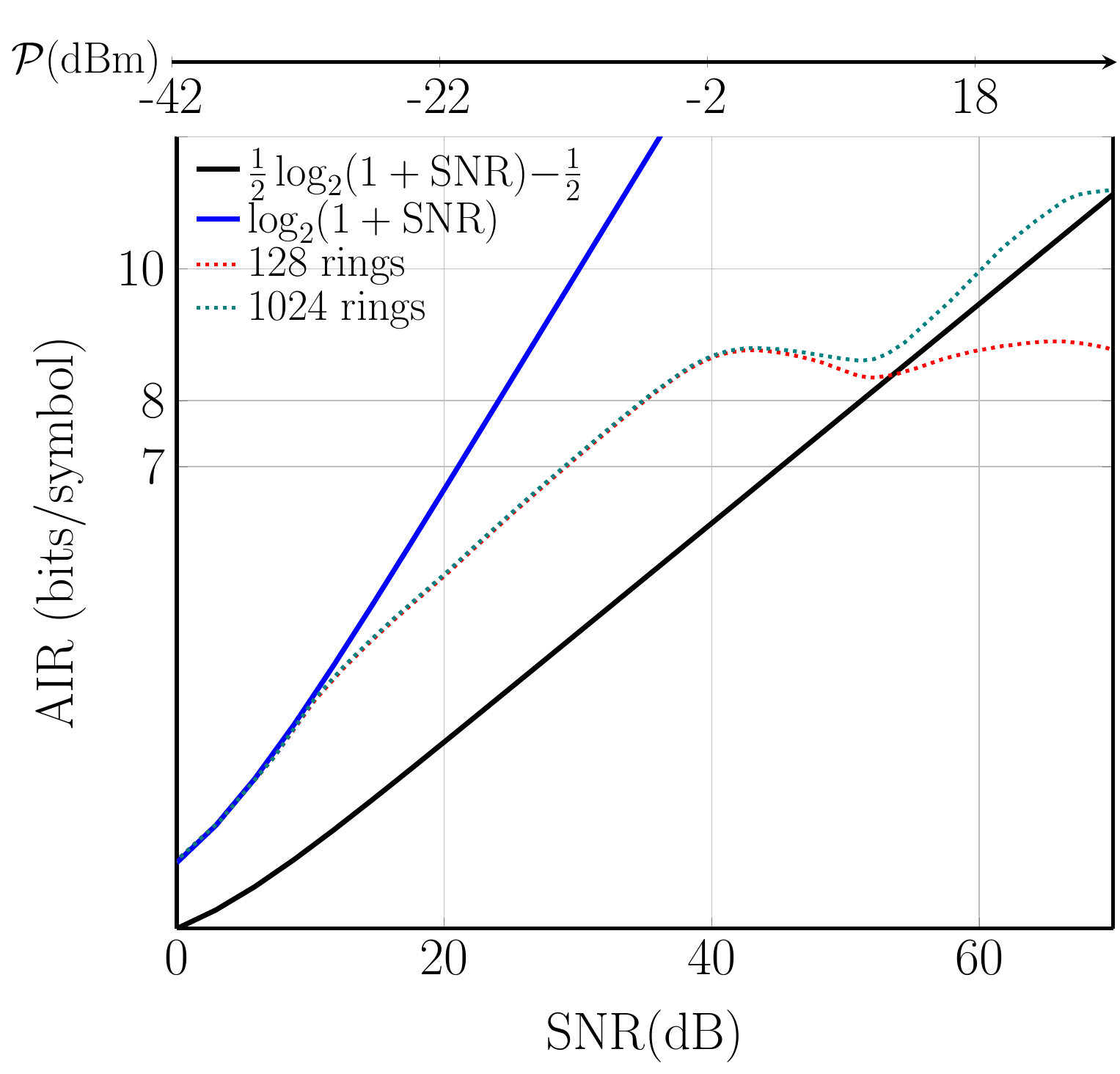}
\caption{$\mathcal{B}=5 \text{ GHz}$} 
 \label{fig:CapacityQAM16B5}
\end{subfigure}
\caption{AIR of the SSFM channel with large $K$ and back-propagation.}
 \label{fig:capacity}
\end{figure*}


\begin{figure}[t]
\centering
\begin{tabular}{cc}
\includestandalone[scale=0.5]{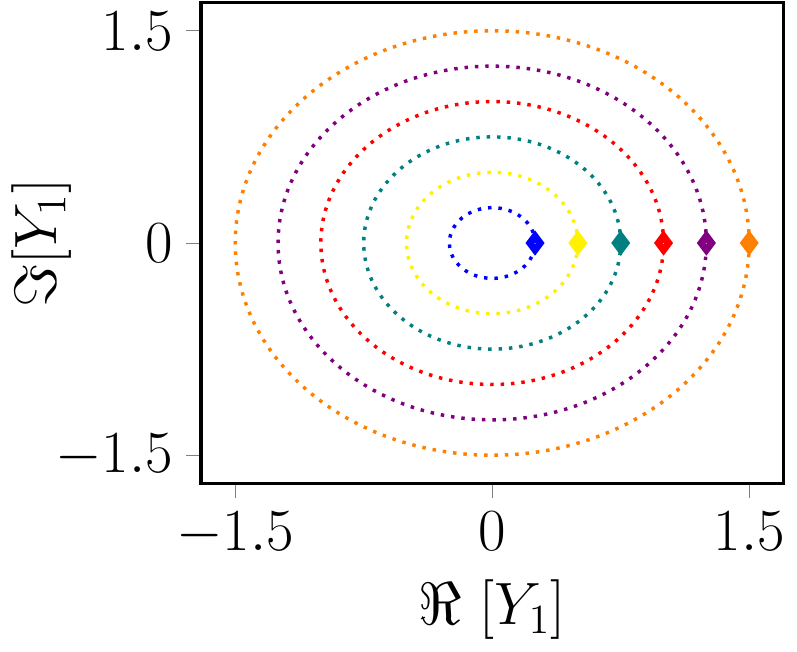}
&
\includestandalone[scale=0.5]{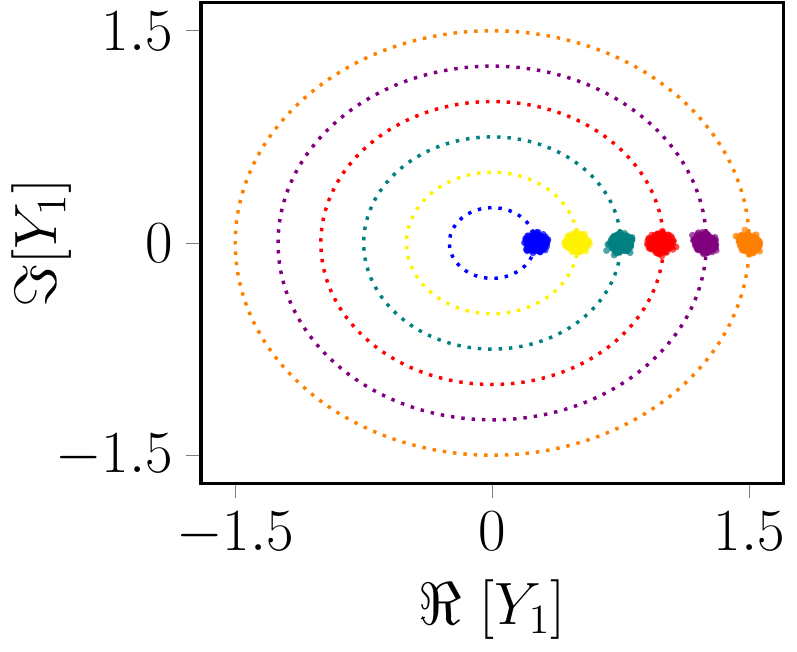}
\\
(a) Symbols at TX. & (b) RX, $\snr=30$ dB.
\\[4mm]
\includestandalone[scale=0.5]{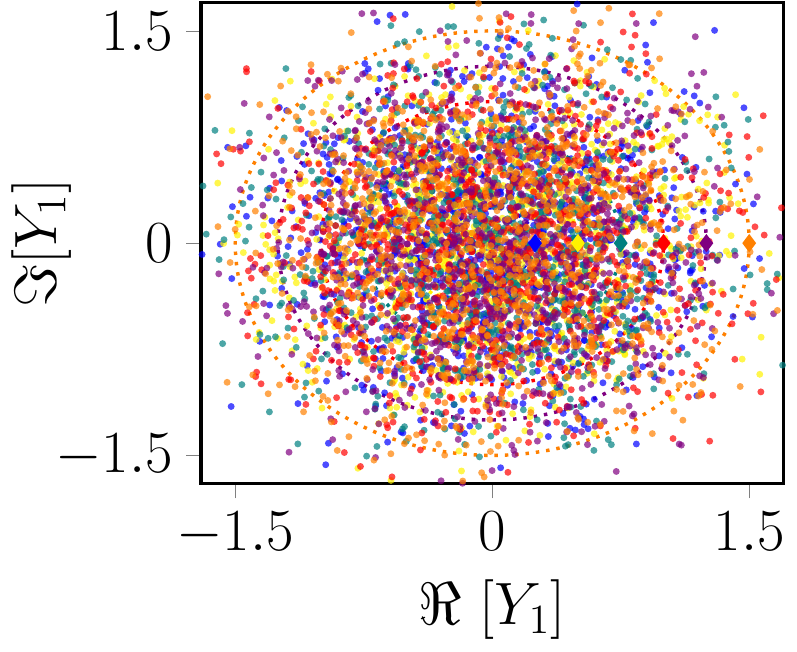}
&
\includestandalone[scale=0.5]{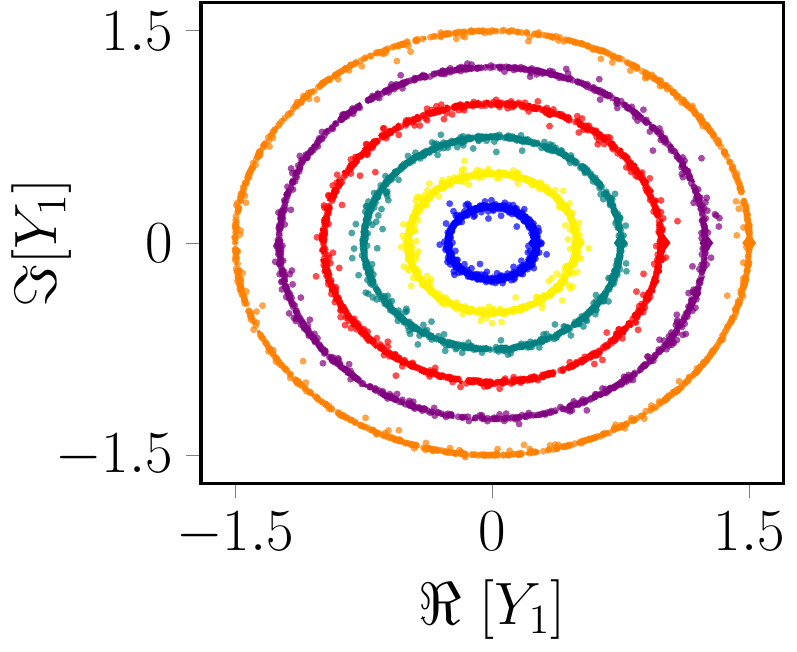}
\\
(c) RX, $\snr = 50$ dB. & (d) RX, $\snr=75$ dB.
\end{tabular}
\caption{Normalized constellation for the SSFM channel at $\mathcal{B}=20 \text{GHz}$. 
(a) Transmitted symbols. (b--d) Received symbols at several SNRs.
}
\label{fig:scatterNoise}
\end{figure}

\begin{figure*}[t]
\begin{subfigure}{\columnwidth}
\centering
  \includestandalone[scale=0.5]{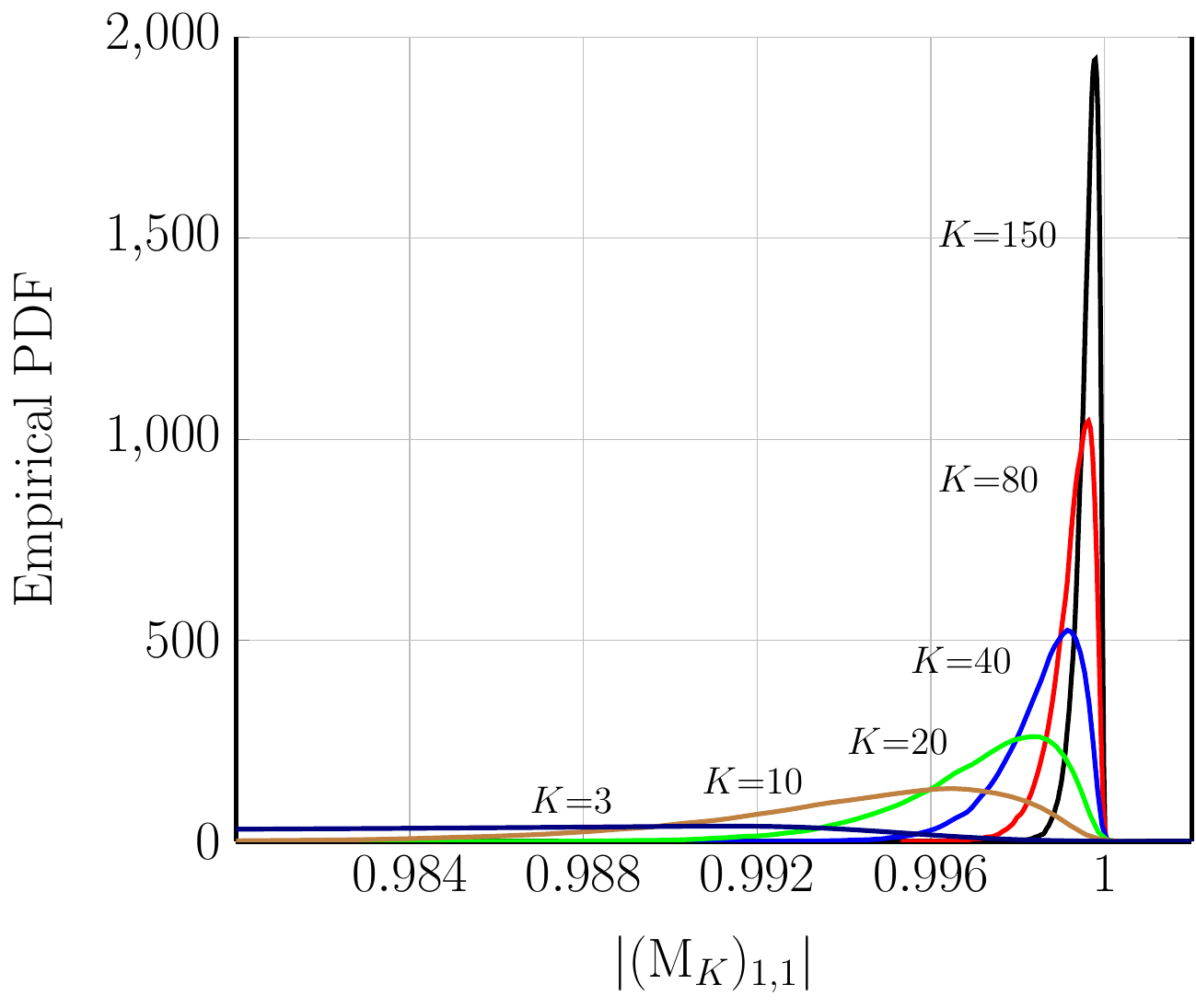}
\caption{Finite dispersion.} 
 \label{fig:divergenceFromoHaarOriginal}
\end{subfigure}
\begin{subfigure}{\columnwidth}
\centering
\includestandalone[scale=0.5]{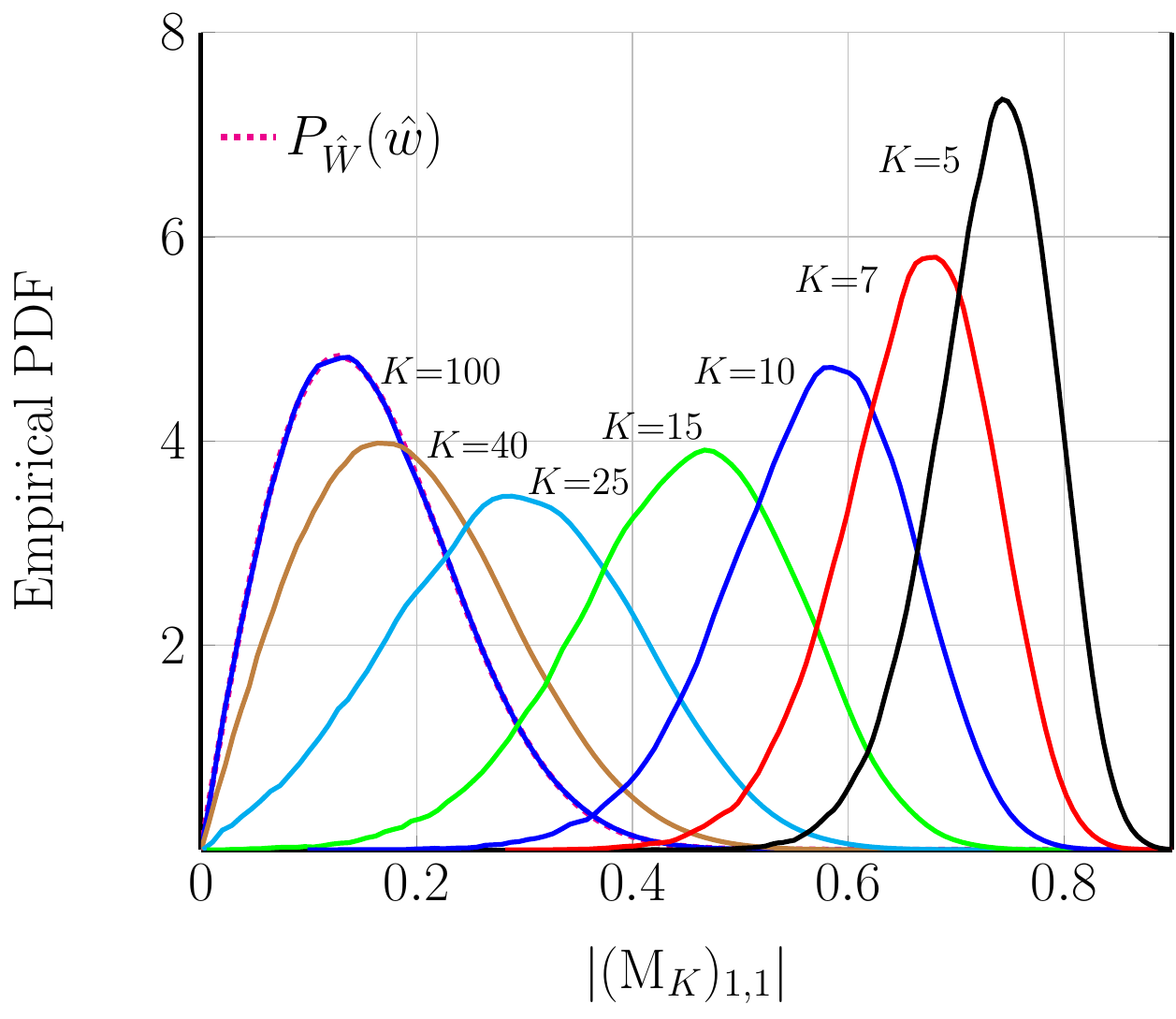}
\caption{Infinite dispersion.}
 \label{fig:convergenceToHaarInf}
\end{subfigure}
\caption{Empirical PDF of  $|(\mathsf{M}_K)_{1,1}|$ for the finite- and infinite- dispersion fading channel. }

\end{figure*}



\section{Capacity Simulation} 
\label{sec:simulations}

The capacity results in Section~\ref{sec:CapacityResultsExt} are supported by numerical simulation, presented in this section. 


We compute the maximum AIR by simulation, and compare that with the 
upper and lower bound \eqref{eq:ub} and \eqref{eq:mainThSSFM}. Furthermore, we investigate the properties of the random matrix $M_K$; in particular we demonstrate that $M_K$ tends to a diagonal 
matrix if $K$ is sufficiently large.

\subsection{Achievable information rate}

\begin{table}[t]
\caption{Fiber parameters}
\label{tab:params}
\centerline{\begin{tabular}{c|l|l}
$\alpha$ & $0.2$ dB/km & {\footnotesize fiber loss} \\
$D$      & 17 ps/(nm-km)         & {\footnotesize chromatic dispersion} \\
$\gamma$ & $1.27~{\rm W}^{-1}{\rm km}^{-1}$ & {\footnotesize nonlinearity parameter} \\
NF & 3 dB & {\footnotesize noise figure}\\
$h$ & $6.626 \times 10^{-34} {\rm J} \cdot {\rm s}$ & {\footnotesize Planck's constant} \\
$\lambda_0$ & $1.55~\mu{\rm m}$ & {\footnotesize carrier wavelength} 
\end{tabular}}
\end{table}

We consider an SSFM channel corresponding to a discretization of a fiber with parameters given in Tab.~\ref{tab:params},  $\mathcal{L}=2000 {\rm km}$, and $\mathcal{B}=20 \text{ GHz}$ and $\mathcal{B}=5 \text{ GHz}$, resulting in $\sigma^2=1.2\times 10^{-13} \text{ J/m}$ and $\sigma^2=3\times 10^{-14} \text{ J/m}$, respectively.  
We assume that fiber loss is perfectly compensated with distributed amplification, 
and choose time parameters $\Delta_t=1/\mathcal{B}$ and $n=4096$. 
Each element of the input vector is chosen i.i.d. from a uniformly-spaced multi-ring constellation with $m_A$ rings and 8 points in phase. 

The AIR is computed with equalization. Given output $\vc{Y}$, back-propagation is applied to obtain $\hat{\vc{Y}}$. The per-sample conditional PDF $p_{\hat{Y}_1|X_1}(\hat{y}_1|x_1)$ is numerically computed by averaging over all samples. The maximum of $I(X_1; \hat Y_1)$ over the input PDF provides a lower bound on the capacity
\begin{IEEEeqnarray}{rCl}
\mathcal{C}(\text{SNR})&\geq& \frac{1}{n} I(\vc{X};\hat{\vc{Y}}) \nonumber 
\\
&\geq& I(X_1; \hat Y_1),
\end{IEEEeqnarray}
where the last inequality holds for i.i.d. input. 

Fig. \ref{fig:capacity}(a) shows the maximum AIR  as a function  of the launch power and $\snr$  for $\mathcal{B}=20$ GHz.
It can be seen that the AIR is close  to the upper bound \eqref{eq:ub} in the low \snr\ regime $0\leq \snr\leq 15$ dB, and then, following a drop, increases again, approaching the lower bound \eqref{eq:mainThSSFM} as the \snr\ is increased. 

The AIR tends to infinity along the lower bound, which  appears to be tight in our simulations. 
Fig.~\ref{fig:capacity}(b) shows the convergence of the AIR to the lower bound at high powers for $\mathcal{B}=5$ GHz. Note that dispersion is stronger for larger bandwidth. As a consequence, the
stochastic ISI and the drop in the AIR are lower in Fig.~\ref{fig:capacity}(b) compared to those in Fig.~\ref{fig:capacity}(a).

Fig.~\ref{fig:scatterNoise} helps explain Fig. \ref{fig:capacity}(a), showing a number of symbols in the constellation at the transmitter (TX) and receiver (RX). 
In the regime $\snr <38\db $, the received symbols are localized around the transmitted symbols, and the AIR is between the upper and lower bounds \eqref{eq:ub} and \eqref{eq:mainThSSFM}.
In the medium \snr\ regime $45\db<\snr <55 \db$, the received symbols are almost independent of the transmitted symbol, resulting in almost zero AIR. 
Finally, in the high \snr\ regime $\snr>75\db$, the phase of the received symbols conditioned on the transmitted symbol is uniform; however the amplitude is now localized, limited by an additive ASE noise.
The AIR in this regime is $(1/2)\log_2(1+\snr)-1/2+o(1)$.

The analysis in Section~\ref{sec:CapacityResultsExt} shows that the SSFM model tends to a diagonal one for sufficiently large $K$ without equalization. Both deterministic and stochastic ISI tend to zero with $K$. 
Simulation of  the AIR without equalization shows a pattern similar to that in Fig. \ref{fig:capacity}, although
the value of the AIR is smaller due to deterministic ISI.

It follows that the AIR follows a double-ascent curve. As previously known, the AIR has an inverted bell curve shape, which corresponds to the range $\snr\leq 43$ dB in 
Fig. \ref{fig:capacity} (a). The existence of an optimal power in this range is attributed to a balance between the ASE noise and stochastic ISI. However, if \snr\ is further increased, the ISI eventually averages out to zero as proved in Lemma~\ref{lem:mainThSSFM-ex}. This gives rise to  the second ascent in the AIR, where the AIR approaches the rate of an interference-free phase noise channel. 

Note that equalization using back-propagation improves the AIR at low-to-medium SNRs by canceling the deterministic component of the inter-symbol inference (ISI). At high SNRs there is no benefit in applying equalization as the model is already ISI-free. However, if equalization is applied, the model remains diagonal since the phase at the input of the equalizer is uniform conditioned on the channel input.

\subsection{Conditional PDF in the Fading Channel}

In this and the next section, we verify  the properties of the 
conditional PDF in the fading and SSFM channels. As the input is
multi-dimensional, we compute the distribution of specific entries of the channel matrix $\mathsf M$.

In the first experiment, we simulate the random matrix $\mathsf M_k$ \eqref{def:matrixMk} for finite dispersion case with zero loss, $n=32$, 
and  for values of $b_{\ell}$ in \eqref{def:dispersions} with $T  =50$, $\beta_2=-2$, and $\mathcal{L}=1/4$.   Fig~\ref{fig:divergenceFromoHaarOriginal} shows that the empirical PDF of $W \triangleq |(\mathsf{M}_K)_{1,1}|$ converges to the Dirac delta function $\delta(W-1)$, which is explained by Lemma~\ref{lem:asymDistK}. For these choices of parameters, $b_{\ell}$ are small and the  PDF of $|(\mathsf{M}_K)_{1,1}|$ tends to $\delta(W-1)$ as $K$ increases.

In the second experiment, the random matrix $\mathsf M_k$ \eqref{def:matrixMk} is simulated for infinite dispersion case with zero loss and $n=32$. The values of $b_{\ell}$ are chosen to be numbers in the interval $(0,\pi/3]$ such that the matrix $\mathrm{D}$ becomes a non-block diagonal matrix. Fig~\ref{fig:convergenceToHaarInf} shows that the empirical PDF of $W \triangleq |(\mathsf{M}_K)_{1,1}|$ converges to the PDF of $\hat{W}=|\mathsf{M}_{1,1}|$, where $\mathsf{M}$ is distributed according to the Haar measure over the group of unitary matrices. This supports the result of Lemma~\ref{lem:haarMeasure}. Note that, by Theorem~\ref{th:pdfFirstColumn} in Appendix~\ref{sec:math},  $P_{\hat{W}}(\hat{w})=2(n-1)\hat{w}(1-\hat{w}^2)^{n-2}$. 

In the third experiment, the first simulation is repeated with the same parameters except with $\mathcal{L}=25$, which results in larger absolute values for $b_{\ell}$. In this case, as $K$ is increased, the empirical PDF of $W$ first, very fast as shown in Fig.~\ref{fig:firstConvegenceToHaar}, gets close to the PDF of $\hat{W}$, which by Lemma~\ref{lem:haarMeasure} corresponds to the PDF of $|(\mathsf{M}_K)_{1,1}|$ in infinite dispersion fading channel when $K \ra \infty$. Hence, it seems that when $|b_{\ell}|$ are not small, the PDF of $|(\mathsf{M}_K)_{1,1}|$ in the fading channel with finite dispersion is similar to the infinite dispersion case. 

As $K$ is further increased, the distribution of $\mathsf{M}_K$ gets far from PDF of $\hat{W}$; as a consequence,
as it can be observed in Fig.~\ref{fig:divergenceFromoHaar}, the PDF of $W$ tends to $\delta(W-1)$ rather than \eqref{eq:PW1}.

\begin{figure*}
\begin{subfigure}{\columnwidth}
\centering
    \includestandalone[scale=0.5]{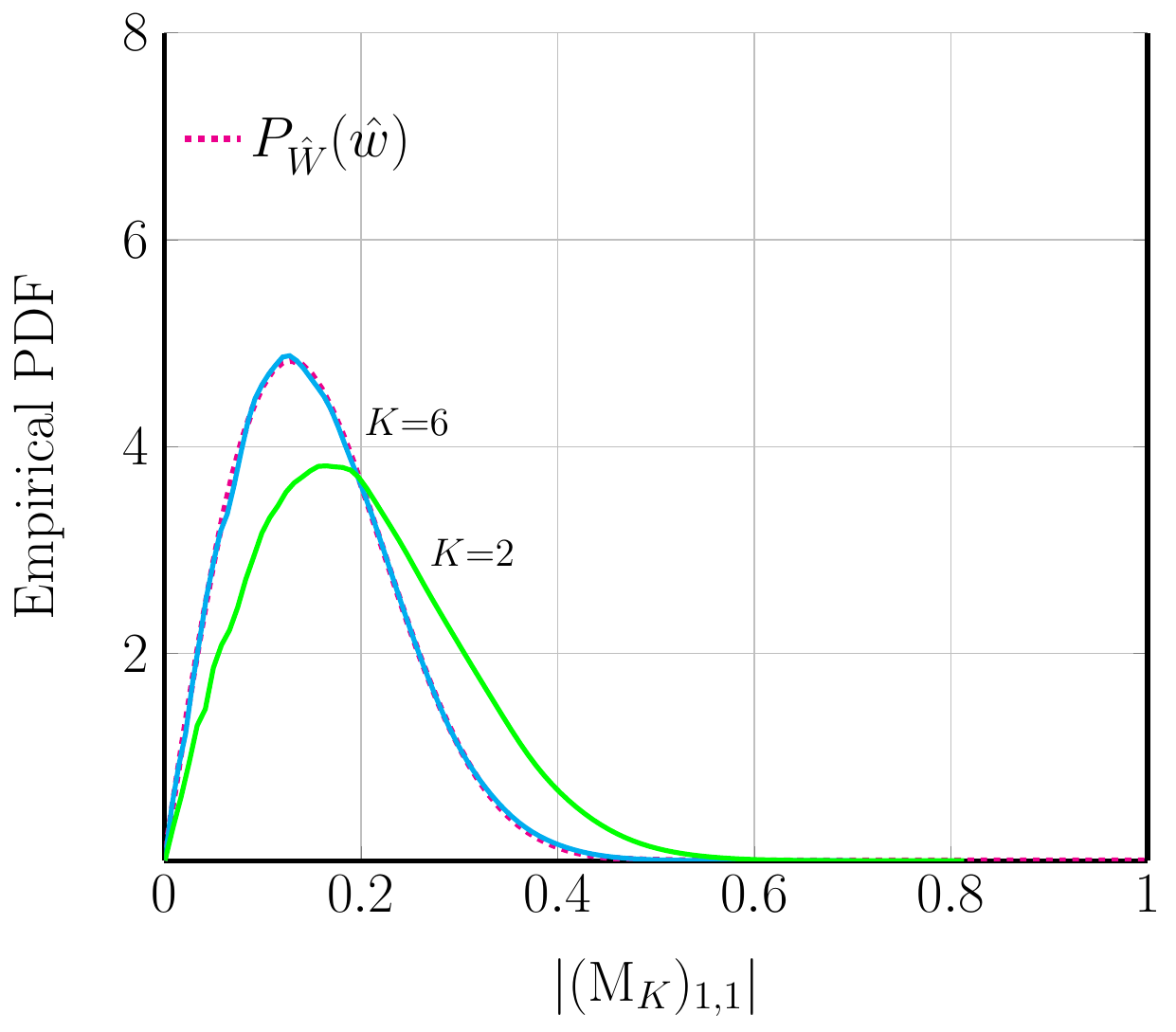}
\caption{First, PDF gets  close to Haar distribution.}
  \label{fig:firstConvegenceToHaar}
\end{subfigure}
\begin{subfigure}{\columnwidth}
\centering
  \includestandalone[scale=0.5]{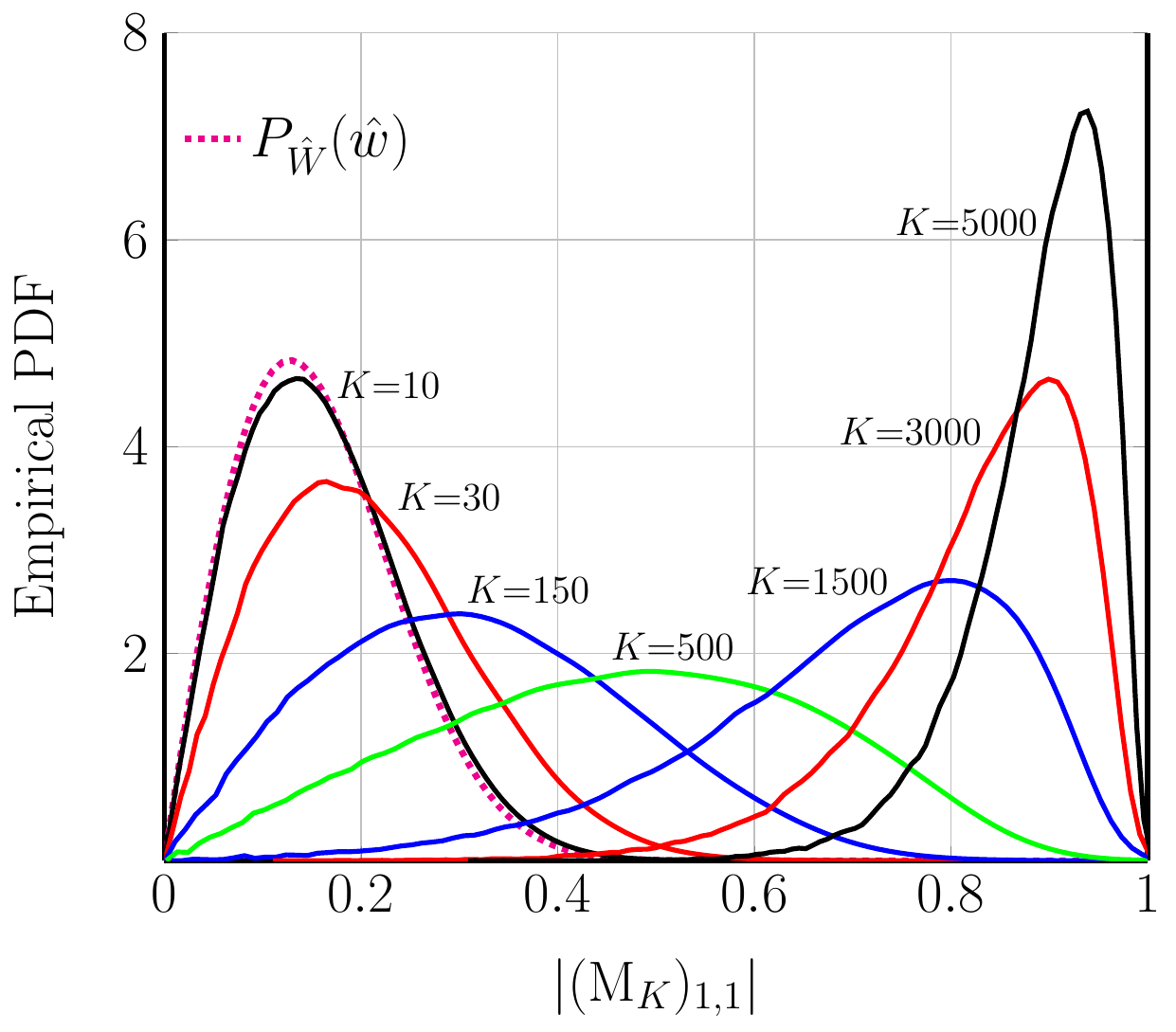}
\caption{Then, PDF converges to $\delta(w-1)$.}
  \label{fig:divergenceFromoHaar}
\end{subfigure}

\caption{Empirical PDF of $|(\mathsf{M}_K)_{1,1}|$ for finite dispersion.}
\label{fig:exampleFadingHaarDirac}
\end{figure*}

\begin{figure}[t]
  \centering
\includestandalone[scale=0.5]{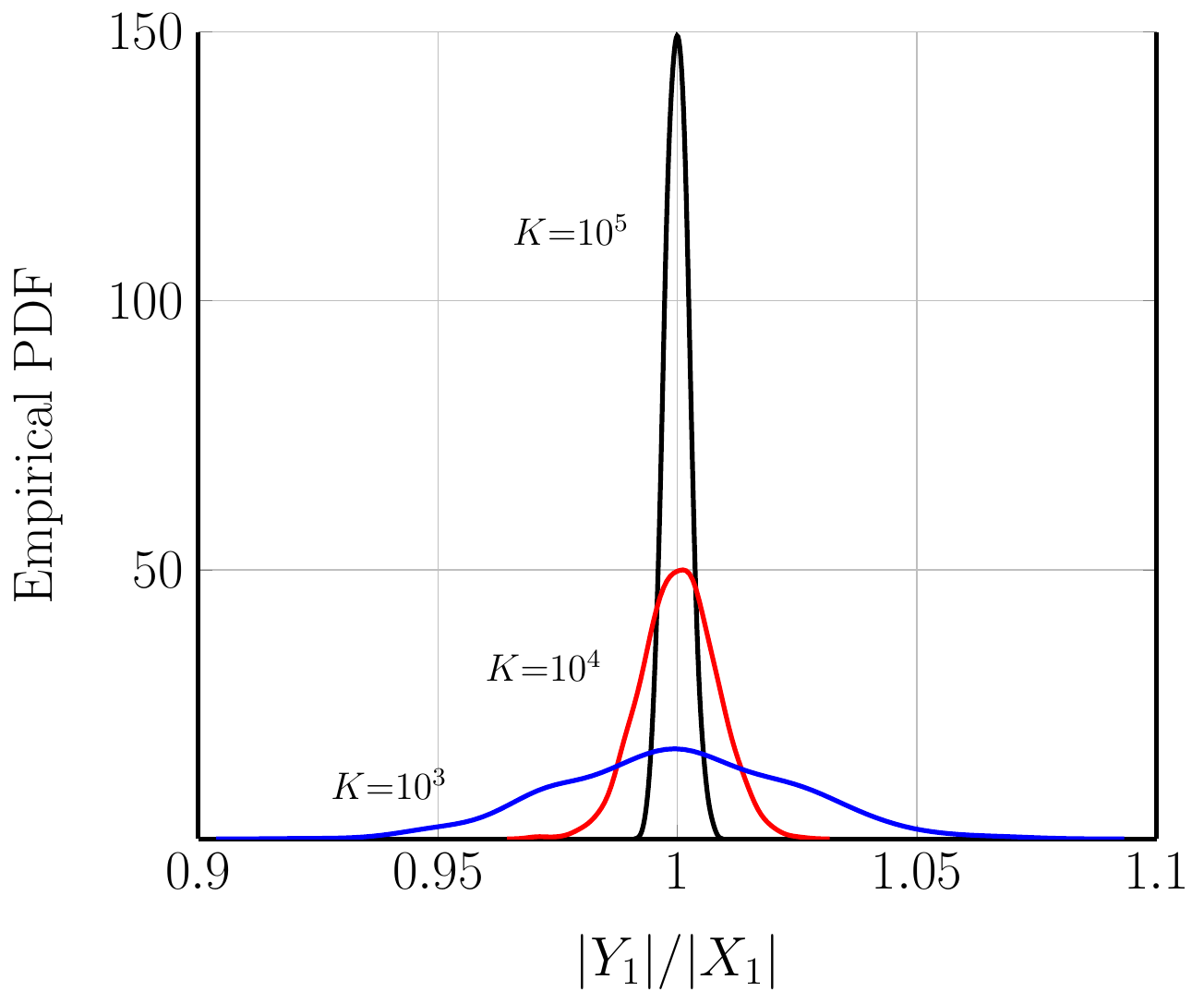}
\caption{Empirical PDF of $|Y_1| /|X_1|$ for SSFM channel with $n=32$.} 
 \label{fig:SSFMN32}
\end{figure}


\subsection{Conditional PDF in the SSFM Channel}
We verify that the SSFM channel is nearly diagonal when $K$ is sufficiently large.
In the first experiment,  we simulate a lossless channel with 1000 realizations of the noise and large input  $X_{\ell}\widesim{\text{i.i.d.}}10^8(\mathcal{U}(0,1)+0.7)$. We compute the empirical PDF of $|Y_1|/|X_1|$ to show it converges to the Dirac Delta function. Similarly this holds for any $i\in [n]$. 
 The number of spatial segments $K$ is chosen as follows.
 Considering the proof of Lemma~\ref{lem:asymDistK} in Section \ref{sec:proof}, the SSFM model is diagonal when $\mathrm{C}_1/\sqrt{K}$ is small,  where the matrix $\mathrm{C}_1$ is defined in \eqref{eq:C-i}. Hence,  $\frac{1}{\sqrt{K}}\max\limits_{\ell}|d_{\ell}|$ should be small, \eg\ less than $0.1$.  For the normalized NLS equation in \cite{yousefi2012nft1}, $\max \limits_{\ell}|d_{\ell}|=\left(n \pi/T\right)^2$. Letting $T=50$, we obtain $K \geq 16(n/10)^4$.

\paragraph*{Example n=32} In this case, $K > 1500$. Fig.~\ref{fig:SSFMN32}, shows that the empirical PDF of $|Y_1|/|X_1|$ is concentrated around 1, with $\sigma^2=5\times 10^{-5}$. This supports the relation $|Y_1|=|X_1|$ with probability one as $K\ra\infty$.

\paragraph*{Example n=1024} In this case, $K >  1.7\times 10^9$ and simulation is infeasible. However, if we reduce  $d_{\ell}$ by factor 100 (or 400), we obtain $K > 10^4$ (or $K > 1.7\times 10^5$). Fig.~\ref{fig:SSFMN1024D400} shows the empirical PDF of $|Y_1|/|X_1|$ for  several values of $K$ and  $\sigma^2=1.5\times 10^{-3}$, demonstrating $|Y_1| \approx |X_1|$.

 \begin{figure*}[ht]
\begin{subfigure}{\columnwidth}
  \centering
  \includestandalone[scale=0.5]{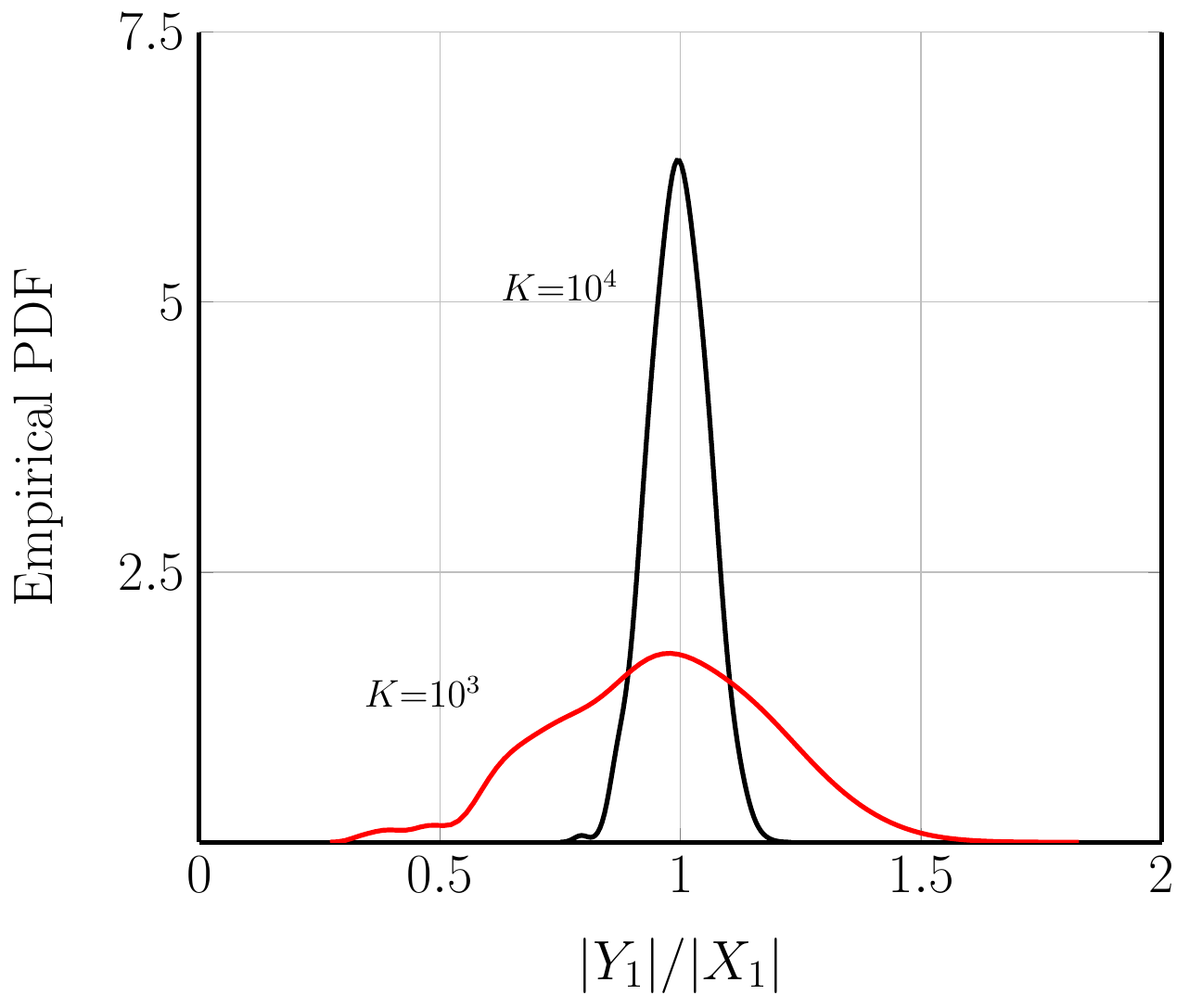}
\caption{Dispersion  values $b_{\ell}$ divided by 100, $\forall l$. }
  \label{fig:SSFMN1024D100}
\end{subfigure}
\begin{subfigure}{\columnwidth}
  \centering
\includestandalone[scale=0.5]{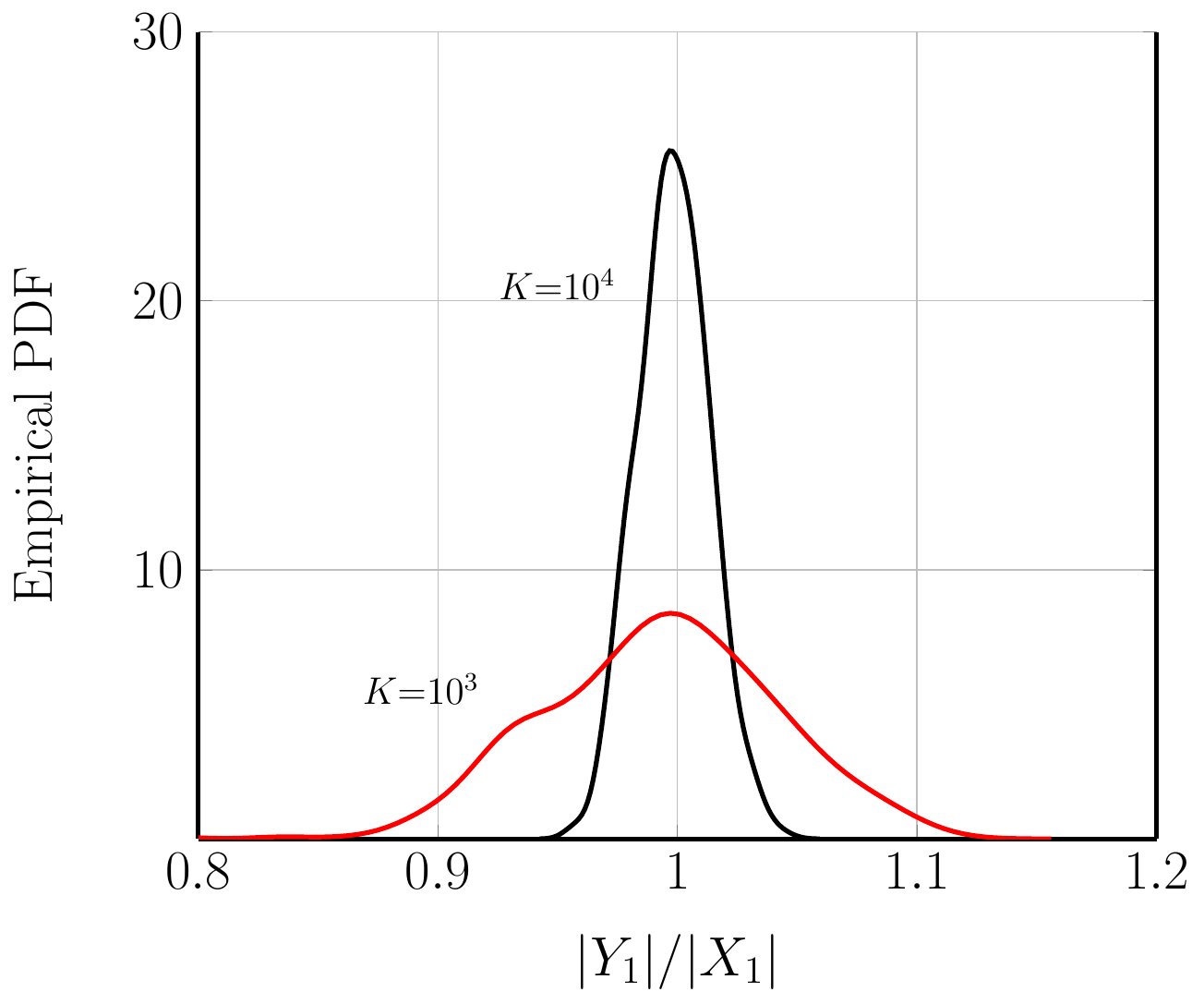}
\caption{Dispersion values $b_{\ell}$ divided by 400, $\forall l$.}
  \label{fig:SSFMN1024D400}
\end{subfigure}
\caption{Empirical PDF of $|Y_1|/|X_1|$ for SSFM channel with $n=1024$ and dispersion values divided by 100 and 400.}
\label{fig:fig:SSFMN1024D}
\end{figure*}

In the second experiment, we  investigate the rate of convergence of the SSFM channel to the diagonal phase noise model. We consider the normalized SSFM with $n=32$, $K=10^4$, $\sigma^2=5\times 10^{-5}$, and 1000 realizations of the noise and input  $X_{\ell}\widesim{\text{i.i.d.}}\sqrt{\frac{\mathcal P}{1.5}}(\mathcal{U}(0,1)+0.7)$, where $\mathcal{P}=K^{\delta}$, for $0 \leq \delta \leq 4$.

From Lemma~\ref{lem:mainThSSFM-ex} and  \eqref{eq:phiGeneral}, the rate of convergence of $\mathsf{M}_K$ to $e^{\zeta+jd}\mathsf{S}_K$ is
\begin{IEEEeqnarray}{c}
    \underline{\upsilon}(\delta)\triangleq-\log_K\left(\frac{1}{K}\sum \limits_{i=1}^{r}e^{j \sum \limits_{\ell=i}^K \left(\Phi_{\ell,1}\right)}\right).
\end{IEEEeqnarray}
In Fig.~\ref{fig:orderConvergence},  $\underline{\upsilon}(\delta)$ is simulated. The results are compatible with  Lemma~\ref{lem:mainThSSFM-ex} stating $\underline{\upsilon}(\delta) > {\upsilon}(\delta)-\epsilon'$ for any $\epsilon'>0$, where $\upsilon(\delta)$ is defined in \eqref{def:upsilonDelta}. Moreover, it can be seen in the proof of Lemma~\ref{lem:mainThSSFM-ex} that for small values of $\sigma^2\mathcal{L}$, the signal-noise mixing may not be dominant except for very large  $K \gg \left(\gamma \mathcal{L}^{3/2} \sigma\right)^{-2/\delta}$.

When $\gamma \mathcal{L}^{3/2} \sigma K^{\delta/2}$ is small,  $\underline{\upsilon}(\delta)$ can be lower bounded as
\begin{IEEEeqnarray}{c}
       \underline{\upsilon}(\delta)+\epsilon'\geq \begin{cases}
       \IEEEstrut
        \delta,& 0 \leq \delta \leq 1,\\
        1.5-\delta/2-g,& 1 \leq \delta \leq 2  \;\; \\
        &\text{and i.i.d. input}, \\
         0.5,& 2 \leq \delta \leq 3 \;\; \\
        & \text{and i.i.d. input},\\
        0.5,& 3 \leq \delta,
        \IEEEstrut
\end{cases} \label{def:upsilonDeltaLowrange}
\end{IEEEeqnarray}
which is compatible with the simulation result in Fig.~\ref{fig:orderConvergence}. The oscillation for $1\leq \delta \leq 2$ is also explained by the vanishing oscillating term $1/(e^{jK^{\delta-1}Q}-1)$ in \eqref{eq:oscillation} in the proof of Lemma~\ref{lem:mainThSSFM-ex}. 

In the last experiment, the effect of loss is examined. The previous experiment is repeated with fixed $\delta=0.6$ and $\zeta=-1.35$. For $K=100$, $K=1000$, and $K=1000$, the convergence rate $\underline{\upsilon}(\delta)$ are is $0.463$, $0.512$, and $0.534$, respectively. This is explained by \eqref{eq:lossEffect}, implying that for $0<\delta<1$ and small noise power,  $\underline{\upsilon}(\delta)=\delta+2a\zeta/\ln{K}\ra \delta$, where $a$ is a value less than 1. In our experiment, $a\approx 0.23$.


\section{Conclusion} \label{sec:conc}

The capacity of the discrete-time SSFM model of  optical fiber is considered as a function of the average input signal power $\snr$, when the number of spatial segments in SSFM  is sufficiently large as $K = \sqrt[\delta]{\snr}$, $\delta>0$.

First, we obtained the capacity lower bound \eqref{eq:capacity-a} and characterized the pre-log as a function of $\delta$.
In particular, we showed that  $\mathcal{C}(\snr)\geq \frac{1}{2}\log_2\left(1+\snr\right)- \frac{1}{2}+o(1)$, where $o(1)$ vanishes as $\snr \ra \infty $.  
As a result, the number of signal DoFs is at least half of the input dimension. 

Second, it is shown that the
capacity of the continuous-space SSFM channel when $\gamma \ra \infty$ is $\frac{1}{2}\log_2(1+a\, \snr)+o(1)$. Hence, the number of signal DoFs is exactly half of the input dimension. 

Third, we considered the SSFM model, named as infinite-dispersion,  where the dispersion matrix in each segment does not depend on $K$. 
It is shown that if $K= \sqrt[\delta]{\snr}$, $\delta>3$,  then $\mathcal{C}(\snr,\sqrt[\delta]{\snr}) = \frac{1}{2n}\log_2(1+\snr)+O(1)$, where $O(1)$ term is bounded as $\snr \ra \infty$. Here, there is exactly one signal DoF.

Finally, AIRs of the SSFM model with back-propagation equalization are obtained numerically. The results show that while the AIR drops significantly in the medium $\snr$  regime due to a considerable stochastic ISI, it asymptotically converges to $\frac{1}{2}\log_2\left(1+\snr\right)- \frac{1}{2}+o(1)$, explained by the fact that ISI vanishes at high \snr s.


\appendices
\section{Mathematical Preliminaries} \label{sec:math}
\subsection{Spherical Coordinate System}
The $n$-dimensional spherical coordinate system is described by a radius, $r$, and $n-1$ angles $\theta_{\ell}$, $\ell\in[n-1]$, where $\theta_{\ell}\in[0,\pi]$ for $\ell\in[n-2]$, and $\theta_{n-1}\in[0,2\pi)$. A vector $\vc{x}\in \mathbb{R}^n$ can be written in the spherical coordinate  as
\begin{IEEEeqnarray}{rcl}
x_1&=&r\cos(\theta_1), \nonumber\\
x_{\ell}&=&r\cos(\theta_{\ell})\prod \limits_{r=1}^{\ell-1} \sin(\theta_{r}) ,~\ell = 2,\ldots,n-2 \nonumber\\
x_{n-1}&=&r\prod \limits_{r=1}^{n-1} \sin(\theta_{r}).
\end{IEEEeqnarray}

Alternatively, we denote $\vc{x}$ by its norm $\norm{\vc{x}}$ and  direction $\hat{\vc{x}}=\vc{x}/\norm{\vc{x}}$ on the surface of the $n-1$-sphere
\begin{IEEEeqnarray}{c}
    \mathcal{S}^{n-1}=\left\{\vc{x} \in \mathbb{R}^n: \norm{x}=1\right\}.
\end{IEEEeqnarray}

Complex vectors in $\mathbb{C}^n$ can be similarly represented.

\subsection{Groups}

The reader is referred to \cite{Quint14,Breuillard04,Stromberg60} for background on group theory. 
For a group $G$, notation $H \leq G$ is used to say that $H$ is a subgroup of $G$  and $gH$ and $Hg$ are respectively the left coset and right coset of $H$ w.r.t. $g\in G$.

A probability measure on $G$ is a non-negative, real-valued, countably additive, regular Borel measure $\mu$ on $G$, such that $\mu(G)=1$. The support of $\mu$, denoted by $\mathcal{S}({\mu})$ is the smallest closed subset of $G$ of $\mu$-measure.

A probability measure $\mu$ on $G$ is said to be (normally) aperiodic if its support is not contained in a (left or right) coset of a proper closed (normal) subgroup of $G$. 

\paragraph{Group of Unitary Matrices} The group that we are interested in this paper is the group of unitary matrices. A  matrix $\mathrm{U}\in\mathbb{C}^{n\times n}$ is unitary if 
\begin{IEEEeqnarray}{c}
    \mathrm{U}  \mathrm{U}^H= \mathrm{U}^H  \mathrm{U}=\mathrm{I}_n,
\end{IEEEeqnarray}
where $\mathrm{U}^H$ denotes the conjugate transpose of $\mathrm{U}$. The set of unitary matrices in $\mathbb{C}^{n\times n}$ with matrix multiplication forms a group $\mathbb{U}_n$, which is a compact Lie group.

The following theorem is a re-statement of \cite[Thm.~1]{Borevich81}, bringing parts of its proof to the theorem statement.

\begin{theorem} \label{th:Borevich}
Suppose that a subgroup $\mathbb{H} \leq \mathbb{U}_n$ contains the subgroup of diagonal matrices. 
Let $\nu$ be a binary relation on $\mathcal{I}_n=\{1,2,\ldots,n\}$ defined as follows: $\forall r,s\in[n]$,  $r \widesim{\nu} s$ if and only if there exists a matrix $\mathrm{A} \in \mathbb{H}$ and 
$ t \in[n]$ such that $\mathrm{A}_{r,t} \neq 0$ and $\mathrm{A}_{s,t} \neq 0$. 

Then, we have
\begin{itemize}
\item[i.] $\nu$ is an equivalence relation on $\mathcal{I}_n$,
\item[ii.]$\mathbb{U}_n(\nu) \leq \mathbb{H}$.
\end{itemize}
\end{theorem}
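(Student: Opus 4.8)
The plan is to pass from the group $\mathbb{H}$ to its Lie algebra and to read off both assertions from a root-space decomposition. Since in the application $\mathbb{H}$ is the smallest \emph{closed} subgroup containing $\mathcal S(\mu)$, I will take $\mathbb{H}$ to be closed; by Cartan's closed-subgroup theorem it is then an embedded Lie subgroup with Lie algebra $\mathfrak h\subseteq\mathfrak u_n$. Because $\mathbb{H}$ contains the full diagonal torus $T$, the algebra $\mathfrak h$ contains the diagonal Cartan subalgebra $\mathfrak t$ and satisfies $\mathrm{Ad}(T)\mathfrak h=\mathfrak h$, hence $\mathrm{ad}(\mathfrak t)\mathfrak h\subseteq\mathfrak h$. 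Complexifying, $\mathfrak h_{\mathbb C}$ is an $\mathrm{ad}(\mathfrak t)$-invariant subspace of $\mathfrak{gl}_n(\mathbb C)=\mathfrak t_{\mathbb C}\oplus\bigoplus_{r\neq s}\mathbb C E_{rs}$, and is therefore the direct sum of its intersections with the (one-dimensional, pairwise distinct) weight spaces. Thus $\mathfrak h_{\mathbb C}=\mathfrak t_{\mathbb C}\oplus\bigoplus_{(r,s)\in S}\mathbb C E_{rs}$ for a set $S$ of ordered off-diagonal pairs that is symmetric because $\mathfrak h$ is a real form (the conjugation $X\mapsto-X^{H}$ sends $E_{rs}$ to $-E_{sr}$). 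Everything reduces to identifying $S$ with $\nu$.

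The central step is to prove, for $r\neq s$, that $r\sim_\nu s\iff(r,s)\in S$. For the forward implication, let $A\in\mathbb{H}$ with $A_{r,t}\neq 0\neq A_{s,t}$, and consider the one-parameter subgroup $\theta\mapsto A\,\mathrm E(\theta)\,A^{-1}\in\mathbb{H}$, where $\mathrm E(\theta)$ is the diagonal unitary with $e^{j\theta}$ in position $t$ and ones elsewhere. Its velocity at $\theta=0$ is $j\,vv^{H}\in\mathfrak h$, where $v$ is the $t$-th column of $A$; this matrix has $(r,s)$-entry $A_{r,t}\overline{A_{s,t}}\neq 0$. Since $\mathfrak h_{\mathbb C}$ is a sum of distinct weight spaces, the $\mathbb C E_{rs}$-component of any of its elements again lies in $\mathfrak h_{\mathbb C}$, so $(r,s)\in S$. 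For the converse, if $(r,s)\in S$ then $E_{rs}-E_{sr}\in\mathfrak h$, and $\exp(\theta(E_{rs}-E_{sr}))\in\mathbb{H}$ is a planar rotation acting only on coordinates $r,s$; its $r$-th column has nonzero entries in rows $r$ and $s$, which witnesses $r\sim_\nu s$.

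Granting this identification, part (i) is immediate. Reflexivity holds because $\mathrm I_n\in\mathbb{H}$ has a nonzero entry in every row; symmetry is built into the definition; and transitivity follows from the bracket $[\,E_{rs}-E_{sr},\,E_{sq}-E_{qs}\,]=E_{rq}-E_{qr}$, which is nonzero for distinct $r,s,q$. Hence $(r,s),(s,q)\in S$ forces $(r,q)\in S$ because $\mathfrak h$ is a Lie algebra, and the degenerate cases among $r,s,q$ reduce to reflexivity. For part (ii), the block-diagonal group $\mathbb{U}_n(\nu)$ is connected and its Lie algebra is spanned by $\mathfrak t$ together with $E_{rs}-E_{sr}$ and $j(E_{rs}+E_{sr})$ over pairs $r,s$ in a common equivalence class, i.e. with $r\sim_\nu s$; all of these lie in $\mathfrak h$, so $\mathrm{Lie}(\mathbb{U}_n(\nu))\subseteq\mathfrak h$ and, $\mathbb{U}_n(\nu)$ being connected, $\mathbb{U}_n(\nu)\subseteq\mathbb{H}$.

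The main obstacle is the forward implication of the identification: upgrading ``$\mathfrak h$ contains a skew-Hermitian matrix with nonzero $(r,s)$-entry'' to ``$\mathfrak h$ contains the entire $(r,s)$ root space.'' This is exactly where it is essential that $T$ is the \emph{full} diagonal torus, so that the characters $H\mapsto H_{rr}-H_{ss}$ are pairwise distinct and each root component can be isolated by the weight-space decomposition; a smaller torus would not permit this extraction. A secondary point to handle carefully is the standing closedness of $\mathbb{H}$, which is needed both to invoke Cartan's theorem and to pass from $\mathrm{Lie}(\mathbb{U}_n(\nu))\subseteq\mathfrak h$ back to the group inclusion in part (ii).
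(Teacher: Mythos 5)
Your proof necessarily takes a different route from the paper, because the paper gives no proof at all: Theorem~\ref{th:Borevich} is quoted from \cite[Thm.~1]{Borevich81}, whose argument is purely algebraic and applies to \emph{arbitrary} subgroups of $\mathbb{U}_n$ containing the diagonal subgroup. Your Lie-theoretic replacement is correct on its own terms, and the key steps check out: the weight decomposition $\mathfrak{h}_{\mathbb{C}}=\mathfrak{t}_{\mathbb{C}}\oplus\bigoplus_{(r,s)\in S}\mathbb{C}E_{rs}$ (one small imprecision: only the nonzero weight spaces are one-dimensional, the zero-weight space is all of $\mathfrak{t}_{\mathbb{C}}$, which is harmless here since $\mathfrak{t}\subseteq\mathfrak{h}$); the velocity computation giving $j\,vv^{H}\in\mathfrak{h}$ with $(r,s)$ entry $jA_{r,t}\overline{A_{s,t}}\neq 0$; the extraction of the root space from the weight decomposition, which is exactly where the hypothesis of the \emph{full} diagonal torus enters; the planar-rotation witness for the converse; the bracket identity $[E_{rs}-E_{sr},\,E_{sq}-E_{qs}]=E_{rq}-E_{qr}$ for transitivity; and the passage from $\mathrm{Lie}(\mathbb{U}_n(\nu))\subseteq\mathfrak{h}$ to $\mathbb{U}_n(\nu)\subseteq\mathbb{H}$ via connectedness of $\mathbb{U}_n(\nu)$. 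What your approach buys is a self-contained proof from standard structure theory; what Borevich's buys is the full generality of the statement, with no topology on $\mathbb{H}$ whatsoever.

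The one genuine shortfall is the standing closedness assumption. The theorem as stated assumes only $\mathbb{H}\leq\mathbb{U}_n$, while every ingredient of your argument (Cartan's theorem, one-parameter-subgroup generators lying in $\mathfrak{h}$, $\exp(\mathfrak{h})\subseteq\mathbb{H}$) requires $\mathbb{H}$ closed. You flag this honestly, and it is harmless for the paper: in the proof of Lemma~\ref{lem:haarMeasure}, $\mathbb{H}$ is by construction the smallest \emph{closed} subgroup containing $\mathcal{S}(\mu)$. But if you ever want the stated generality, beware of the tempting soft fix: one cannot argue that the subgroup generated by $T$ and finitely many witnesses is $\sigma$-compact and hence closed by Baire category --- that implication is false in general (the subgroup $T^1\times\{m\alpha \bmod 1:\, m\in\mathbb{Z}\}$ of the $2$-torus, with $\alpha$ irrational, is $\sigma$-compact, dense, and not closed). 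A correct reduction does exist but leans on the full-torus hypothesis in an essential way: by Yamabe's theorem the path component of the identity of $\langle T,A_1,\ldots,A_k\rangle$ is an integral subgroup whose Lie algebra contains $\mathfrak{t}$, hence (by your own weight/bracket analysis) is a block-diagonal algebra whose integral subgroup $\mathbb{U}_n(\nu_0)$ is closed; the generators normalize it, and the quotient embeds in the finite group of block permutations, so the whole group is a finite union of closed cosets and is closed. This is nontrivial extra work, which is presumably why the paper simply cites Borevich rather than reproving the result.
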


\subsection{Haar Measure}

This subsection is borrowed mainly from \cite{Meckes04}. Haar measure can be seen as an extension of the notion of the uniform random variable over an interval. 
The extension is based on the shift-invariant property of the uniform random variable. If $X\sim \mathcal{U}(0,a)$, then for any $b\in\mathbb R$, $\mod(X+b,a)\sim \mathcal U(0,a)$. 

Consider defining uniform distribution on the circle $\mathcal{S}^1$ in $\mathbb R^2$. Considering a circle as a geometric object, a “uniform random point on the circle” should be a complex random variable whose distribution is rotation invariant;  that is, if $A\subseteq \mathcal{S}^1$, then the probability of the random point lying on $A$ should be the same as the probability that it lies on $e^{j\theta}A=\{e^{j\theta} a:a\in A\}$.

The uniform distribution $\mu$ on a group $G$ is called Haar measure on $G$, defined based on the``translation-invariant'' property as follows. 
For a group $(G,\cdot)$, an element $g \in G$, and a Borel subset $\mathcal{S} \subseteq G$, the left translation of $\mathcal{S}$ by g is defined as
\begin{IEEEeqnarray}{c}
    g\mathcal{S}=\{g \cdot s: s \in \mathcal{S}\}.
\end{IEEEeqnarray}
A measure $\mu$ on the Borel subsets of $G$ is called left translation-invariant if for all Borel subsets $\mathcal{S} \subseteq G$ and all $g \in G$,
\begin{IEEEeqnarray}{c}
    \mu(g\mathcal{S})= \mu (\mathcal{S}).
\end{IEEEeqnarray}
Right translation and right translation-invariant are defined similarly.

There exists a unique Haar measure on any group. 
The following theorem is proved in \cite[Lemma~2.1.]{Meckes04} for $G=\mathbb U_n$.

\begin{theorem} \label{th:existenceHaar}
There exists a unique  (left or right) translation-invariant probability measure on $\mathbb{U}_n$, called Haar measure.
\end{theorem}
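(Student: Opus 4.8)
The plan is to prove the two assertions separately: first the existence of a left-translation-invariant probability measure on $\mathbb{U}_n$, and then its uniqueness, arranging the uniqueness argument so that it simultaneously shows the left-invariant measure coincides with any right-invariant one.

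For existence, I would exploit that $\mathbb{U}_n$ is a \emph{compact} Lie group, i.e. a compact embedded submanifold of $\mathbb{C}^{n\times n}\cong\mathbb{R}^{2n^2}$ that is closed under multiplication and inversion. Fix a nonzero top-degree alternating form $\omega_e$ on the tangent space at the identity $\mathrm{I}_n$, and transport it around the group by left translations $L_g\colon \mathrm{A}\mapsto g\mathrm{A}$, setting $\omega_g=(dL_{g^{-1}}|_g)^*\omega_e$. Since each $L_g$ is a diffeomorphism, $\omega$ is a smooth nowhere-vanishing top form, and by construction $L_g^*\omega=\omega$ for all $g$, i.e. $\omega$ is left-invariant. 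Compactness guarantees $0<\int_{\mathbb{U}_n}|\omega|<\infty$, so normalizing by this total mass produces a probability measure $\mu$ whose density is left-invariant; hence $\mu(g\mathcal{S})=\mu(\mathcal{S})$ for every Borel $\mathcal{S}$ and every $g$. An alternative fully constructive route realizes a Haar-distributed unitary as the Gram--Schmidt orthonormalization of $n$ i.i.d.\ standard complex Gaussian columns in $\mathbb{C}^n$: the Gaussian law on $\mathbb{C}^{n\times n}$ is unitarily invariant under $\mathrm{A}\mapsto \mathrm V\mathrm A$, and this invariance pushes forward to left-invariance of the law of the orthonormalized matrix.

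For uniqueness I would use a Fubini argument. Let $\mu$ be left-invariant and $\nu$ right-invariant, both probability measures, and let $f$ be any continuous function on $\mathbb{U}_n$. Left-invariance of $\mu$ gives $\int f(g)\,d\mu(g)=\int f(hg)\,d\mu(g)$ for each fixed $h$; integrating this identity against $\nu$ and applying Fubini yields
\begin{IEEEeqnarray*}{rCl}
\int f\,d\mu &=& \int_{\mathbb{U}_n}\!\int_{\mathbb{U}_n} f(hg)\,d\mu(g)\,d\nu(h)\\
&=& \int_{\mathbb{U}_n}\!\int_{\mathbb{U}_n} f(hg)\,d\nu(h)\,d\mu(g).
\end{IEEEeqnarray*}
Right-invariance of $\nu$ now replaces the inner integral by $\int f(h)\,d\nu(h)$, independent of $g$, so the right-hand side equals $\int f\,d\nu$. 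Thus $\int f\,d\mu=\int f\,d\nu$ for all continuous $f$, and the Riesz representation theorem forces $\mu=\nu$. Since the same computation is symmetric in the two invariance properties, it shows any two left-invariant measures agree and that the left-invariant measure is also the unique right-invariant one, giving uniqueness together with the coincidence of left and right invariance.

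The hard part is the existence step; the uniqueness computation is routine once Fubini and Riesz are invoked. Producing the invariant measure from scratch is where the real work lies: the differential-form construction requires the smooth-manifold structure of $\mathbb{U}_n$ and a verification that the transported form is smooth and non-degenerate, while the Gaussian--orthonormalization route requires checking that the orthonormalization map is almost-everywhere defined and that the pushforward of the unitarily invariant law is genuinely left-invariant. For a general (not necessarily Lie) compact group one would instead run Weil's construction via an invariant mean on $C(\mathbb{U}_n)$ and the Riesz theorem, but since $\mathbb{U}_n$ is a concrete compact Lie group either explicit construction above suffices.
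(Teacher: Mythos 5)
Your proof is correct, and it takes a genuinely different route from the paper: the paper offers no argument at all for Theorem~\ref{th:existenceHaar}, delegating it entirely to the citation \cite[Lemma~2.1]{Meckes04}. Your main construction (transporting a top-degree form at the identity by left translations, using compactness of $\mathbb{U}_n$ to normalize) together with the Fubini/Riesz uniqueness computation is the classical compact-Lie-group argument. Your alternative construction --- Gram--Schmidt orthonormalization of an i.i.d.\ complex Gaussian matrix --- is essentially the construction in the reference the paper cites, and it has the side benefit of being exactly how one samples Haar unitaries numerically, which is what the paper's simulations in Section~\ref{sec:simulations} implicitly rely on when comparing empirical PDFs of $|(\mathsf{M}_K)_{1,1}|$ against Haar statistics. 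What the paper's citation buys is brevity; what your argument buys is a self-contained appendix, and the Fubini computation additionally delivers the fact (used silently throughout, since the paper speaks of ``left or right'' invariance interchangeably) that on a compact group left- and right-invariant probability measures coincide, i.e., $\mathbb{U}_n$ is unimodular.

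One step to tighten in the uniqueness part: your Fubini argument compares a left-invariant $\mu$ with a right-invariant $\nu$, so to conclude that two \emph{left}-invariant probability measures $\mu_1,\mu_2$ agree you must first exhibit at least one right-invariant probability measure and then write $\mu_1=\nu=\mu_2$. The remark that ``the computation is symmetric'' does not by itself produce such a $\nu$. The fix is one line: push your constructed left-invariant measure forward under inversion $g\mapsto g^{-1}$, which converts left cosets into right cosets and hence left invariance into right invariance (or, equivalently, rerun the form construction with right translations). With that sentence added, the proof is complete.
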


Let $\mathsf{M}\in\mathbb C^{n\times n}$ be a random unitary matrix, $W_{\ell} \triangleq |\mathsf{M}_{\ell,1}|$, and $\vc W= (W_1, \ldots, W_n)$.

\begin{theorem} \label{th:pdfFirstColumn}
 Suppose that $\mathsf{M}$ is distributed according to Haar measure on the group of random unitary matrices. Then,
\begin{IEEEeqnarray}{c}
P_{\vc{W}}(\vc{w})=2^{n-1}(n-1)! \prod \limits_{\ell=1}^{n-1}  w_{\ell}.
\label{eq:PW}
\end{IEEEeqnarray}
\end{theorem}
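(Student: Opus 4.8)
The plan is to exploit the left-invariance of the Haar measure to identify the law of the first column of $\mathsf{M}$, and then reduce the computation of $P_{\vc W}$ to an elementary change of variables on the probability simplex. First I would observe that if $\mathsf M$ is Haar-distributed on $\mathbb U_n$ and $\mathrm U \in \mathbb U_n$ is fixed, then $\mathrm U \mathsf M \stackrel{d}{=} \mathsf M$ by Theorem~\ref{th:existenceHaar}; in particular the first column $\vc M_1 = (\mathsf M_{1,1},\ldots,\mathsf M_{n,1})$ satisfies $\mathrm U \vc M_1 \stackrel{d}{=} \vc M_1$ for every $\mathrm U \in \mathbb U_n$. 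Since $\vc M_1$ lies on the complex unit sphere $\{\vc z \in \mathbb C^n : \norm{\vc z}=1\}$ and $\mathbb U_n$ acts transitively on this sphere, the only rotation-invariant law is the uniform (surface) measure, so $\vc M_1$ is uniformly distributed on the complex unit sphere.

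Next I would use the Gaussian representation of the uniform sphere: if $\vc G = (G_1,\ldots,G_n)$ with $G_\ell \widesim{\text{i.i.d.}} \mathcal N_{\mathbb C}(0,1)$, then $\vc M_1 \stackrel{d}{=} \vc G / \norm{\vc G}$, because the law of $\vc G$ is itself unitarily invariant and normalizing projects it onto the sphere. Writing $W_\ell = \abs{\mathsf M_{\ell,1}}$ and $S_\ell \triangleq W_\ell^2$, this gives $S_\ell \stackrel{d}{=} \abs{G_\ell}^2 / \sum_{k=1}^n \abs{G_k}^2$. The squared magnitudes $\abs{G_\ell}^2$ are i.i.d. unit-mean exponential random variables, so the normalized vector $(S_1,\ldots,S_n)$ follows a symmetric Dirichlet law with all parameters equal to one, which is exactly the uniform distribution on the simplex $\{\vc s : s_\ell \geq 0,\ \sum_\ell s_\ell = 1\}$. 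The density of the $n-1$ free coordinates $(S_1,\ldots,S_{n-1})$ is therefore the constant $(n-1)!$, the reciprocal of the volume of the standard simplex in $\mathbb R^{n-1}$.

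Finally I would change variables from $(S_1,\ldots,S_{n-1})$ to $(W_1,\ldots,W_{n-1})$ through $S_\ell = W_\ell^2$. The map is coordinate-wise, so its Jacobian is diagonal with $\mathrm dS_\ell = 2 W_\ell\, \mathrm dW_\ell$, yielding a Jacobian factor $\prod_{\ell=1}^{n-1} 2 W_\ell = 2^{n-1}\prod_{\ell=1}^{n-1} W_\ell$. Multiplying the constant simplex density by this factor gives $P_{\vc W}(\vc w) = 2^{n-1}(n-1)!\prod_{\ell=1}^{n-1} w_\ell$, which is exactly \eqref{eq:PW}; the remaining coordinate $W_n = \sqrt{1-\sum_{\ell<n} W_\ell^2}$ is determined by the others and hence does not appear in the product.

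The main obstacle — indeed the only nontrivial point — is the first step: justifying rigorously that left-invariance forces $\vc M_1$ to be uniform on the complex sphere, i.e.\ that unitary invariance uniquely pins down the surface measure. This rests on the transitivity of the $\mathbb U_n$-action together with uniqueness of the invariant measure on the homogeneous space $\mathbb U_n/\mathbb U_{n-1}$, which is identified with the real sphere $\mathcal S^{2n-1}$. Once the first column is known to be uniform on the sphere, the Gaussian representation, the Dirichlet identity, and the elementary Jacobian are all routine. An alternative to the Gaussian route is to integrate out the $n$ phases directly from the uniform surface measure in polar coordinates, but the Gaussian argument avoids writing the surface element explicitly.
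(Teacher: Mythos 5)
Your proof is correct, but it follows a genuinely different computational route from the paper's. Both arguments rest on the same first step: the first column of a Haar-distributed unitary is uniform on the unit sphere. The paper simply asserts this, whereas you justify it via transitivity of the $\mathbb{U}_n$-action and uniqueness of the invariant measure on the homogeneous space --- a worthwhile addition, since the paper's Theorem~\ref{th:existenceHaar} concerns the group itself, not the sphere. From there the paths diverge. The paper works directly with the surface measure on the real sphere $\mathcal{S}^{2n-1}$: it quotes a tabulated formula for the joint density of two real coordinates of a uniform point, integrates out the uniform phase to obtain the marginal $P_{W_1}(w_1)=2(n-1)w_1(1-w_1^2)^{n-2}$, and then assembles the joint law inductively, which is also how the conditional densities \eqref{eq:PW1} quoted later in the appendix arise. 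You instead pass to the Gaussian representation $\vc{M}_1 \stackrel{d}{=} \vc{G}/\norm{\vc{G}}$, note that the normalized squared magnitudes $\bigl(W_1^2,\ldots,W_n^2\bigr)$ form a Dirichlet$(1,\ldots,1)$ vector (i.e.\ the constant density $(n-1)!$ on the simplex, since the $\abs{G_\ell}^2$ are i.i.d.\ exponential), and finish with the diagonal Jacobian $\prod_{\ell=1}^{n-1} 2w_\ell$. Your route produces the full joint density in a single step, requires no external sphere-marginal formulas and no induction, and makes the constant $2^{n-1}(n-1)!$ transparent; the paper's route yields the one-dimensional marginal and the conditionals \eqref{eq:PW1} as byproducts, which the paper reuses in its simulation section. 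As a consistency check, integrating your joint density over $w_2,\ldots,w_{n-1}$ (the region $\sum_{\ell\geq 2} w_\ell^2 \leq 1-w_1^2$) recovers exactly the paper's marginal for $W_1$, so the two computations agree.
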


\begin{proof}
Let $W_{\ell}=R_{\ell}+jT_{\ell}$, $R_{\ell},T_{\ell} \in \mathbb{R}$, $\ell \in [n]$. The vector
\begin{IEEEeqnarray*}{c}
\begin{pmatrix}
R_1,& \ldots,&R_n, & T_1,&  \ldots, &T_n
\end{pmatrix}
\end{IEEEeqnarray*}
has the uniform distribution over $S_{2n-1}$. From \cite[Eq.1.26.]{FangWangng18}, the joint distribution of $(R_1,T_1)$ is
\begin{equation*}
P_{R_1,T_1}(r_1,t_1)=\frac{\Gamma(n)}{\Gamma(n-1) \pi}\left( 1- (r_1^2+t_1^2)\right)^{n-2}.
\end{equation*}

Since the phase is uniform, we get
\begin{equation*}
P_{W_1}(w_1)=2(n-1)w_1 \left( 1- w_1^2\right)^{n-2}.
\end{equation*}
Similarly, the joint distribution of $(R_1^2,T_1^2)$ is
\begin{equation*}
P_{R_1^2,T_1^2}(r_1^2,t_1^2)=
\frac{\Gamma(n)}{\Gamma(n-2)\pi^2}\left( 1- (r_1^2+t_1^2+r_2^2+t_2^2)\right)^{n-3}.
\end{equation*}
Again since phase is uniform, then
\begin{IEEEeqnarray*}{rcl}
P_{W_1^2}(w_1^2)&=&2^2(n-1)(n-2)w_1w_2 \left( 1- (w_1^2+w_2^2)\right)^{n-3}.
\end{IEEEeqnarray*}
The result can be similarly established for $W_1^n$ by induction.
\end{proof}

The joint PDF \eqref{eq:PW} gives the marginal $P_{W_1}(w_1)=g(w_1,n)$, where
 \begin{IEEEeqnarray}{c}
     g(w,n)\triangleq 2(n-1)w(1-w^2)^{n-2}.
 \end{IEEEeqnarray}
The conditional PDFs for $\ell=2,\ldots, n$ are 
\begin{IEEEeqnarray}{rCl}
P_{W_{\ell}|W_1^{\ell-1}}(w_{\ell}|w_1^{\ell-1})&=&
\Bigl(1-\sum \limits_{i=1}^{\ell-1}w_{i}^2\Bigr)^{-\frac{1}{2}} \label{eq:PW1}\\
&&
\times g\Bigl(
\Bigl(1-\sum \limits_{i=1}^{\ell-1}w_{i}^2\Bigr)^{-\frac{1}{2}} w_{\ell}
,n-\ell+1\Bigr).
\nonumber
\end{IEEEeqnarray}
\begin{figure}[t]
  \centering
  \includestandalone[scale=0.5]{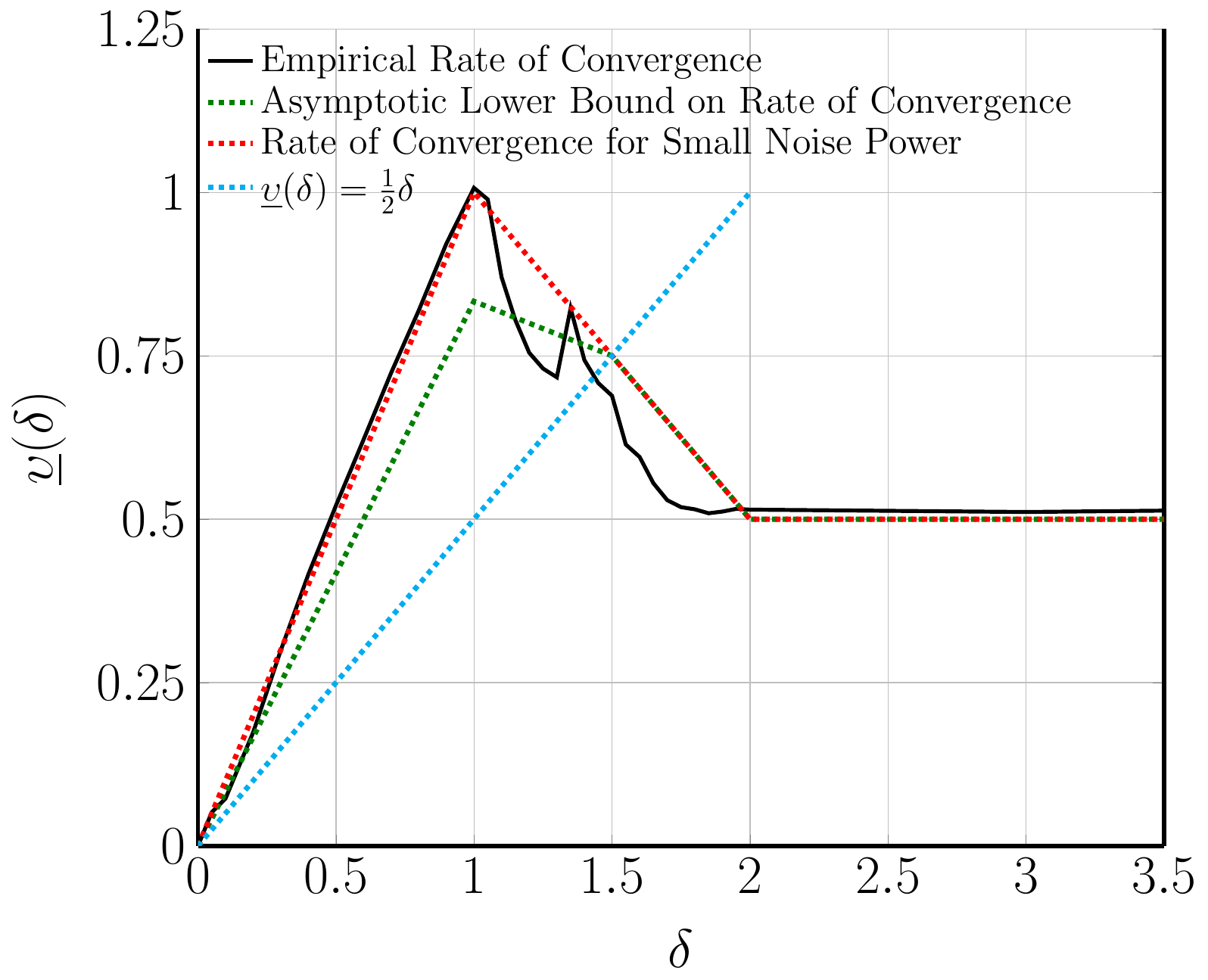}
\caption{Rate of Convergence of SSFM channel to the diagonal model for $n=32$, $K=10^4$, and $\mathcal{P}=K^{\delta}$.}
  \label{fig:orderConvergence}
\end{figure}

\begin{theorem} \label{th:independence}
 Suppose that $\mathsf{M}\in\mathbb C^{n\times n}$ is distributed according to the Haar measure on $\mathbb U_n$ and $\vc{x} \in \mathbb{C}^n$ with $\norm{\vc{x}}=1$. Then, $P_{\mathsf{M}\mathbf{X}|\mathbf{X}}\left(\mathsf{M}\mathbf{x}|\mathbf{x}\right)$ is independent of $\mathbf{x}$.
\end{theorem}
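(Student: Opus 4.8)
The plan is to exploit the translation-invariance of the Haar measure recorded in Theorem~\ref{th:existenceHaar}. Since $\mathsf{M}$ is independent of $\vc{X}$, conditioning on $\vc{X}=\vc{x}$ simply fixes the unit vector $\vc{x}$, so the conditional law $P_{\mathsf{M}\vc{X}|\vc{X}}(\cdot\,|\vc{x})$ is nothing but the law of the random vector $\mathsf{M}\vc{x}$. Because $\mathsf{M}$ is unitary and $\norm{\vc{x}}=1$, we have $\norm{\mathsf{M}\vc{x}}=1$, so this law is supported on the unit sphere $\mathcal{S}^{2n-1}\subset\mathbb{C}^n$. It therefore suffices to show that the distribution of $\mathsf{M}\vc{x}$ on $\mathcal{S}^{2n-1}$ does not depend on which unit vector $\vc{x}$ is chosen.

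First I would show that, for every fixed unit vector $\vc{x}$, the law of $\mathsf{M}\vc{x}$ is invariant under the action of $\mathbb{U}_n$ on the sphere. Indeed, fix any $\mathrm{V}\in\mathbb{U}_n$ and write $\mathrm{V}(\mathsf{M}\vc{x})=(\mathrm{V}\mathsf{M})\vc{x}$. By the left-invariance of the Haar measure (Theorem~\ref{th:existenceHaar}), the random matrix $\mathrm{V}\mathsf{M}$ has the same distribution as $\mathsf{M}$, so $\mathrm{V}(\mathsf{M}\vc{x})\stackrel{d}{=}\mathsf{M}\vc{x}$. Hence the law of $\mathsf{M}\vc{x}$ is a rotation-invariant probability measure on $\mathcal{S}^{2n-1}$. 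Since $\mathbb{U}_n$ acts transitively on the unit sphere and the rotation-invariant probability measure on $\mathcal{S}^{2n-1}$ is unique (the normalized surface measure), this law is the uniform distribution on the sphere for every $\vc{x}$, and in particular is the same for all unit $\vc{x}$.

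Equivalently — and this is the cleanest way to phrase the final comparison — I would argue directly that any two unit vectors yield the same output law. Given unit vectors $\vc{x}$ and $\vc{y}$, transitivity furnishes $\mathrm{U}\in\mathbb{U}_n$ with $\mathrm{U}\vc{x}=\vc{y}$; then $\mathsf{M}\vc{y}=(\mathsf{M}\mathrm{U})\vc{x}$, and right-invariance of the Haar measure gives $\mathsf{M}\mathrm{U}\stackrel{d}{=}\mathsf{M}$, whence $\mathsf{M}\vc{y}\stackrel{d}{=}\mathsf{M}\vc{x}$. Either route yields that $P_{\mathsf{M}\vc{X}|\vc{X}}(\cdot\,|\vc{x})$ is independent of $\vc{x}$.

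The argument is short, so there is no deep obstacle; the only points requiring care are technical. One must interpret the ``PDF'' $P_{\mathsf{M}\vc{X}|\vc{X}}$ as a distribution on the measure-zero set $\mathcal{S}^{2n-1}$, taken with respect to the surface measure rather than the Lebesgue measure on $\mathbb{C}^n$, and one must justify that the rotation-invariant probability measure on the sphere is unique — which follows from the transitivity of the $\mathbb{U}_n$-action together with the uniqueness part of Theorem~\ref{th:existenceHaar}. Both are standard, and the entire content of the proof is the single application of the invariance of the Haar measure.
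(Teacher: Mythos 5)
Your proposal is correct and follows essentially the same route as the paper: your second argument (choose $\mathrm{U}\in\mathbb{U}_n$ with $\mathrm{U}\vc{x}=\vc{y}$ and use right-invariance of the Haar measure so that $\mathsf{M}\mathrm{U}\stackrel{d}{=}\mathsf{M}$) is precisely the paper's proof, which right-multiplies $\mathsf{M}$ by a unitary change-of-basis matrix $\Lambda$ whose first column is $\vc{x}$, reducing everything to the law of the first column of a Haar-distributed matrix. The paper additionally cites Theorem~\ref{th:pdfFirstColumn} for the explicit form of that common law, while your left-invariance variant identifies it as the uniform measure on the sphere; both are extras beyond the independence claim itself.
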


\begin{proof}
Fix an orthonormal basis $V_1,\ldots,V_n$ of $\mathbb C^n$  such that $V_1=\mathbf{x}$. Denote the matrix with columns $V_i$ by $\mathrm{\Lambda}$. Assume that $\mathsf{M}$ is a unitary matrix distributed according to Haar measure. Define the map 
$\Phi: \mathbb{C}^{n\times n} \mapsto \mathbb{C}^{n\times n}$ as
\begin{IEEEeqnarray}{c}
\mathsf{M}'=\mathsf{M}\Lambda.
\end{IEEEeqnarray}
Thus, 
\begin{IEEEeqnarray}{c}
P(\mathsf{M}\mathbf{x}|\mathbf{x})=P(\mathsf{M}'\mathbf{x}'|\mathbf{x}'=\left(1,0,\ldots,0\right)^T),
\end{IEEEeqnarray}
Since $\Phi$ is invertible, then $\mathsf{M}'$ is also distributed according to Haar measure and 
 the RHS of above is derived in Theorem~\ref{th:pdfFirstColumn}. This proves that $P(\mathsf{M}\mathbf{x}|\mathbf{x})$ does not depend on $\vc{x}$.
\end{proof}

\subsection{Random Walk on Groups}
This section is mainly from \cite{Breuillard04}. A random walk on a group $(G,\cdot)$ is
\begin{IEEEeqnarray}{rCl}
S_K=X_K\cdot X_{k-1} \cdots X_1, \quad K = 1, 2, \ldots,
\end{IEEEeqnarray}
where $X_i\widesim{\text{i.i.d}}\mu(G)$.
If $X$ and $y$ are random variables on $G$ with PDF $\mu$ and $\nu$ respectively, then PDF of $X \cdot Y$ is $\mu \convolution \nu$, where $\convolution$ denotes the convolution. Hence, the PDF of $S_K$ is the $K$-th convolution power of $\mu$, denoted by $\mu^{*K}$.

The following theorem is proved by Kawada and It\^{o} for compact metric groups \cite{KawadaIto40}.  A more general version is proved by Stromberg  in \cite[Thm.~3.3.5]{Stromberg60} for Hausdorff groups, where the aperiodic condition is replaced with the normally aperiodic condition.

\begin{theorem}[Kawada-It\^{o} and Stromberg] \label{th:kawadaItoStromberg} 
Let $G$ be a compact Hausdorff groups and $H$ the  smallest closed subgroup of $G$ which contains $\mathcal{S}(\mu)$. Then,  $\lim \limits_{K \rightarrow \infty} \mu^{\convolution K} $ exists if  and only if  $\mu$  is a normally aperiodic probability measure on subgroup $H$. Moreover, if  this  limit exists, then it  is the Haar measure on $H$.
\end{theorem}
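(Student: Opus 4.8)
The plan is to prove the theorem by Fourier analysis on the compact group, reducing the convergence of $\mu^{\convolution K}$ to the behaviour of the powers of the matrix-valued Fourier coefficients. First I would reduce to the case $H=G$: since $\mathcal S(\mu)\subseteq H$ and $H$ is a closed, hence compact, subgroup, every convolution power $\mu^{\convolution K}$ is supported on $H$, so we may replace $G$ by $H$ and assume that $\mathcal S(\mu)$ generates a dense subgroup. By the Peter--Weyl theorem the matrix coefficients of the irreducible unitary representations of $G$, together with the constants, span a dense subalgebra of $C(G)$; consequently weak convergence of a sequence of probability measures is equivalent to convergence of all the Fourier coefficients $\hat\nu(\pi)=\int_G\pi(g)\,d\nu(g)$, one finite matrix for each irreducible $\pi$. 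Since convolution becomes multiplication of Fourier coefficients, writing $A_\pi:=\hat\mu(\pi)$ we have $\widehat{\mu^{\convolution K}}(\pi)=A_\pi^{K}$, and the Haar measure $m$ on $G$ is characterised by $\hat m(\pi_0)=1$ for the trivial representation $\pi_0$ and $\hat m(\pi)=0$ for every nontrivial $\pi$. Thus the problem becomes: for which $\mu$ does $A_\pi^{K}$ converge for every $\pi$, and when does it converge to $0$ for every nontrivial $\pi$.

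Next I would analyse the spectrum of $A_\pi=\int_G\pi(g)\,d\mu(g)$. As an average of unitaries it is a contraction, $\|A_\pi\|\le 1$, so its eigenvalues lie in the closed unit disk and $A_\pi^{K}\to 0$ as soon as the spectral radius is strictly less than $1$, i.e.\ as soon as $A_\pi$ has no eigenvalue of modulus one. The key rigidity lemma is that a modulus-one eigenvalue is very restrictive: if $A_\pi v=\lambda v$ with $\|v\|=1$ and $|\lambda|=1$, then equality in the triangle and Cauchy--Schwarz inequalities applied to $\langle A_\pi v,v\rangle=\int\langle\pi(g)v,v\rangle\,d\mu$ forces $\pi(g)v=\lambda v$ for $\mu$-almost every $g$, hence for every $g\in\mathcal S(\mu)$. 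From this I would deduce that the line $\mathbb Cv$ is invariant under $\pi(g)$ for every $g$ in the dense group generated by $\mathcal S(\mu)$, and therefore, by continuity, under all of $\pi(G)$. Irreducibility then collapses $\pi$ to a one-dimensional representation, i.e.\ a continuous character $\chi\colon G\to\mathbb T$ with $\chi(g)=\lambda$ on $\mathcal S(\mu)$.

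This character is exactly what ties convergence to \emph{normal} aperiodicity. If $\pi$ is nontrivial and $\lambda=1$, the fixed vector $v$ contradicts irreducibility once $H=G$; so any modulus-one eigenvalue of a nontrivial $\pi$ has $\lambda\neq1$, and then $N=\ker\chi$ is a proper closed subgroup, normal because it is the kernel of a homomorphism into the abelian group $\mathbb T$, with $\mathcal S(\mu)\subseteq\chi^{-1}(\lambda)$, a coset of $N$. Hence $\mu$ is normally periodic. Conversely, if $\mu$ is normally aperiodic, no nontrivial $\pi$ can have a modulus-one eigenvalue, so $A_\pi^{K}\to0$ for all nontrivial $\pi$ while $A_{\pi_0}^{K}=1$; matching the Fourier coefficients of $m$ and invoking Peter--Weyl gives $\mu^{\convolution K}\to m$ weakly, the Haar measure of Theorem~\ref{th:existenceHaar}. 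For the remaining implication I would argue at the level of the quotient: if $\mathcal S(\mu)$ lies in a coset $g_0N$ of a proper closed normal subgroup $N$, then pushing $\mu^{\convolution K}$ forward to the nontrivial compact group $G/N$ yields the point mass at $\bar g_0^{\,K}$, and in a compact group $\bar g_0^{\,K}$ converges only when $\bar g_0=e$, which would force $\mathcal S(\mu)\subseteq N$ and contradict $H=G$; hence the limit fails to exist. Together these give the stated equivalence and identify the limit as Haar measure.

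The main obstacle I expect is the rigidity lemma, specifically the passage from ``$\pi(g)v=\lambda v$ for $\mu$-a.e.\ $g$'' to a genuine continuous character of the whole group, and in particular arranging that the resulting subgroup be \emph{normal}. This is precisely the point where the general compact Hausdorff setting, as opposed to the metric case of Kawada--It\^{o}, requires Stromberg's normal-aperiodicity hypothesis, and it is what forces the use of the full strength of Peter--Weyl together with the irreducibility argument rather than an elementary equidistribution estimate.
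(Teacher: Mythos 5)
Your proof is essentially correct, but note that the paper itself does not prove this theorem at all: its ``proof'' is the single line ``See \cite[Thm.~3.3.5]{Stromberg60}'' --- the result is quoted as background and used as a black box (via Lemma~\ref{lem:haarMeasure}). So your argument is necessarily a different route; in fact it is a reconstruction of the classical Kawada--It\^{o} Fourier-analytic proof, upgraded to deliver Stromberg's statement. What you do differently: reduce to $H=G$, turn convolution into matrix powers $A_\pi^K$ of the Fourier coefficients via Peter--Weyl, and use the equality case of the triangle and Cauchy--Schwarz inequalities to show that a unimodular eigenvalue of $A_\pi$ forces $\pi$ to collapse to a character $\chi$ constant on $\mathcal S(\mu)$; since kernels of characters are automatically \emph{normal} closed subgroups, the only obstruction to $A_\pi^K\to 0$ is precisely a coset of a proper closed normal subgroup. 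This is why your route proves the stronger Stromberg form (normal aperiodicity suffices) and not merely Kawada--It\^{o}'s, and it is valid in the stated compact Hausdorff generality because Peter--Weyl needs no metrizability. What each approach buys: the paper's citation buys brevity and defers the genuinely technical points to Stromberg, whose own proof proceeds by different (measure-theoretic/semigroup) means; your proof buys a self-contained argument that makes the role of normality transparent. Two steps deserve more care in a full write-up: (i) the equivalence ``all Fourier coefficients converge $\Leftrightarrow$ weak-$*$ convergence'' needs the uniform mass bound plus Stone--Weierstrass density of matrix coefficients and Riesz--Markov; (ii) in the converse direction you must pass from weak-$*$ convergence of $\delta_{\bar g_0^K}$ in $G/N$ to convergence of the group elements $\bar g_0^K$ themselves (the Dirac measures form a weak-$*$ closed set since $x\mapsto\delta_x$ is a homeomorphism onto its image for compact Hausdorff $G/N$), after which the shifted-sequence identity $\bar g_0 x=x$ gives $\bar g_0=e$. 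Finally, your closing remark inverts the logic: normal aperiodicity is not something the Hausdorff setting ``requires''; it is a \emph{weaker} hypothesis than aperiodicity (hence a stronger theorem), and it is exactly what the character-kernel argument yields for free.
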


\begin{proof}
See \cite[Thm.~3.3.5]{Stromberg60}.
\end{proof}


\section{Continuity of Mutual Information}

\begin{lemma}
The mutual information of the fading channel \eqref{def:fadingChannel} and the SSFM channel defined in Section \ref{sec:ssfmDef} with $K$ segments is a continuous function of $K$ at $K\ra\infty$. 
\label{lemm:continuity}
\end{lemma}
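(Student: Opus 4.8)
The plan is to reduce the statement to the continuity of differential entropy under total-variation convergence, which is available from \cite[Theorem~1]{GhourchianGohariAmini17} for the class of absolutely continuous inputs used throughout the paper. Writing $I(\vc X;\vc Y_K)=h(\vc Y_K)-h(\vc Y_K\,|\,\vc X)$, it suffices to show that each term converges, as $K\ra\infty$, to its value for the limiting channel $\vc Y_\infty=\mathsf M_\infty\vc X+\vc Z$, where $\mathsf M_\infty$ and $\vc Z$ are the limits identified in Lemma~\ref{lem:asymDistK} for the fading channel and in Lemma~\ref{lem:mainThSSFM-ex} for the SSFM channel. First I would fix an arbitrary absolutely continuous input $\vc X$ with bounded PDF and finite second moment, and use that the pair $(\mathsf M_K,\vc Z_K)$ is independent of $\vc X$ for every $K$.

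The first step is convergence in distribution of the output. By Lemma~\ref{lem:asymDistK}, $\mathsf M_K\overset{d}{\ra}\mathsf M_\infty$ (and analogously for the SSFM channel through Lemma~\ref{lem:mainThSSFM-ex}), while $\vc Z_K\overset{d}{\ra}\vc Z$. Since the multiplicative and additive noise are independent of the fixed input, $\vc Y_K=\mathsf M_K\vc X+\vc Z_K\overset{d}{\ra}\vc Y_\infty$, and likewise the conditional law of $\vc Y_K$ given $\vc X=\vc x$ converges weakly for every $\vc x$. The second step is to upgrade this to total variation, where the additive Gaussian noise is the key device. Conditioned on $\vc X=\vc x$, the density $p_{\vc Y_K|\vc X}(\vc y|\vc x)=\mathbb E_{\mathsf M_K}[\phi(\vc y-\mathsf M_K\vc x)]$ is a Gaussian mixture, with $\phi$ the $\mathcal N_{\mathbb C}(0,\eta\sigma^2\mathcal L\,\mathrm I_n)$ density from Lemma~\ref{lem:asymFadingK}; these densities are uniformly bounded by $\phi(0)$ and converge pointwise to the limiting mixture by weak convergence of $\mathsf M_K$. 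Scheffé's lemma then yields $L^1$, i.e.\ total-variation, convergence of the conditional laws, and integrating the bounded mixture densities over $\vc X$ gives the same for the unconditional law.

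The third step invokes entropy continuity. With uniformly bounded densities and a uniform finite-moment bound, which holds because the conditional second moment given $\vc X=\vc x$ equals $\mathbb E\|\mathsf M_K\vc x\|^2+n\eta\sigma^2\mathcal L$ and $\|\mathsf M_K\|$ is bounded in $K$, Theorem~1 of \cite{GhourchianGohariAmini17} gives $h(\vc Y_K)\ra h(\vc Y_\infty)$ and $h(\vc Y_K|\vc X=\vc x)\ra h(\vc Y_\infty|\vc X=\vc x)$ for each $\vc x$. To pass to $h(\vc Y_K|\vc X)=\mathbb E_{\vc X}[h(\vc Y_K|\vc X=\vc x)]$ I would dominate each conditional entropy from below by $h(\vc Z)$, since adding independent noise does not decrease entropy, and from above by the Gaussian maximum-entropy bound $\tfrac12\log_2$ of the conditional second moment, which grows only like $\log_2\norm{\vc x}$; both envelopes are integrable against $P_{\vc X}$ for absolutely continuous inputs with finite moments, so dominated convergence exchanges the limit and the expectation. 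Combining the three limits yields $\lim_{K\ra\infty}I(\vc X;\vc Y_K)=I(\vc X;\vc Y_\infty)$, which is the asserted continuity.

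The hard part will be the upgrade from weak to total-variation convergence together with the uniform control required to interchange the limit with the entropy, since differential entropy is not weakly continuous in general. The Gaussian smoothing that makes the output densities uniformly bounded, combined with the uniform second-moment bound, is exactly what places the problem inside the hypotheses of \cite{GhourchianGohariAmini17}; verifying the uniform integrability of the conditional entropies over the input, and checking that the $O_p(K^{-1/2})$ remainder in the expansion of $\mathsf M_K$ does not spoil the moment bound, is where the real work lies.
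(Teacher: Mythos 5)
Your argument is sound for the fading channel \eqref{def:fadingChannel}, and there it takes a genuinely different route from the paper: you upgrade weak convergence of the output law to total-variation convergence via Scheff\'e's lemma applied to the Gaussian-mixture conditional densities, and then invoke the total-variation continuity of differential entropy from \cite[Thm.~1]{GhourchianGohariAmini17}, whereas the paper applies dominated convergence directly to the entropy integrals using a uniform-in-$K$ Gaussian-tail envelope on the conditional density. For that channel your route is legitimate modulo routine points: you need joint convergence of $(\mathsf{M}_K,\vc{Z}_K)$ (they share the same $\boldsymbol{\theta}_i$'s), and in the lossy case $\vc{Z}_K$ is only conditionally Gaussian given the phases, so the mixture has $\boldsymbol{\theta}$-dependent covariances whose smallest eigenvalue must be bounded below uniformly in $K$ before the bound $\phi(0)$ and Scheff\'e's lemma can be used.

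The genuine gap is the SSFM channel, which the lemma also covers and which is the case actually invoked at the end of the proof of Theorem~\ref{th:fadingHighPowerdG1}. Your proof rests on the assumptions that ``the pair $(\mathsf M_K,\vc Z_K)$ is independent of $\vc X$ for every $K$'' and that the additive noise is exactly $\mathcal{N}_{\mathbb{C}}(0,\eta\sigma^2\mathcal{L}\mathrm{I}_n)$; both are true only for the fading model. In the SSFM channel of Section~\ref{sec:ssfmDef}, the multiplicative matrix is built from the nonlinear phases $\Phi_{i,\ell}$ of \eqref{eq:Phi-ij}, which depend on the propagating signal (hence on $\vc X$) and on the Wiener noise itself; likewise the additive term is a sum of noises rotated by these signal- and noise-dependent phases, and is not Gaussian at finite $K$. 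Consequently the representation $p_{\vc Y_K|\vc X}(\vc y|\vc x)=\mathbb{E}_{\mathsf M_K}\left[\phi(\vc y-\mathsf M_K\vc x)\right]$ with a fixed Gaussian $\phi$ --- the device on which your uniform bound by $\phi(0)$, your pointwise convergence, and Scheff\'e's lemma all rest --- is unavailable for SSFM. (Note also that Lemma~\ref{lem:mainThSSFM-ex}, which you cite to identify the SSFM limit, holds only for input distributions escaping to infinity with $\snr$, not for a fixed input law as $K\ra\infty$.) What is needed instead, and what the paper supplies, is a segment-by-segment argument: each SSFM segment is a zero-dispersion-type channel with noise variance $\sigma^2\mathcal{L}/K$ whose conditional density admits a Gaussian envelope of variance $O(1/K)$, and convolving $K$ such envelopes yields $p(\vc y_K|\vc x)\leq c' e^{-c\norm{\vc y_K-\vc x}^2}$ with constants uniform in $K$. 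With that envelope you could either finish as the paper does (dominated convergence applied to the entropy integrals) or re-enter your own framework, since the envelope gives exactly the uniform boundedness and uniform moments required by \cite[Thm.~1]{GhourchianGohariAmini17}; without it, the SSFM half of the lemma is unproved.
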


\begin{proof}

Since $K$ is an integer, mutual information  $I(\vc X, \vc Y_K) $   is not a continuous function of $K$ when $K$ is finite. A small change in $\snr$ can change $K = \lfloor \sqrt[\delta]{\snr} \rfloor$ by one, and  the model by one segment.
However, 
mutual information is a continuous function at  $K\ra\infty$.

The proof is similar to the proof of the continuity of the output and conditional entropy in the zero-dispersion channel \cite[App.~I]{fahs2017capacity}, using the fact that the noise PDF, thus the output PDF induced by noise, vanishes exponentially. We sketch the steps for the SSFM channel.

The conditional PDF of one segment of SSFM is upper bounded
\begin{IEEEeqnarray*}{rCl}
p(\vc v_{2} |\vc x)\leq c'_1\sqrt{K} e^{-Kc_1\norm{\vc v_2- \vc x}^2},
\end{IEEEeqnarray*}
where $0<c_1,c'_1<\infty$ do not depend on $K$.
The conditional PDF of $2$ segments satisfies 
\begin{IEEEeqnarray*}{rCl}
p(\vc v_{3} |\vc x) &=& \int  p(\vc v_{3} |\vc v_2) p(\vc v_{2} |\vc x) d \vc v_2 
\\
&\leq& c'_2\sqrt{\frac{K}{2}} e^{-c_2\frac{K}{2}\norm{\vc v_3- \vc x}^2}.
\end{IEEEeqnarray*}
The conditional PDF of $K$ segments is upper bounded as
\begin{IEEEeqnarray}{rCl}
p(\vc y_{K} |\vc x)&\eqdef&p(\vc v_{K+1} |\vc x)
\nonumber\\&\leq& c'_k e^{-c_k\norm{\vc y_K - \vc x}^2}.
\label{eq:pdf-upper}
\end{IEEEeqnarray}
Alternatively, the exponential upper bound on the PDF \eqref{eq:pdf-upper} can be obtained from the PDF of the norm  which is known at the output \eqref{eq:conditionalFadingOutputNorm}.

We have
\begin{IEEEeqnarray*}{rCl}
\lim_{K\ra \infty} h\bigl(\vc Y_{K} | \vc X=\vc x \bigr) &=& -\lim_{K\ra\infty}  \int p(\vc y_K | \vc x)\log  p(\vc y_K | \vc x)\der \vc y_K
\nonumber\\
&\overset{(a)}{=}& -\int \lim_{K\ra\infty}   p(\vc y_K|\vc x)\log  p(\vc y_K|\vc x) \der \vc y_K
\\&=&h\bigl(\lim_{K\ra \infty} \vc Y_{K} | \vc X=\vc x \bigr),
\end{IEEEeqnarray*}
uniformly over input. Step $(a)$ is obtained from applying the dominated convergence theorem using \eqref{eq:pdf-upper}

The continuity of  $h\bigl(\vc Y_{K} \bigr)$ is shown similarly. 

\end{proof}


\section*{Acknowledgement}
This work has received funding from the European Research Council (ERC) under the European Union's 
Horizon 2020 research and innovation programme, Grant Agreement No. 805195.
The authors are greatly thankful to Emmanuel Breuillard for sharing his helpful ideas.

\bibliographystyle{IEEEtran}
\bibliography{refsOpticalFiber}

\begin{thebibliography}{10}
\providecommand{\url}[1]{#1}
\csname url@samestyle\endcsname
\providecommand{\newblock}{\relax}
\providecommand{\bibinfo}[2]{#2}
\providecommand{\BIBentrySTDinterwordspacing}{\spaceskip=0pt\relax}
\providecommand{\BIBentryALTinterwordstretchfactor}{4}
\providecommand{\BIBentryALTinterwordspacing}{\spaceskip=\fontdimen2\font plus
\BIBentryALTinterwordstretchfactor\fontdimen3\font minus
  \fontdimen4\font\relax}
\providecommand{\BIBforeignlanguage}[2]{{%
\expandafter\ifx\csname l@#1\endcsname\relax
\typeout{** WARNING: IEEEtran.bst: No hyphenation pattern has been}%
\typeout{** loaded for the language `#1'. Using the pattern for}%
\typeout{** the default language instead.}%
\else
\language=\csname l@#1\endcsname
\fi
#2}}
\providecommand{\BIBdecl}{\relax}
\BIBdecl

\bibitem{Shan49}
C.~E. Shannon, ``The mathematical theory of communication,'' \emph{Bell Syst.
  Tech. J.}, vol.~27, pp. 379--423, Jul. 1948.

\bibitem{VerduHan94}
S.~Verd\`u and T.~S. Han, ``A general formula for channel capacity,''
  \emph{IEEE Trans.\ Inf.\ Theory}, vol.~40, no.~4, pp. 1147--1157, 1994.

\bibitem{yousefi2011opc}
M.~I. Yousefi and F.~R. Kschischang, ``On the per-sample capacity of
  nondispersive optical fibers,'' \emph{IEEE Trans.\ Inf.\ Theory}, vol.~57,
  no.~11, pp. 7522--7541, Nov. 2011.

\bibitem{SplettKurzke93}
A.~Splett, C.~Kurzke, and K.~Petermann, ``Ultimate transmission capacity of
  amplified optical fiber communication systems taking into account fiber
  nonlinearities,'' in \emph{European Conf.\ Opt.\ Commun.}, Sep. 1993, pp.
  1--3.

\bibitem{mitra2000nli}
P.~P. Mitra and J.~B. Stark, ``Nonlinear limits to the information capacity of
  optical fiber communications,'' \emph{Lett. Nature}, vol. 411, pp.
  1027--1030, Jun. 2001.

\bibitem{essiambre2010clo}
R.~J. Essiambre, G.~Kramer, P.~J. Winzer, G.~J. Foschini, and B.~Goebel,
  ``Capacity limits of optical fiber networks,'' \emph{IEEE J. Lightw.\
  Technol.}, vol.~28, no.~4, pp. 662--701, Feb. 2010.

\bibitem{SecondinitFores13}
M.~{Secondini}, E.~{Forestieri}, and G.~{Prati}, ``Achievable information rate
  in nonlinear wdm fiber-optic systems with arbitrary modulation formats and
  dispersion maps,'' \emph{IEEE J. Lightw.\ Technol.}, vol.~31, no.~23, pp.
  3839--3852, Dec. 2013.

\bibitem{yousefi2015cwit}
M.~I. Yousefi, G.~Kramer, and F.~R. Kschischang, ``Upper bound on the capacity
  of the nonlinear {S}chr\"odinger channel,'' in \emph{Canadian Workshop on
  Inf.\ Theory}, St. John's, Newfoundland, Canada, Jul. 2015, pp. 1--5.

\bibitem{kramer2015upper}
G.~Kramer, M.~I. Yousefi, and F.~Kschischang, ``Upper bound on the capacity of
  a cascade of nonlinear and noisy channels,'' in \emph{IEEE Inf.\ Theory
  Workshop}, Jerusalem, Israel, Apr. 2015, pp. 1--4.

\bibitem{turitsyn2003ico}
K.~S. Turitsyn, S.~A. Derevyanko, I.~V. Yurkevich, and S.~K. Turitsyn,
  ``Information capacity of optical fiber channels with zero average
  dispersion,'' \emph{Phys.\ Rev.\ Lett.}, vol.~91, no.~20, pp.
  203\,901--203\,921, Nov. 2003.

\bibitem{ReznichenkoTerekhov20}
A.~V. Reznichenko and I.~S. Terekhov, ``Path integral approach to nondispersive
  optical fiber communication channel,'' \emph{Entropy}, vol.~22, no.~6, pp.
  607--613, May 2020.

\bibitem{Keykhosravi19}
K.~Keykhosravi, G.~Durisi, and E.~Agrell, ``Accuracy assessment of
  nondispersive optical perturbative models through capacity analysis,''
  \emph{Entropy}, vol.~21, no.~8, pp. 760--768, Aug. 2019.

\bibitem{yousefi2016cap}
M.~I. Yousefi, ``The asymptotic capacity of the optical fiber,''
  \emph{\emph{ar{X}iv:1610.06458}}, pp. 1--12, Nov. 2016.

\bibitem{kamran2015bound}
K.~Keykhosravi, E.~Agrell, and G.~Durisi, ``Rates achievable on a fiber-optical
  {S}plit-{S}tep {F}ourier channel,'' \emph{\emph{ar{X}iv:1512.01843}}, pp.
  1--28, Oct. 2016.

\bibitem{GhourchianGohariAmini17}
H.~{Ghourchian}, A.~{Gohari}, and A.~{Amini}, ``Existence and continuity of
  differential entropy for a class of distributions,'' \emph{IEEE Commun.
  Lett.}, vol.~21, no.~7, pp. 1469--1472, Jul. 2017.

\bibitem{moser2004dbb}
S.~M. Moser, ``Duality-based bounds on channel capacity,'' Ph.D. dissertation,
  ETH Zurich, Switzerland, Jan. 2005.

\bibitem{kramer2018autocorrelation}
G.~Kramer, ``Autocorrelation function for dispersion-free fiber channels with
  distributed amplification,'' \emph{ieeeit}, vol.~64, no.~7, pp. 5131--5155,
  2018.

\bibitem{Lapidoth02}
A.~{Lapidoth}, ``On phase noise channels at high {SNR},'' in \emph{IEEE Inf.\
  Theory Workshop}, Oct. 2002, pp. 1--4.

\bibitem{Shev2018}
N.~A. Shevchenko, S.~A. Derevyanko, J.~E. Prilepsky, A.~Alvarado, P.~Bayvel,
  and S.~K. Turitsyn, ``Capacity lower bounds of the noncentral chi-channel
  with applications to soliton amplitude modulation,'' \emph{IEEE Trans.\
  Commun.}, vol.~66, no.~7, pp. 2978--2993, 2018.

\bibitem{yousefi2012nft1}
M.~I. Yousefi and F.~R. Kschischang, ``Information transmission using the
  nonlinear {F}ourier transform, {P}art {I, II, III},'' \emph{IEEE Trans.\
  Inf.\ Theory}, vol.~60, no.~7, pp. 4312--4369, Jul. 2014.

\bibitem{Quint14}
J.~F. Quint, ``An introduction to random walks on groups,'' \emph{School on
  Inf. and Randomness}, pp. 1--26, Dec. 2014.

\bibitem{Breuillard04}
E.~Breuillard, ``Random walks on {L}ie groups,'' \emph{Lecture Notes}, pp.
  1--45, Mar. 2004.

\bibitem{Stromberg60}
K.~Stromberg, ``Probabilities on a compact group,'' \emph{Trans. Am. Math.
  Soc.}, vol.~94, no.~2, pp. 295--309, Feb. 1960.

\bibitem{Borevich81}
Z.~Borevich and S.~Krupetskii, ``Subgroups of the unitary group that contain
  the group of diagonal matrices,'' \emph{J. Sov. Math.}, vol.~17, no.~4, pp.
  1951--1959, Nov. 1981.

\bibitem{Meckes04}
E.~S. Meckes, \emph{The random matrix theory of the classical compact
  groups}.\hskip 1em plus 0.5em minus 0.4em\relax Cambridge, UK: Cambridge
  Univ. Press, 2019.

\bibitem{FangWangng18}
K.~T. Fang, S.~Kotz, and K.~Wangng, \emph{Symmetric multivariate and related
  distributions}.\hskip 1em plus 0.5em minus 0.4em\relax Boca Raton, FL, USA:
  Chapman and Hall, Dec. 2017.

\bibitem{KawadaIto40}
Y.~Kawada and K.~It\^o, ``On the probability distribution on a compact group,''
  \emph{I. Proc. Phys.-Math. Soc. Japan}, vol.~22, no.~3, pp. 977--998, Apr.
  1940.

\bibitem{fahs2017capacity}
J.~Fahs, A.~Tchamkerten, and M.~I. Yousefi, ``Capacity-achieving input
  distributions in nondispersive optical fibers,'' \emph{arXiv preprint
  arXiv:1704.04904}, 2017.

\end{thebibliography}

\end{document}